\newtheorem{remark}{Remark}
\newtheorem{theorem}{Theorem}
\newtheorem{lemma}{Lemma}
\newtheorem{corollary}{Corollary}
\newtheorem{proposition}{Proposition}
\Crefname{theorem}{Theorem}{Theorems}
\theoremstyle{remark}
\begin{document}

\title{Spectral analysis of three-state quantum walks with general coin matrices}
\author{Chusei Kiumi}
\email{c.kiumi.qiqb@osaka-u.ac.jp}
\affiliation{Center for Quantum Information and Quantum Biology, Osaka University, 1-2 Machikaneyama, Toyonaka 560-0043, Japan}
\author{Jir\^o Akahori}
\email{akahori@se.ritsumei.ac.jp}
\affiliation{College of Science and Engineering Ritsumeikan University 1-1-1 Noji-higashi, Kusatsu, 525-8577, Japan}
\author{Takuya Watanabe}
\email{t-watana@se.ritsumei.ac.jp}
\affiliation{College of Science and Engineering Ritsumeikan University 1-1-1 Noji-higashi, Kusatsu, 525-8577, Japan}
\author{Norio Konno}
\email{n-konno@fc.ritsumei.ac.jp}
\affiliation{College of Science and Engineering Ritsumeikan University 1-1-1 Noji-higashi, Kusatsu, 525-8577, Japan}

\begin{abstract}
    Mathematical analysis of the spectral properties of the time evolution operator in quantum walks is essential for understanding key dynamical behaviors such as localization and long-term evolution. The inhomogeneous three-state case, in particular, poses substantial analytical challenges due to its higher internal degrees of freedom and the absence of translational invariance. We develop a general framework for the spectral analysis of three-state quantum walks on the one-dimensional lattice with arbitrary time evolution operators. Our approach is based on a transfer matrix formulation that reduces the infinite-dimensional eigenvalue problem to a tractable system of two-dimensional recursions, enabling exact characterization of eigenstates. This framework applies broadly to space-inhomogeneous models, including those with finite defects and two-phase structures. We rigorously derive necessary and sufficient conditions for the existence of point spectrum, along with a complete description of the corresponding eigenvalues and eigenstates, which are known to underlie quantum localization phenomena. Furthermore, we give a complete spectral decomposition---discrete spectrum, flat-band eigenvalues (of infinite multiplicity), and absolutely continuous spectrum---with explicit characterization of each component. Using this method, we perform exact numerical analyses of the Fourier walk with spatial inhomogeneity, revealing the emergence of localization despite its delocalized nature in the homogeneous case. Our results provide mathematical tools and physical insights into the structure of quantum walks, offering a systematic path for identifying and characterizing localized quantum states in complex quantum systems.
\end{abstract}

\maketitle

\section{Introduction}
Quantum walks (QWs) are fundamental frameworks for studying quantum dynamics. Despite their seemingly simple formulation, QWs exhibit a wide range of intricate quantum features, making them strikingly different from their classical counterpart, the random walk. These features enable the systematic design and control of quantum systems and provide a powerful framework for exploring various quantum phenomena and applications. In particular, these applications include the study of topological phases \cite{Kitagawa, Obuse2011}, the experimental realization of physical phenomena in controllable quantum systems \cite{Zhringer2010, Qiang2016, Xiao2017, Tang2018}, quantum algorithm design and universal quantum computation \cite{QFF, QSVT, Unified,universal}, and quantum transport phenomena including perfect state transfer and environment-assisted transfer in photosynthetic systems \cite{PST1, PST2, photosynthetic}.

Our focus is on discrete-time QWs on a one-dimensional integer lattice \cite{one-dim1,one-dim2}. In this framework, the interplay between the coin operator and the shift operator governs the system’s evolution. Also, the walker’s position spans the entire set of integers, resulting in an infinite-dimensional basis for the position space. This formulation is essential for modeling QWs on lattices where boundaries are either absent or effectively negligible, allowing for the study of bulk properties and phenomena such as localization \cite{Inui2005-fr,Kiumi2021-yg,Kiumi2022}.

A vital aspect of studying QWs is analyzing the spectral properties of the time evolution operator. This analysis provides insights into the behavior and features of the walk, thereby enhancing our understanding of the underlying dynamics \cite{Segawa2016-qu,smt,Kiumi2022}. A distinctive feature of infinite-dimensional systems, unlike their finite-dimensional counterparts, is the coexistence of continuous spectra and discrete eigenvalues, which reflect different physical behaviors. This structure allows for rigorous mathematical analysis \cite{one-dim1, smt}. Discrete eigenvalues in infinite systems typically correspond to localized states and are key to understanding topological phenomena in QWs. In topological phases, discrete eigenvalues within spectral gaps often indicate the presence of topologically protected bound states, which arise due to the bulk-edge correspondence \cite{Kitagawa, Obuse2011}. Unlike extended bulk states, these bound states remain robust under system variations as long as the topological invariant remains unchanged. Additionally, QWs can exhibit dynamically induced topological transitions, where spectral gaps close and reopen, signaling a shift in the topological phase. Thus, the study of discrete eigenvalues and continuous spectra in infinite QWs provides deep insights into topologically protected phenomena and phase transitions, revealing features akin to those in topological insulators and superconductors. A similar discussion arises in the context of quantum algorithm speedup \cite{QFF, QSVT, Unified,QSVT-kiumi}. In quantum algorithm design, the key objective is to determine how quickly the initial state evolves into the desired state. QWs used for algorithms are typically defined on finite-dimensional systems, and the speedup depends on the distinct eigenvalues and the spectral gap separating them from the rest of the spectrum \cite{Exponential1,Exponential2,Portugal-book,Szegedy}. A higher overlap between the initial and desired states with an eigenstate of a distinct eigenvalue, along with a large spectral gap, often enables faster transitions. This is crucial for achieving quantum speedup and is analogous to how strongly and quickly the QW localizes to the desired state.

Thus, in this paper, we focus on a mathematically explicit approach to analyze the (discrete) eigenvalues and eigenvectors of the time evolution operator. This rigorous analysis provides deeper insights into the spectral structure of the system and facilitates understanding of phenomena such as localization and the long-term behavior of QWs. It is well-known that the occurrence of localization is equivalent to the existence of eigenvalues of the time evolution operator. Moreover, a quantitative investigation of localization can be conducted by deriving the time-averaged limit distribution using the eigenvalues and eigenvectors \cite{Kiumi2022}. The transfer matrix method stands as a powerful tool for eigenvalue analysis and has been applied to several types of QW models by author \cite{Kiumi2021-yg, Kiumi2022-ts}, and even for non-unitary QWs \cite{Kiumi2022-nu}. Beyond identifying discrete eigenvalues, this paper also reveals the structure of the essential spectrum. This complete the spectral picture which is crucial for understanding the interplay between localization, delocalization, and spectral degeneracy. In particular, resolving the essential spectrum allows us to determine the presence or absence of spectral gaps, which govern the robustness of localized modes. This level of spectral resolution makes our analysis substantially more comprehensive than previous studies.

In this article, we consider the spectral analysis on the space-inhomogeneous three-state QWs on a one-dimensional integer lattice. Unlike the standard two-state QW, which has been extensively analyzed, the three-state QW exhibits fundamentally different spectral properties, leading to richer and more complex dynamics \cite{Inui2005-ry, Ko2016}. This distinction makes three-state QWs particularly intriguing and has led to intensive study in various articles across different contexts \cite{Inui2005-fr,Inui2005-ry,Falkner2014-bt,Stefanak2014-jh,Li2015-il,Machida2015-oa,Wang2015-cl,Xu2016-dy,Kawai2017-fn,Rajendran2018-ss,Endo2019-ie,Saha2021-of,Wang2015-oy,Falcao2021-pb, Kiumi2021-mg,yamagami}. We also specifically consider the two-phase model with a finite number of defects, which encompasses important and typical space-inhomogeneous models such as the one-defect model and the two-phase model \cite{Endo2020-or, Kiumi2021-yg}. These models have been studied extensively as they serve as crucial examples in understanding the impact of inhomogeneity on QWs and provide a rich landscape for investigating localization phenomena.  Previous studies \cite{Kiumi2022-ts} have conducted eigenvalue analysis for three-state QWs with a restricted class of coin matrices, including Grover matrices. In contrast, our work significantly broadens the scope by establishing a general framework that applies to arbitrary unitary coin matrices, demonstrating that precise eigenvalue characterization is achievable well beyond previously studied cases. In contrast to homogeneous two-state walks, the three-state model generically exhibits flat-band eigenvalues of infinite multiplicity, which form part of the essential spectrum and correspond to compactly supported modes originating from algebraic symmetries of the tail coins. Our transfer-matrix framework makes this structure explicit: by analyzing the asymptotic transfer matrices, we determine the absolutely continuous part of the spectrum and distinguish it from flat-band (infinite multiplicity) contributions and defect-induced (finite multiplicity) point spectra.

As the main theorem, we establish a necessary and sufficient condition for the eigenvalue equation of the time evolution operator in three-state QWs with general coin matrices, providing an explicit characterization of both the eigenvalues and the corresponding eigenvectors. In addition, we characterize the essential spectrum to obtain a full description of the spectral structure. However, calculating explicit eigenvalues analytically for a general coin matrix remains challenging due to the complexity of solving the associated algebraic equations. Therefore, we rely on numerical calculations to demonstrate the practicality of our method. For our numerical analysis, we focus on one-defect and two-phase Fourier walks. The Fourier walk has been extensively studied in the context of space-homogeneous models \cite{Saito2019,Narimatsu2021-1,Narimatsu2021-2}, where the coin matrices are identical at all positions. Notably, unlike the Grover walk, it has been proven that the Fourier walk does not exhibit localization in the homogeneous case \cite{Narimatsu2021-2}. However, our numerical results show that the Fourier walk exhibits localization in both the one-defect and two-phase models.

The remainder of this paper is organized as follows. In \cref{sec:2}, we define our three-state QWs with a self-loop on the integer lattice, incorporating both the coin and shift operators. After we explicitly define the concept of localization, we introduce the transfer matrix framework for a general coin matrix and establishes methods for addressing the eigenvalue problem. In \cref{sec:main-theorem}, we state our main results for the spectral analysis on two-phase QWs with finite defects. \cref{main theorem} provides a necessary and sufficient condition for solving the eigenvalue problem. Furthermore, \cref{theorem:spectrum-decomposition} establishes the complete spectral decomposition, providing a full classification of discrete, flat-band eigenvalue with infinite multiplicity, and absolutely continuous components of the spectrum. \cref{sec:4} focuses on the numerical analysis of the Fourier walk. We include visual representations of the numerical results for the eigenvalue equation and the probability distribution, illustrating the occurrence of localization and the existence of eigenvalues.

\section{Three state quantum walk and transfer matrix}
\label{sec:2}
\subsection{Definition of three-state quantum walk}
We define a three-state QW on the integer lattice $\mathbb{Z}$. Let us define the Hilbert space $\mathcal{H}$ as
\begin{equation*}
    \mathcal{H} =\ell ^{2} (\mathbb{Z} ;\mathbb{C}^{3} )=\left\{\Psi :\mathbb{Z}\rightarrow \mathbb{C}^{3} \ \middle| \ \sum _{x\in \mathbb{Z}} \| \Psi (x)\| _{\mathbb{C}^{3}}^{2} < \infty \right\} .
\end{equation*}
Here, $\mathbb{C}$ denotes the set of complex numbers. The quantum state $\Psi \in \mathcal{H}$ is represented as
\begin{equation*}
    \Psi (x)=\left[\begin{array}{ c }
            \Psi _{1} (x) \\
            \Psi _{2} (x) \\
            \Psi _{3} (x)
        \end{array}\right] .
\end{equation*}
Let $\{\mathsf{C}_{x}\}_{x\in \mathbb{Z} \cup \{\pm \infty \}}$ be a sequence of $3\times 3$ unitary matrices, which is given by:
\begin{equation*}
    \mathsf{C}_{x} =\left[\begin{array}{ c c c }
            a_{x}^{( 1,1)} & a_{x}^{( 1,2)} & a_{x}^{( 1,3)} \\
            a_{x}^{( 2,1)} & a_{x}^{( 2,2)} & a_{x}^{( 2,3)} \\
            a_{x}^{( 3,1)} & a_{x}^{( 3,2)} & a_{x}^{( 3,3)}
        \end{array}\right].
\end{equation*}
When either \(\left| a_{x}^{(1,3)} \right|,\ \left| a_{x}^{(2,2)} \right|,\ \left| a_{x}^{(3,1)} \right|  \) equals $1$, the coin matrix effectively reduces to a \( 2 \times 2 \) matrix. For simplicity in later discussions, we will ignore this case, assuming instead that \( \left| a_{x}^{(1,3)} \right|,\ \left| a_{x}^{(2,2)} \right|,\ \left| a_{x}^{(3,1)} \right| \neq 1 \) for all positions \( x \in \mathbb{Z} \). However, this assumption does not significantly alter the main arguments. The coin operator is then defined using the coin matrices as:
\begin{equation*}
    (C\Psi )(x)=\mathsf{C}_{x} \Psi (x).
\end{equation*}
Next, we define a shift operator $S$ that moves $\Psi _{1} (x)$ to the left and $\Psi _{3} (x)$ to the right.
\begin{equation*}
    \ (S\Psi )(x)=\left[\begin{array}{ c }
            \Psi _{1} (x+1) \\
            \Psi _{2} (x)   \\
            \Psi _{3} (x-1)
        \end{array}\right].
\end{equation*}
Finally, the time evolution of the QW is determined by the following unitary operator
\begin{equation*}
    U=SC.
\end{equation*}
For an initial state $\Psi _{0} \in \mathcal{H}\left(\Vert \Psi _{0}\Vert _{\mathcal{H}}^{2} =1\right)$, the probability of finding the walker at position $x$ at time $t \in \mathbb{Z}_{\geq 0}$ is defined as
\begin{equation*}
    \mu _{t}^{( \Psi _{0})} (x)=\left\Vert \left( U^{t} \Psi _{0}\right) (x)\right\Vert _{\mathbb{C}^{3}}^{2},
\end{equation*}
where $\mathbb{Z}_{\geq 0}$ is the set of non-negative integers.

The QW is governed by a time evolution operator $U$, and understanding its eigenvalues is essential for exploring phenomena such as localization, where the quantum walker tends to remain confined to a particular region of the lattice. The degree of localization of the walker is highly dependent on the initial state, and it is quantified by the limit distribution, which relies heavily on the overlap between the initial state and the eigenstates of $U$ \cite{Kiumi2022}. 

We say that the QW exhibits localization if there exists a position $x_{0} \in \mathbb{Z}$ and an initial state $\Psi _{0} \in \mathcal{H}$ satisfying $\lim \sup _{t\rightarrow \infty } \mu _{t}^{( \Psi _{0})}( x_{0})>0$. It is known that the QW exhibits localization if and only if there exists an eigenvalue of $U$, that is, there exists $\lambda \in [0,2\pi )$ and $\Psi \in \mathcal{H} \setminus \{\mathbf{0} \}$ such that
\begin{equation*}
    U\Psi =e^{i\lambda } \Psi.
\end{equation*}

Let \( \sigma_p(U) \) denote the set of eigenvalues of \( U \). The set of eigenvectors corresponding to \( e^{i\lambda} \in \sigma_p(U) \) is then denoted by \( \ker(U - e^{i\lambda}) \setminus \{\mathbf{0}\} \).

\subsection{Eigenvalue analysis\label{sec:3}}
In the study of QWs on a one-dimensional lattice, the transfer matrix is a crucial tool for analyzing the spectral properties of the system. By using the transfer matrix \( T_x(\lambda) \), we reduce the analysis of the infinite-dimensional Hilbert space of the QW to examining the properties of a manageable \( 2 \times 2 \) matrix, such as its determinant and trace. This reduction allows us to efficiently investigate the eigenvalues of the time evolution operator \( U \) and gain insights into the system’s behavior without directly addressing the complexities of the infinite-dimensional space.

This approach is particularly effective for studying localization, as it enables us to identify conditions for eigenvalue existence in a simplified manner. Moreover, the transfer matrix not only determines the existence of eigenvalues but also allows for the exact computation of both eigenvalues and eigenvectors, providing a constructive and comprehensive method for eigenvalue analysis.

\subsubsection{Reducing the dimensionality}
The eigenvalue equation for the QW is given by $U \Psi = e^{i \lambda} \Psi$, where $\Psi\in\mathcal{H}$ is the eigenvector and $e^{i \lambda}$ is the eigenvalue of $U$. To facilitate the analysis, this equation is transformed into a system of linear equations that describe the relationship between the components of the quantum state at different lattice positions. The key feature of this model is that the shift operator does not affect the second component of the quantum state, allowing us to simplify the problem by reducing the dimensionality to an equivalent two-state system.

The eigenvalue equation $U\Psi =e^{i\lambda } \Psi $ can be rewritten as the following system of equations:
\begin{align*}
     & e^{i\lambda } \Psi _{1} (x-1)=a_{x}^{( 1,1)} \Psi _{1} (x)+a_{x}^{( 1,2)} \Psi _{2} (x)+a_{x}^{( 1,3)} \Psi _{3} (x), \\
     & e^{i\lambda } \Psi _{2} (x)=a_{x}^{( 2,1)} \Psi _{1} (x)+a_{x}^{( 2,2)} \Psi _{2} (x)+a_{x}^{( 2,3)} \Psi _{3} (x),   \\
     & e^{i\lambda } \Psi _{3} (x+1)=a_{x}^{( 3,1)} \Psi _{1} (x)+a_{x}^{( 3,2)} \Psi _{2} (x)+a_{x}^{( 3,3)} \Psi _{3} (x).
\end{align*}
To simplify this system, we reformulate the equations by isolating the dependence of \( \Psi_2(x) \) on \( \Psi_1(\cdot) \) and \( \Psi_3(\cdot) \). The system now becomes the followings:
\begin{align}
     & e^{i\lambda } \Psi _{1} (x-1)=A_{x} (\lambda )\Psi _{1} (x)+B_{x} (\lambda )\Psi _{3} (x), \label{first}                  \\
     & e^{i\lambda} \Psi _{3} (x+1)=C_{x} (\lambda )\Psi _{1} (x)+D_{x} (\lambda )\Psi _{3} (x), \label{second}                  \\
     & \Psi _{2} (x)=\frac{a_{x}^{( 2,1)}\Psi _{1} (x)+a_{x}^{( 2,3)}\Psi _{3} (x)}{e^{i\lambda } -a_{x}^{( 2,2)}},\label{third}
\end{align}
where
\begin{align*}
     & A_{x}( \lambda ) =a_{x}^{( 1,1)} +\frac{a_{x}^{( 1,2)} a_{x}^{( 2,1)}}{e^{i\lambda } -a_{x}^{( 2,2)}} , \\
     & B_{x}( \lambda ) =a_{x}^{( 1,3)} +\frac{a_{x}^{( 1,2)} a_{x}^{( 2,3)}}{e^{i\lambda } -a_{x}^{( 2,2)}} , \\
     & C_{x}( \lambda ) =a_{x}^{( 3,1)} +\frac{a_{x}^{( 3,2)} a_{x}^{( 2,1)}}{e^{i\lambda } -a_{x}^{( 2,2)}} , \\
     & D_{x}( \lambda ) =a_{x}^{( 3,3)} +\frac{a_{x}^{( 3,2)} a_{x}^{( 2,3)}}{e^{i\lambda } -a_{x}^{( 2,2)}} . \\
\end{align*}
This reformulation highlights that \( \Psi_2(x) \) can be expressed directly in terms of \( \Psi_1(x) \) and \( \Psi_3(x) \), while \( \Psi_1(x-1) \) and \( \Psi_3(x+1) \) are determined by linear combinations of \( \Psi_1(x) \) and \( \Psi_3(x) \). Thus, \cref{third} is automatically satisfied, meaning that \( \Psi_2(x) \) is uniquely determined if \( \Psi_1(x) \) and \( \Psi_3(x) \) are known. Thus, we can focus solely on conditions \cref{first,second} for \( \Psi_1(x) \) and \( \Psi_3(x) \). Also, note that $\Psi_2(x)$ does not affect the square summability of $\Psi$ if \cref{third} is satisfied. Therefore, we can define a bijective map $\iota$ from $\left\{\Psi \in \mathcal{H} \mid \Psi\text{ satisfies \cref{third}} \right\}$ to $\ell^2(\mathbb{Z} ;\mathbb{C}^{2} )$ by
\begin{equation}\label{eq:iota}
    (\iota \Psi )(x)=\begin{bmatrix}
        \Psi _{1} (x-1) \\
        \Psi _{3} (x)
    \end{bmatrix} .
\end{equation}
Here, the inverse of $\iota $ is given as
\begin{equation*}
    \left( \iota ^{-1}\tilde{\Psi }\right) (x)=\left[\begin{array}{ c }
            \tilde{\Psi }_{1} (x+1)                                                       \\
            E_{x} (\lambda )\tilde{\Psi }_{1} (x+1)+F_{x} (\lambda )\tilde{\Psi }_{2} (x) \\
            \tilde{\Psi }_{2} (x)
        \end{array}\right]
\end{equation*}
for $\tilde{\Psi} \in \ell^2(\mathbb{Z} ;\mathbb{C}^{2} )$, $\tilde{\Psi}(x) = \begin{bmatrix} \tilde{\Psi}_1(x) \\ \tilde{\Psi}_2(x) \end{bmatrix}$. We choose this definition of $\iota$ taking the elements of $\Psi$ diagonally is to simplify the discussion by expressing the recurrence relation between positions $x$ and $x+1$ using only the coin parameters at position $x$ (the transfer matrix introduced in the next subsection becomes much simpler). Since we have this bijective map, we can equivalently convert the eigenvalue problem for three-state system on $\mathcal{H}$ to the two-state system on $\ell ^{2} (\mathbb{Z} ;\mathbb{C}^{2} )$.
\begin{lemma}
    \label{main corollary}
    For $\lambda \in [0,2\pi )$, \( e^{i\lambda}\) is an eigenvalue of $U$ if and only if there exists $\tilde{\Psi} \in \ell ^{2} (\mathbb{Z} ;\mathbb{C}^{2} )$ such that
    \begin{align}
         & e^{i\lambda } \tilde{\Psi} _{1} (x)=A_{x} (\lambda )\tilde{\Psi} _{1} (x+1)+B_{x} (\lambda )\tilde{\Psi} _{2} (x), \label{first2}   \\
         & e^{i\lambda} \tilde{\Psi} _{2} (x+1)=C_{x} (\lambda )\tilde{\Psi} _{1} (x+1)+D_{x} (\lambda )\tilde{\Psi} _{2} (x). \label{second2}
    \end{align}
    and the corresponding eigenvector is given by $\iota ^{-1}\tilde{\Psi}$.
\end{lemma}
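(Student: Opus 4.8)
The eigenvalue equation $U\Psi = e^{i\lambda}\Psi$ has already been recast as the scalar system \cref{first,second,third}, and the map $\iota$ of \cref{eq:iota} is a stated bijection between the eigenvectors obeying \cref{third} and $\ell^2(\mathbb{Z};\mathbb{C}^2)$. My plan is therefore to prove the lemma purely by transporting solutions across $\iota$ and checking that \cref{first,second} turn into \cref{first2,second2} under the change of variables $\tilde\Psi_1(x)=\Psi_1(x-1)$, $\tilde\Psi_2(x)=\Psi_3(x)$. The algebraic identification is immediate once the indices are matched; the only point demanding genuine care is the preservation of square summability, which I isolate at the end.

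For the forward direction, I would take $e^{i\lambda}\in\sigma_p(U)$ with $\Psi\in\ker(U-e^{i\lambda})\setminus\{\mathbf{0}\}$. Expanding $U\Psi=e^{i\lambda}\Psi$ componentwise gives the scalar system; since $C_x$ is unitary we have $|a_x^{(2,2)}|\le 1$, and the standing assumption $|a_x^{(2,2)}|\neq 1$ forces $|a_x^{(2,2)}|<1=|e^{i\lambda}|$, so $e^{i\lambda}-a_x^{(2,2)}\neq 0$ and the middle equation can be solved for $\Psi_2(x)$, yielding \cref{third}. Hence $\Psi$ lies in the domain of $\iota$. Writing $\tilde\Psi=\iota\Psi$, i.e. $\Psi_1(x)=\tilde\Psi_1(x+1)$ and $\Psi_3(x)=\tilde\Psi_2(x)$, and substituting into \cref{first,second} reproduces \cref{first2,second2} verbatim. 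Because $\iota$ is injective and lands in $\ell^2(\mathbb{Z};\mathbb{C}^2)$, the resulting $\tilde\Psi$ is a nonzero square-summable solution. For the converse, I would let $\tilde\Psi\in\ell^2(\mathbb{Z};\mathbb{C}^2)\setminus\{\mathbf{0}\}$ satisfy \cref{first2,second2} and set $\Psi=\iota^{-1}\tilde\Psi$ through the inverse formula, so that \cref{third} holds by construction. Reversing the substitution turns \cref{first2,second2} back into \cref{first,second}, and together with \cref{third} these reassemble the full componentwise system, i.e. $U\Psi=e^{i\lambda}\Psi$; thus $\iota^{-1}\tilde\Psi$ is the claimed eigenvector, provided it genuinely belongs to $\mathcal{H}$.

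The main obstacle is exactly this last membership: I must verify $\sum_x|\Psi_2(x)|^2<\infty$, since $\sum_x(|\Psi_1(x)|^2+|\Psi_3(x)|^2)=\|\tilde\Psi\|^2<\infty$ is automatic from $\tilde\Psi\in\ell^2$. Here I would use the unitarity of $C_x$ through the row normalization $|a_x^{(2,1)}|^2+|a_x^{(2,3)}|^2=1-|a_x^{(2,2)}|^2$ together with Cauchy--Schwarz in \cref{third} to get
\begin{equation*}
|\Psi_2(x)|\le\frac{\sqrt{1-|a_x^{(2,2)}|^2}}{\left|e^{i\lambda}-a_x^{(2,2)}\right|}\sqrt{|\Psi_1(x)|^2+|\Psi_3(x)|^2}.
\end{equation*}
Summing over $x$ shows $\Psi_2\in\ell^2$ as soon as the prefactor is uniformly bounded, which holds whenever $\inf_x\left|e^{i\lambda}-a_x^{(2,2)}\right|>0$. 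Although $|a_x^{(2,2)}|<1=|e^{i\lambda}|$ already rules out a vanishing denominator at each fixed $x$, a uniform bound is what the argument actually requires; this is guaranteed in the two-phase-with-finite-defects setting because $\{a_x^{(2,2)}\}$ takes only finitely many values, so $\sup_x|a_x^{(2,2)}|<1$. I would flag this uniformity as the substantive hypothesis underlying the bijectivity of $\iota$ and the stated harmlessness of $\Psi_2$ for square summability, with the remainder of the proof being index bookkeeping.
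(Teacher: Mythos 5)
Your proof is correct and follows essentially the same route as the paper, which obtains the lemma directly from the preceding reduction: componentwise expansion of $U\Psi=e^{i\lambda}\Psi$, elimination of $\Psi_2$ via \cref{third}, and transport through the bijection $\iota$. Your Cauchy--Schwarz estimate for $\Psi_2$ makes precise the paper's unproved assertion that $\Psi_2$ ``does not affect the square summability,'' and your observation that this step actually needs $\sup_x \bigl(1-|a_x^{(2,2)}|^2\bigr)/\bigl|e^{i\lambda}-a_x^{(2,2)}\bigr|^2<\infty$ --- automatic for the two-phase model with finitely many defects, but not for arbitrary inhomogeneous coins --- is a legitimate refinement of the paper's argument rather than a gap in yours.
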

\subsubsection{Transfer matrix on reduced space}
From the previous subsection, we have reduced the eigenvalue problem for the three-state QW to a two-state system. We will focus on \cref{first2,second2} for $\tilde{\Psi} \in \ell ^{2} (\mathbb{Z} ;\mathbb{C}^{2} )$ to analyze the eigenvalue problem. The transfer matrix $T_x(\lambda)$ is then defined to encapsulate these relationships. It is a $2 \times 2$ matrix that connects the $\tilde{\Psi}\in  \ell ^{2} (\mathbb{Z} ;\mathbb{C}^{2} )$ at position $x$ to the state at position $x+1$. The elements of the transfer matrix are derived from the components of the coin matrix, which are unitary matrices that dictate the behavior of the QW at each position. \cref{first2,second2} can then be expressed as following with $2\times2$ matrix $T_x(\lambda)$ which we call transfer matrix:
\begin{equation*}
    \tilde{\Psi}(x+1) =T_{x} (\lambda )\tilde{\Psi}(x) .
\end{equation*}
The transfer matrix \( T_x(\lambda) \) is given by the following simplified expression, derived from the unitarity of the coin matrix. A detailed derivation is provided in Appendix \ref{app:proof_prop_condition}. When we define $\iota$ in \cref{eq:iota}, we considered \( \Psi_1(x-1) \) and \( \Psi_3(x) \) as the state information, which significantly simplify the transfer matrix for this model. This approach allows the transfer matrix to be parameterized solely by the state information at site \( x \):
\begin{widetext}
    \begin{equation}\label{eq:transfer_matrix}
        T_{x}( \lambda ) =\frac{1}{a_{x}^{( 1,1)} e^{i\lambda } -e^{i\Delta _{x}}\overline{a_{x}^{( 3,3)}}}\begin{bmatrix}
            e^{i\lambda }\left( e^{i\lambda } -a_{x}^{( 2,2)}\right)                & -a_{x}^{( 1,3)} e^{i\lambda } -e^{i\Delta _{x}}\overline{a_{x}^{( 3,1)}} \\
            a_{x}^{( 3,1)} e^{i\lambda } +e^{i\Delta _{x}}\overline{a_{x}^{( 1,3)}} & -e^{i\Delta _{x}}\left( e^{-i\lambda } -\overline{a_{x}^{( 2,2)}}\right)
        \end{bmatrix} .
    \end{equation}
\end{widetext}

Here, $e^{i\Delta _{x}}$ is a determinant of the coin matrix $\det \mathsf{C}_{x}$. In the next \cref{prop:condition}, we give the main statement for this subsection, where we convert the eigenvalue equation to a reduced transfer matrix equation. Note that, when \( A_{x}(\lambda) = 0 \), which also means \( a_{x}^{(1,1)} e^{i\lambda} = e^{i\Delta_x} \overline{a_{x}^{(3,3)}} \), we cannot construct a transfer matrix. In this case, we must treat it separately. The following proposition rewrites the statement of \cref{main corollary} using the transfer matrix. The proof of this proposition is provided in \cref{app:proof_prop_condition}, there we show that \( A_{x}(\lambda) = 0 \) implies \( D_{x}(\lambda) = 0 \), thus the \cref{first2,second2} gives the direct relation \( \tilde{\Psi}_1(x) \) to \( \tilde{\Psi}_2(x) \) and \( \tilde{\Psi}_1(x+1) \) to \( \tilde{\Psi}_2(x+1) \) for \( A_{x}(\lambda) = 0 \) case, as detailed below.

\begin{proposition}\label{prop:condition}
    For $\lambda \in [0,2\pi )$, \( e^{i\lambda}\) is an eigenvalue of $U$ if and only if there exists $\tilde{\Psi} \in \ell ^{2} (\mathbb{Z} ;\mathbb{C}^{2} )$ such that
    \vspace{1em}

    \noindent$\bullet$ For $x$ with $A_{x}(\lambda) \neq 0$:
    \begin{equation*}
        \tilde{\Psi}(x+1) =T_{x}( \lambda )\tilde{\Psi}(x).
    \end{equation*}
    \noindent$\bullet$ For $x$ with $A_{x}(\lambda) =0$: there exists $k_x, k^\prime_{x}\in\mathbb{C}$ such that
    \[
       \tilde{\Psi } (x)=k_{x}v_{x},\quad \tilde{\Psi } (x+1)=k^\prime_{x}w_{x},
    \]
    where
     \begin{equation*}
            v_{x}:=\begin{bmatrix}
                a_{x}^{( 3,3)}\overline{a_{x}^{( 3,2)}} \\
                \overline{a_{x}^{( 1,1)}} a_{x}^{( 2,1)}
            \end{bmatrix} ,\quad w_{x}:=\begin{bmatrix}
                \overline{a_{x}^{( 1,1)}} a_{x}^{( 1,2)} \\
                a_{x}^{( 3,3)}\overline{a_{x}^{( 2,3)}}
            \end{bmatrix} .
        \end{equation*}
    Then, the corresponding eigenvector is given by $\iota ^{-1}\tilde{\Psi}$.
    
\end{proposition}

This result allows us to convert the eigenvalue equation \( U\Psi = e^{i\lambda} \Psi \) into a transfer matrix equation. Now we know that, to find the eigenvector \( \Psi \) of \( U \) with eigenvalue \( e^{i\lambda} \), we need to find a function \( \tilde{\Psi}: \mathbb{Z} \rightarrow \mathbb{C}^{2} \) that satisfies the condition in \cref{prop:condition} and the square summability condition \( \sum_{x \in \mathbb{Z}} \| \tilde{\Psi}(x) \|_{\mathbb{C}^{2}}^{2} < \infty \). In the next subsection, we delve deeper into the properties of the transfer matrix, and we will discuss the square summability of \( \tilde{\Psi} \) for the two-phase model with finite defects, which gives us the main results of this article.
\section{Spectral analysis for two-phase QW with defects\label{sec:main-theorem}}
In this section, we impose a condition essential for considering eigenvalue analysis with our method. 
\begin{equation*}
    \mathsf{C}_{x} =\begin{cases}
        \mathsf{C}_{\infty } ,  & x\in [ x_{+} ,\infty ) , \\
        \mathsf{C}_{x} ,        & x\in [ x_{-} ,x_{+}) ,   \\
        \mathsf{C}_{-\infty } , & x\in ( -\infty ,x_{-}),
    \end{cases}
\end{equation*}
where $x_{+}  \geq0,\ x_{-} < 0$. This implies that the coin operator is spatially homogeneous in the regions $x \geq x_+$ and $x < x_-$, while allowing for arbitrary coin matrices in the finite interval between $x_-$ and $x_+$. We say the model with this condition as two-phase QW with finite number of defects. As a remark, we can easily replace $C_{\pm\infty}$ with periodic coin matrices by following the previous research \cite{Kiumi2022-pd}. However, we opt not to explore this generalization as it leads to notably more complex notation.

\subsection{Eigenvalue and Eigenvector\label{subsec:main-theorem}}
We analyze the asymptotic behavior of \( \tilde{\Psi}(x) \) as \( x \to \pm \infty \), which is crucial for determining the square summability of \( \tilde{\Psi} \), and consequently, of \( \Psi \). For \( \tilde{\Psi}(x) \) to be square summable, it must decay as \( x \to \pm \infty \) ensuring the convergence of \( \sum_{x \in \mathbb{Z}} \| \tilde{\Psi}(x) \|^2 \), when we can construct the transfer matrix for \( x \geq x_+ \) and \( x < x_- \). Since our model assumes the coin matrix to be homogeneous in these regions, it is effective to examine the spectral properties of the transfer matrix \( T_{\pm\infty}(\lambda) \). The transfer matrix \( T_{\pm\infty}(\lambda) \) has advantageous properties for such an analysis, most notably that the norm of its determinant remains equal to \( 1 \) for any value of \( \lambda \). This invariant determinant simplifies the spectral analysis.
\begin{lemma}\label{prop:det}
    For all $\lambda \in [ -\pi ,\pi )$ and $x\in \mathbb{Z} \cup \{\pm \infty \}$, the transfer matrix given by (\ref{eq:transfer_matrix}) satisfies
    \begin{equation*}
        \left| \det T_{x} (\lambda )\right| =1.
    \end{equation*}
\end{lemma}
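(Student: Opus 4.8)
The plan is to compute $\det T_x(\lambda)$ directly from the explicit $2\times 2$ expression for $T_x(\lambda)$ given above, and to show that the modulus of the resulting scalar is exactly $1$ by exploiting the unitarity of the coin matrix $C_x$. Writing $T_x(\lambda) = \frac{1}{P}\,M$ where $P = a_{x}^{(1,1)} e^{i\lambda} - e^{i\Delta_x}\overline{a_{x}^{(3,3)}}$ is the common prefactor and $M$ is the matrix of entries, I would use $\det T_x(\lambda) = \det(M)/P^2$. So the computation splits into two pieces: first evaluate the $2\times 2$ determinant $\det M$, and then compare $|\det M|$ against $|P|^2$.

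The key algebraic step is the evaluation of $\det M$. Expanding,
\begin{equation*}
    \det M = -e^{i\lambda} e^{i\Delta_x}\left(e^{i\lambda}-a_x^{(2,2)}\right)\left(e^{-i\lambda}-\overline{a_x^{(2,2)}}\right) + \left(a_x^{(1,3)} e^{i\lambda} + e^{i\Delta_x}\overline{a_x^{(3,1)}}\right)\left(a_x^{(3,1)} e^{i\lambda} + e^{i\Delta_x}\overline{a_x^{(1,3)}}\right).
\end{equation*}
The first product is $e^{i\lambda} e^{i\Delta_x}\,\bigl|e^{i\lambda}-a_x^{(2,2)}\bigr|^2$ up to the leading sign, while the second expands into four terms, two of which carry the common factor $e^{i\lambda}e^{i\Delta_x}$. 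I expect the cross terms to combine with the first product so that $\det M$ factors as $e^{i\lambda}e^{i\Delta_x}$ times a real-structured expression. I would then similarly write $|P|^2 = P\overline{P} = \bigl|a_x^{(1,1)}\bigr|^2 + \bigl|a_x^{(3,3)}\bigr|^2 - 2\operatorname{Re}\!\bigl(a_x^{(1,1)} e^{i\lambda} \overline{e^{i\Delta_x}} a_x^{(3,3)}\bigr)$ and aim to match $|\det M|$ to $|P|^2$ term by term.

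The main obstacle, and the place where unitarity must enter, is showing that the combined modulus-squared quantities coincide. Unitarity of $C_x$ supplies the relations needed to handle $\Delta_x$ (which I anticipate is defined so that $e^{i\Delta_x} = \det C_x$, or a closely related minor phase) and to convert expressions involving the entries $a_x^{(1,3)}, a_x^{(3,1)}, a_x^{(2,2)}$ into one another via the orthonormality of the rows and columns and the cofactor identities $\overline{a_x^{(i,j)}} = (\det C_x)\, (\text{cofactor})$. In particular, the column-orthogonality relations among the first, second, and third columns should let me rewrite the $a_x^{(2,2)}$-dependent term and the $a_x^{(1,3)}, a_x^{(3,1)}$-dependent terms so that all the $\lambda$-dependent cross terms cancel against those in $|P|^2$, leaving $|\det M| = |P|^2$ and hence $|\det T_x(\lambda)| = 1$. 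Since the transfer-matrix formula and the supporting identities are established in Appendix~\ref{app:proof_prop_condition}, I would invoke the relevant unitarity consequences from there rather than rederiving them, and treat the cases $x = \pm\infty$ identically, as the formula depends on $x$ only through the (unitary) coin entries.
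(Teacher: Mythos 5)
Your proposal is correct in outline, but it takes a more computational route than the paper. The paper never expands the $2\times 2$ determinant of the simplified (widetext) matrix: it observes from the pre-simplified form $T_x=\frac{1}{A_x}\bigl[\begin{smallmatrix} e^{i\lambda} & -B_x \\ C_x & -e^{-i\lambda}(B_xC_x-A_xD_x)\end{smallmatrix}\bigr]$ that the determinant telescopes to $\det T_x = D_x/A_x$, and then the whole lemma reduces to the one-line modulus identity
\begin{equation*}
    \bigl| a_{x}^{(1,1)} e^{i\lambda} - e^{i\Delta_x}\overline{a_{x}^{(3,3)}}\bigr|^{2}
    = \bigl| a_{x}^{(3,3)} e^{i\lambda} - e^{i\Delta_x}\overline{a_{x}^{(1,1)}}\bigr|^{2}
    = \bigl|a_{x}^{(1,1)}\bigr|^{2} + \bigl|a_{x}^{(3,3)}\bigr|^{2} - 2\Re\bigl(a_{x}^{(1,1)} a_{x}^{(3,3)} e^{i(\lambda-\Delta_x)}\bigr),
\end{equation*}
i.e.\ $|A_x(\lambda)|=|D_x(\lambda)|$; all the unitarity input is already absorbed into the closed forms of $A_x$ and $D_x$ derived in Appendix~\ref{app:proof_prop_condition}. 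Your route does go through: expanding $\det M$ and using the cofactor identity $e^{i\Delta_x}\overline{a_x^{(2,2)}} = \det C_x^{(2,2)} = a_x^{(1,1)}a_x^{(3,3)} - a_x^{(1,3)}a_x^{(3,1)}$ together with the normalization relation $|a_x^{(1,1)}|^2+|a_x^{(3,3)}|^2+|a_x^{(1,3)}|^2+|a_x^{(3,1)}|^2 = 1+|a_x^{(2,2)}|^2$ (row sums of rows $1,3$ combined with column $2$), one finds $\det M = \bigl(a_x^{(1,1)}e^{i\lambda}-e^{i\Delta_x}\overline{a_x^{(3,3)}}\bigr)\bigl(a_x^{(3,3)}e^{i\lambda}-e^{i\Delta_x}\overline{a_x^{(1,1)}}\bigr) = -e^{i(\lambda+\Delta_x)}|P|^2$, confirming your guess that $\det M$ is $e^{i(\lambda+\Delta_x)}$ times a real quantity, and giving $|\det M|=|P|^2$. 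So your approach buys an explicit closed form for $\det T_x = -e^{i(\lambda+\Delta_x)}\overline{P}/P$ (not just its modulus) at the cost of a heavier expansion; the paper's factorization through $D_x/A_x$ avoids the determinant expansion entirely. Be aware that a naive term-by-term match of $|\det M|$ against $|P|^2$ in powers of $e^{i\lambda}$ does not close without the two unitarity identities above, so you should state them explicitly rather than only gesturing at the appendix.
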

The proof of this lemma is given in \cref{app:det}. This result confirms that the following two eigenvalues $\zeta _{x}^{+}$, $\zeta _{x}^{-}$ of the transfer matrix satisfies $|\zeta _{x}^{+} ||\zeta _{x}^{-} |=1.$
\begin{equation*}
    \zeta _{x}^{\pm } =\frac{\mathrm{tr} (T_{x} )\pm \sqrt{\mathrm{tr} (T_{x} )^{2} -4\det T_x}}{2}.
\end{equation*}
In this paper, we define the square root of the complex number $a=|a|e^{i\theta } ,\ \theta \in [ 0,2\pi )$ as $\sqrt{a} =\sqrt{|a|} e^{i\frac{\theta }{2}}$. Consequently, we designate $\zeta _{x}^{< }$ as one of $\zeta _{x}^{\pm }$ such that $\left| \zeta _{x}^{< }\right| \leq 1$, and $\zeta _{x}^{ >}$ such that $\left| \zeta _{x}^{ >}\right| \geq 1.$ Additionally, we introduce $\ket{v_{x}^{ >}}$ and $\ket{v_{x}^{< }}$ as normalized eigenvector of $T_{x}$ corresponding to $\zeta _{x}^{ >}$ and $\zeta _{x}^{< }$, respectively. For \( \tilde{\Psi}(x) \) to decay as \( x \to \pm \infty \) it is necessary that both \( |\zeta_{\pm\infty}^{>}|\) and \( |\zeta_{\pm\infty}^{<}|\) are not equal to $1$. We define the set \( \Lambda \) as the set of \( \lambda \) for which this condition is satisfied.
\begin{equation*}
    \Lambda :=\left\{e^{i\lambda } \in \mathbb{C} \ \middle| \ \left| \zeta _{\pm \infty }^{ >}\right| \neq \left| \zeta _{\pm \infty }^{< }\right| \right\}.
\end{equation*}
The following lemma characterises the set \( \Lambda \) in terms of the trace of the transfer matrix.
\begin{lemma}\label{prop:necessary}
    \begin{equation*}
        \Lambda =\left\{e^{i\lambda }\in\mathbb{C}\mid |\mathrm{tr} (T_{\pm \infty }(\lambda) )| >2\right\},
    \end{equation*}
    where the trace of the transfer matrix is given by
    \[\mathrm{tr} (T_{x }(\lambda) )=\frac{e^{i\lambda }\left( e^{i\lambda } -a_{x}^{( 2,2)}\right) -e^{i\Delta _{x}}\left( e^{-i\lambda } -\overline{a_{x}^{( 2,2)}}\right)}{a_{x}^{( 1,1)} e^{i\lambda } -e^{i\Delta _{x}}\overline{a_{x}^{( 3,3)}}}.\]
\end{lemma}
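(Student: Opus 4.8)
The plan is to reduce the statement, which involves the two boundary matrices $T_{+\infty}(\lambda)$ and $T_{-\infty}(\lambda)$, to a single pointwise dichotomy for an arbitrary homogeneous transfer matrix $T=T_{x}(\lambda)$, $x\in\{\pm\infty\}$: its two eigenvalues $\zeta^{\pm}$ have distinct moduli if and only if $|\mathrm{tr}(T)|>2$. By \cref{prop:det} we have $|\det T|=1$, so $|\zeta^{>}||\zeta^{<}|=1$, and the condition $|\zeta^{>}|\neq|\zeta^{<}|$ is therefore equivalent to saying that neither eigenvalue lies on the unit circle. Thus it suffices to characterise, in terms of $\mathrm{tr}(T)$, when both roots of $z^{2}-\mathrm{tr}(T)\,z+\det T=0$ lie on $|z|=1$.

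The key algebraic identity I would establish is $\mathrm{tr}(T)^{2}=|\mathrm{tr}(T)|^{2}\det T$, equivalently $\det T=\mathrm{tr}(T)/\overline{\mathrm{tr}(T)}$ when $\mathrm{tr}(T)\neq0$. Writing $w=e^{i\lambda}$, $\delta=e^{i\Delta_{x}}$ and $M=a_{x}^{(1,1)}w-\delta\,\overline{a_{x}^{(3,3)}}$, so that $\mathrm{tr}(T)=N_{t}/M$ with $N_{t}=w(w-a_{x}^{(2,2)})-\delta\,\overline{(w-a_{x}^{(2,2)})}$, the first ingredient is the elementary conjugation relation $\overline{N_{t}}=-\bar\delta\,\bar w\,N_{t}$, which uses only $|w|=|\delta|=1$. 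The second ingredient is the determinant identity $\det T_{x}(\lambda)=-\,e^{i(\lambda+\Delta_{x})}\,\overline{M}/M$. Combining these, $\mathrm{tr}(T)/\overline{\mathrm{tr}(T)}=(N_{t}/M)\big/(-\bar\delta\bar w N_{t}/\overline{M})=-\delta w\,\overline{M}/M=\det T$, which is exactly the desired identity.

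Granting the identity, the conclusion is immediate. Fix any square root $\sqrt{\det T}$ (with $|\sqrt{\det T}|=1$) and set $\tau=\mathrm{tr}(T)/\sqrt{\det T}$; then $\tau^{2}=\mathrm{tr}(T)^{2}/\det T=|\mathrm{tr}(T)|^{2}\ge0$, so $\tau=\pm|\mathrm{tr}(T)|$ is real. Substituting $z=\sqrt{\det T}\,\eta$ turns the characteristic equation into $\eta^{2}-\tau\eta+1=0$; since $|z|=|\eta|$, both $\zeta^{\pm}$ lie on the unit circle if and only if both roots $\eta$ do, which for a real coefficient $\tau$ occurs exactly when the discriminant $\tau^{2}-4=|\mathrm{tr}(T)|^{2}-4\le0$, i.e. when $|\mathrm{tr}(T)|\le2$ (the case $\mathrm{tr}(T)=0$ being trivial). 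Hence $|\zeta^{>}|\neq|\zeta^{<}|$ if and only if $|\mathrm{tr}(T)|>2$; applying this to both $x=+\infty$ and $x=-\infty$ yields $\Lambda=\{e^{i\lambda}\mid|\mathrm{tr}(T_{\pm\infty}(\lambda))|>2\}$.

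The main obstacle is the determinant identity $\det T_{x}(\lambda)=-e^{i(\lambda+\Delta_{x})}\overline{M}/M$. Expanding the determinant, $\det(T_{x})M^{2}$ is $e^{i\Delta_{x}}$ times a Laurent polynomial in $w$; using the adjugate relation $e^{i\Delta_{x}}\overline{a_{x}^{(2,2)}}=a_{x}^{(1,1)}a_{x}^{(3,3)}-a_{x}^{(1,3)}a_{x}^{(3,1)}$ (which, with $e^{i\Delta_{x}}=\det C_{x}$, identifies the $w^{2}$-coefficient) one finds that $e^{-i(\lambda+\Delta_{x})}\det(T_{x})M^{2}$ equals the real expression $|a_{x}^{(1,3)}|^{2}+|a_{x}^{(3,1)}|^{2}-1-|a_{x}^{(2,2)}|^{2}+2\,\mathrm{Re}\!\big(a_{x}^{(1,1)}a_{x}^{(3,3)}e^{-i\Delta_{x}}e^{i\lambda}\big)$. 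I expect the crux to be collapsing this to $-|M|^{2}$ via the unitarity identity $|a_{x}^{(1,1)}|^{2}+|a_{x}^{(3,3)}|^{2}+|a_{x}^{(1,3)}|^{2}+|a_{x}^{(3,1)}|^{2}=1+|a_{x}^{(2,2)}|^{2}$, which follows from orthonormality of rows $1,3$ and column $2$ of the unitary coin. This is precisely what forces the negative sign and delivers $\tau^{2}=|\mathrm{tr}(T)|^{2}\ge0$ rather than a nonpositive value (the latter would spoil the equivalence with $|\mathrm{tr}(T)|>2$). Since much of this bookkeeping overlaps with the computation behind \cref{prop:det}, I would reuse the explicit form of $\det T_{x}(\lambda)$ obtained there.
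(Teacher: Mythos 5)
Your proposal is correct and follows essentially the same route as the paper: both proofs rest on the key identity $\mathrm{tr}(T)=\det(T)\,\overline{\mathrm{tr}(T)}$ combined with $|\det T|=1$ from \cref{prop:det}, differing only in the elementary algebra used to reach that identity (the paper writes $\mathrm{tr}(T)=e^{i\lambda}/A_x+e^{-i\lambda}/\overline{D_x}$ and uses $|A_x|=|D_x|$, while you use the conjugation symmetry of the trace numerator together with $\det T=-e^{i(\lambda+\Delta_x)}\overline{M}/M$) and in the final step. Your normalization $z=\sqrt{\det T}\,\eta$, reducing the characteristic polynomial to a real-coefficient quadratic $\eta^{2}-\tau\eta+1$, is a slightly tidier finish than the paper's manipulation of $\Re\bigl(\overline{\mathrm{tr}(T)}\sqrt{\mathrm{tr}(T)^{2}-4\det T}\bigr)$, since it sidesteps the square-root branch bookkeeping.
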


The proof is given in the Appendix \ref{app:proof1}. This leads us to the central result of this work, stated in the following theorem:
\begin{theorem}\label{main theorem}
    For $\lambda \in [0,2\pi )$, $e^{i\lambda }$ is an eigenvalue of $U$ if and only if there exists non-zero $\tilde{\Psi } :\mathbb{Z}\rightarrow \mathbb{C}^{2}$ such that Conditions 1, 2, and 3 are all satisfied.
            \\[1em]
        \noindent$\bullet$ Condition 1. For $x\in [ x_{-} ,x_{+})$,
        \vspace{0.2em}

        \noindent\ \ $\circ$ Case 1. $A_{x}( \lambda ) \neq 0$:
        \begin{equation*}
            \tilde{\Psi } (x+1)=T_{x}( \lambda )\tilde{\Psi } (x).
        \end{equation*}
        \noindent\ \ $\circ$ Case 2. $A_{x}( \lambda ) =0$: There exists $k_{x} ,k^\prime_{x} \in \mathbb{C}$ such that
        \begin{equation*}
            \tilde{\Psi } (x)=k_{x}v_x ,\ \tilde{\Psi } (x+1)=k^\prime_{x}w_x .
        \end{equation*}
        \\[1em]
    $\bullet$ Condition 2.
        \vspace{0.2em}

        \noindent\ \ $\circ$ Case 1. $A_{\infty }( \lambda ) \neq 0$: $\tilde{\Psi }(x_{+})=\mathbf{0} $ or
            \begin{equation*}
                |\mathrm{tr} (T_{\infty } (\lambda ))| >2,\ T_{\infty }( \lambda )\tilde{\Psi } (x_{+} )=\zeta _{\infty }^{< }\tilde{\Psi } (x_{+}).
            \end{equation*}
        \noindent\ \ $\circ$ Case 2. $A_{\infty }( \lambda ) =0$: There exists $k_{x_{+}}\in \mathbb{C}$ such that
        $\tilde{\Psi } (x_{+} )=k_{x_{+}}v_{\infty},$ and for $x >x_{+} ,$ there exists $k_x\in \mathbb{C}$ such that
        \begin{equation*}
            \tilde{\Psi } (x)=\begin{cases}
                k_xv_{\infty} & \left(\frac{a_{\infty }^{( 3,3)}}{\overline{a_{\infty }^{( 1,1)}}}\right)^{2} =\frac{a_{\infty }^{( 1,2)} a_{\infty }^{( 2,1)}}{\overline{a_{\infty }^{( 3,2)}}\overline{a_{\infty }^{( 2,3)}}} , \\
                \mathbf{0}                                                       & otherwise.
            \end{cases}
        \end{equation*}
        Here, $k_x\neq 0$ only for finitely many $x$.
        \\[1em]
        \noindent$\bullet $ Condition 3.
        \vspace{0.2em}

        \noindent\ \ $\circ$ Case 1. $A_{-\infty }( \lambda ) \neq 0$: $\tilde{\Psi }(x_{-})=\mathbf{0} $ or
            \begin{equation*}
                |\mathrm{tr} (T_{-\infty } (\lambda ))| >2 ,\ T_{-\infty }( \lambda )\tilde{\Psi } (x_{-} )=\zeta _{-\infty }^{> }\tilde{\Psi } (x_{-}).
            \end{equation*}
        \ \ $\circ$ Case 2. $A_{-\infty }( \lambda ) =0$: There exists $k^\prime_{x_{-}-1}\in \mathbb{C} ,$ such that
        $\tilde{\Psi } (x_{-} )=k^\prime_{x_{-}-1}w_{-\infty},$ and for $x< x_{-} ,$ there exists $k_x\in \mathbb{C}$, such that
        \begin{equation*}
            \ \tilde{\Psi } (x)=\begin{cases}
                k_x v_{-\infty} & \left(\frac{a_{-\infty }^{( 3,3)}}{\overline{a_{-\infty }^{( 1,1)}}}\right)^{2} =\frac{a_{-\infty }^{( 1,2)} a_{-\infty }^{( 2,1)}}{\overline{a_{-\infty }^{( 3,2)}}\overline{a_{-\infty }^{( 2,3)}}} , \\
                \mathbf{0}                                                        & otherwise.
            \end{cases}
        \end{equation*}
        Here, $k_x\neq 0$ only for finitely many $x$.
\end{theorem}

    Furthermore, we have the following result on the multiplicity of eigenvalues.
\begin{proposition}\label{prop:multiplicity}
    The multiplicity of $e^{i\lambda}$ as an eigenvalue of $U$ is described as follows.

\medskip
\noindent{\rm(i)} $\dim\ker(U-e^{i\lambda})=\infty$ if and only if one of the following holds:
\begin{align*}
&A_{\infty}(\lambda)=0,\ \left(\frac{a^{\infty}_{33}}{a^{\infty}_{11}}\right)^2=\frac{a^{\infty}_{12}a^{\infty}_{21}}{a^{\infty}_{32}a^{\infty}_{23}}
\\
&A_{-\infty}(\lambda)=0,\ \left(\frac{a^{-\infty}_{33}}{a^{-\infty}_{11}}\right)^2=\frac{a^{-\infty}_{12}a^{-\infty}_{21}}{a^{-\infty}_{32}a^{-\infty}_{23}}.
\end{align*}

\medskip
\noindent{\rm(ii)}
If the condition of \rm(i) does not hold, then the multiplicity is finite and satisfies
\[
\dim\ker(U-e^{i\lambda})\ \le\ 1+N(\lambda),
\]
where $N(\lambda)$ is the number of exceptional sites in the interval $[x_-,x_+)$:
\[
N(\lambda):=|\{x\in[x_-,x_+)\;:\;A_x(\lambda)=0\}|.
\]
Equality holds exactly when, for every exceptional site $x$, there exists an eigenstate for which the coefficients $k_x$ and $k'_x$ in Condition~1, Case~2 of \cref{main theorem} are both nonzero.
\end{proposition}

\begin{proof}
By \cref{prop:condition}, if $A_x(\lambda)\neq0$ then $\tilde\Psi$ is uniquely propagated by $\tilde\Psi(x+1)=T_x(\lambda)\tilde\Psi(x)$ and no new freedom arises. 
If $A_x(\lambda)=0$, the second case replaces this by $\tilde\Psi(x)=k_xu_x$ and $\tilde\Psi(x+1)=k^\prime_xw_x$, so each exceptional site can contribute (at most) one extra degree of freedom, which is realized exactly when an eigenvector exists with $k_x\neq0$ and $k^\prime_{x}\neq0$.

If an exceptional site occurs in a tail ($A_{\pm\infty}(\lambda)=0$) and the corresponding algebraic condition in (i) holds, then \cref{main theorem} yields infinitely many
linearly independent eigenvectors from infinitely many choices of $k_x\in\mathbb{C}$, hence $\dim\ker(U-e^{i\lambda})=\infty$. Otherwise, the only exceptional sites contribute to the multiplicity lie in the finite defect
region $[x_-,x_+)$. Each of the $N(\lambda)$ exceptional sites can contribute
at most one additional independent choice, when an eigenvector exists with $k_x\neq0$ and $k^\prime_{x}\neq0$, giving $\dim\ker(U-e^{i\lambda})\le 1+N(\lambda)$. 
\end{proof}

Let us define the set of $e^{i\lambda } \in \mathbb{C}$ where the transfer matrix cannot be constructed at some position $x$: \
\begin{equation}\label{eq:lambda0}
    \Lambda _{0} =\left\{e^{i\lambda } \in \mathbb{C} \mid \exists x\in \mathbb{Z} ,\ A_{x} (\lambda )=0\right\} .
\end{equation}
In practice, the case $\lambda \in \Lambda _{0}$ can be considered separately. To find eigenvalues for $\lambda \notin \Lambda _{0}$, we must check that $|\mathrm{tr} (T_{\pm \infty } (\lambda ))| >2$, and \
\begin{align*}
    T_{-\infty } (\lambda )\tilde{\Psi } (x_{-} )                                   & =\zeta _{-\infty }^{ >}\tilde{\Psi } (x_{-} ),                                    \\
    T_{\infty } (\lambda )\left( T_{[x_{-} ,x_{+}-1 ]}\tilde{\Psi } (x_{-} )\right) & =\zeta _{\infty }^{< }\left( T_{[x_{-} ,x_{+}-1 ]}\tilde{\Psi } (x_{-} )\right) ,
\end{align*}
where $T_{[x,y]} :=T_{y} (\lambda )\cdots T_{x+1} (\lambda )T_{x} (\lambda )$. \

Thus, the theorem can be reformulated as the problem of solving the equation $\chi (\lambda )=0$ for $\lambda \notin \Lambda _{0}$, as stated in the following corollary.
\begin{corollary}\label{cor:main}
    $e^{i\lambda } \notin \Lambda _{0}$ is an eigenvalue of $U$, i.e., $e^{i\lambda } \in \sigma _{p} (U)$, if and only if the following conditions are met:
    \begin{align*}
         & |\mathrm{tr} (T_{\pm \infty }(\lambda) )| >2,                                                 \\
         & \chi (\lambda ):=T_{[ x_{-} ,x_{+}]}\ket{v_{-\infty }^{ >}} \times \ket{v_{\infty }^{< }} =0.
    \end{align*}

    Here $\times $ denotes the cross product on $\mathbb{C}^{2}$ defined by $[ x_{1} \ x_{2}] \times [ y_{1} \ y_{2}] =x_{1} y_{2} -x_{2} y_{1}$. Additionally, the multiplicity of $e^{i\lambda } \in \sigma _{p} (U)$ is $1$ with a unique eigenvector $\iota ^{-1}\tilde{\Psi }$ (up to global phase) given by:
    \begin{equation*}
        \begin{aligned}
            \tilde{\Psi } (x)= & \begin{cases}
                                     (\zeta _{\infty }^{< } )^{x-x_{+}} T_{[ x_{-} ,x_{+}-1]}\ket{v_{-\infty }^{ >}} , & x_{+} \leq x,          \\
                                     T_{[ x_{-} ,x-1]}\ket{v_{-\infty }^{ >}} ,                                        & x\in ( x_{-} ,x_{+}) , \\
                                     (\zeta _{-\infty }^{ >} )^{x-x_{-}}\ket{v_{-\infty }^{ >}} ,                      & x\leq x_{-} .
                                 \end{cases}
        \end{aligned}
    \end{equation*}
\end{corollary}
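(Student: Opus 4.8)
The plan is to derive \cref{cor:main} as the specialization of \cref{main theorem} to the regime $e^{i\lambda}\notin\Lambda_0$, where a transfer matrix exists at every site. First I would record that $\lambda\notin\Lambda_0$ forces $A_x(\lambda)\neq 0$ for all $x\in\mathbb{Z}\cup\{\pm\infty\}$: since $x_\pm$ are finite integers with $A_{x_+}=A_\infty$ and $A_{x_-}=A_{-\infty}$, a vanishing $A_{\pm\infty}$ would already place $e^{i\lambda}$ in $\Lambda_0$. Consequently every alternative in \cref{main theorem} collapses to its Case~1 branch, so the three conditions reduce to: (i) $\tilde\Psi(x+1)=T_x(\lambda)\tilde\Psi(x)$ on $[x_-,x_+)$; (ii) $|\mathrm{tr}(T_\infty)|>2$ with $\tilde\Psi(x_+)\in\ker(T_\infty-\zeta_\infty^{<})$; and (iii) $|\mathrm{tr}(T_{-\infty})|>2$ with $\tilde\Psi(x_-)\in\ker(T_{-\infty}-\zeta_{-\infty}^{>})$.

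Next I would use the spectral data to pin down the boundary values. Combining \cref{prop:det} ($|\det T_{\pm\infty}|=1$) with \cref{prop:necessary} ($|\mathrm{tr}(T_{\pm\infty})|>2\Leftrightarrow|\zeta^{>}|\neq|\zeta^{<}|$) and the relation $|\zeta^{>}||\zeta^{<}|=1$ yields the strict separation $|\zeta_{\pm\infty}^{>}|>1>|\zeta_{\pm\infty}^{<}|$; in particular the two eigenvalues are distinct, so each eigenspace is one-dimensional and the kernel conditions become $\tilde\Psi(x_-)=\alpha\ket{v_{-\infty}^{>}}$ and $\tilde\Psi(x_+)=\beta\ket{v_\infty^{<}}$ for scalars $\alpha,\beta$. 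Propagating (i) then gives $\tilde\Psi(x_+)=T_{[x_-,x_+-1]}\tilde\Psi(x_-)=\alpha\,T_{[x_-,x_+-1]}\ket{v_{-\infty}^{>}}$, while backward/forward iteration in the homogeneous regions produces the geometric tails $\tilde\Psi(x)=\alpha(\zeta_{-\infty}^{>})^{x-x_-}\ket{v_{-\infty}^{>}}$ for $x\leq x_-$ and $\tilde\Psi(x)=\beta(\zeta_\infty^{<})^{x-x_+}\ket{v_\infty^{<}}$ for $x\geq x_+$; these decay precisely because $|\zeta_{-\infty}^{>}|>1$ and $|\zeta_\infty^{<}|<1$, which secures square summability.

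The existence of a nontrivial eigenvector then reduces to a compatibility condition. A nonzero $\tilde\Psi$ forces $\alpha\neq 0$ (otherwise $\tilde\Psi(x_-)=0$ propagates to $\tilde\Psi\equiv\mathbf{0}$), so after scaling to $\alpha=1$ the matching $\tilde\Psi(x_+)=\beta\ket{v_\infty^{<}}$ demands that $T_{[x_-,x_+-1]}\ket{v_{-\infty}^{>}}$ be parallel to $\ket{v_\infty^{<}}$, i.e.\ that their cross product vanish. I would then bridge the small gap between this and the stated $\chi(\lambda)=T_{[x_-,x_+]}\ket{v_{-\infty}^{>}}\times\ket{v_\infty^{<}}$: since $T_{x_+}=T_\infty$ is invertible and fixes the line $\mathbb{C}\ket{v_\infty^{<}}$, the map $u\mapsto T_\infty u$ preserves parallelism to $\ket{v_\infty^{<}}$, so $T_{[x_-,x_+]}\ket{v_{-\infty}^{>}}\times\ket{v_\infty^{<}}=0$ holds iff $T_{[x_-,x_+-1]}\ket{v_{-\infty}^{>}}\times\ket{v_\infty^{<}}=0$. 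For the converse, assuming the trace bounds and $\chi(\lambda)=0$, I would simply exhibit the explicit $\tilde\Psi$ of the corollary (the $\alpha=1$ instance above) and verify directly that it is nonzero, satisfies (i)--(iii), and lies in $\ell^2$.

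Finally, multiplicity one follows because the entire solution is determined by the single scalar $\alpha$: fixing $\alpha$ fixes $\tilde\Psi(x_-)$, and propagation fixes $\tilde\Psi$ at every site, giving a one-dimensional kernel and a unique eigenvector up to global phase. I expect the only genuinely delicate step to be the non-degeneracy argument---upgrading $|\mathrm{tr}|>2$ to the strict separation $|\zeta^{>}|>1>|\zeta^{<}|$ and hence to one-dimensional eigenspaces---together with the bookkeeping that identifies $\ker(T_{-\infty}-\zeta_{-\infty}^{>})=\mathbb{C}\ket{v_{-\infty}^{>}}$ and $\ker(T_\infty-\zeta_\infty^{<})=\mathbb{C}\ket{v_\infty^{<}}$, so that the abstract kernel conditions of \cref{main theorem} turn into the concrete scalar equation $\chi(\lambda)=0$.
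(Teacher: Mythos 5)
Your proposal is correct and follows essentially the same route as the paper, which derives \cref{cor:main} directly from \cref{main theorem} by restricting to $\lambda\notin\Lambda_0$ (so every branch collapses to its transfer-matrix case) and then using \cref{prop:det} and \cref{prop:necessary} to obtain the strict separation $|\zeta^{>}|>1>|\zeta^{<}|$, the resulting one-dimensional eigenlines, and the square summability of the geometric tails. Your treatment is in fact more explicit than the paper's own argument, which is only sketched in \cref{app:proof1}; in particular, your observation that $T_{x_+}=T_{\infty}$ preserves the line $\mathbb{C}\ket{v_{\infty}^{<}}$, so that the matching condition can equivalently be written with $T_{[x_-,x_+]}$ as in the stated $\chi(\lambda)$, correctly fills a step the paper leaves implicit.
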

As a remark, we note that the cross product in two dimensions is equivalent to the determinant. Thus, the condition $\chi(\lambda)=0$ can be rewritten as:$$ \det \left[ T_{[x_-, x_+]} \ket{v_{-\infty}^>}, \ \ket{v_{\infty}^<} \right] = 0. $$In the context of scattering theory, $T_{[x_-, x_+]} \ket{v_{-\infty}^>}$ represents the Jost solution propagated from $-\infty$ to $x_+$. Consequently, the vanishing of this determinant implies that the Jost solutions from $+\infty$ and $-\infty$ become linearly dependent at the spectral parameter $\lambda$.

\subsection{Essential spectrum}
We now describe the spectrum of the unitary operator $U$, denoted by $\sigma(U)$, for the two-phase three-state quantum walk with finitely many defects (including its continuous part). By standard spectral theory, the spectrum admits the disjoint decomposition
\begin{equation}\label{eq:spectrum-decomposition}
\sigma(U)=\sigma_{\mathrm{disc}}(U)\ \dot\cup\ \sigma_{\mathrm{ess}}(U),    
\end{equation}
where $\sigma_{\mathrm{disc}}(U)$ is the discrete spectrum (the set of isolated eigenvalues of finite multiplicity) and $\sigma_{\mathrm{ess}}(U)$ is the essential spectrum; here $\dot\cup$ denotes a disjoint union, so that $\sigma_{\mathrm{ess}}(U)\cap\sigma_{\mathrm{disc}}(U)=\varnothing$.

 The essential spectrum is unchanged by finite defects and is determined solely by the walk's limiting behavior at $\pm\infty$. Thus, for our two-phase QW with finite defects model, the essential spectrum of the inhomogeneous walk $U$ can be expressed in terms of the tail operators:
\begin{equation}
    \sigma_{\mathrm{ess}}(U) = \sigma(U_{-\infty}) \cup \sigma(U_{+\infty}).
    \label{eq:ess-union-tails}
\end{equation}
Here $U_{-\infty},\ U_{+\infty}$ are homogeneous walks
with constant coins $\mathsf{C}_{-\infty}$ and $\mathsf{C}_{+\infty}$, respectively. For details, we refer to the general results in \cite{Richard2017, Tanaka2021}. 

Since the tail walks $U_{\pm\infty}$ are homogeneous (translation-invariant), their spectra admit the standard fiber decomposition under the Fourier transform \cite{Ko2016}, so the continuous part is purely absolutely continuous \cite{Richard2017}. Moreover, any point spectrum of a homogeneous walk can only come from a $k$-independent (flat-band) eigenvalue of the fiber matrix, and hence has infinite multiplicity. Consequently,\begin{equation}\label{eq:tail-spectrum-decomposition}
\sigma(U_{\pm\infty})=\sigma_{\mathrm{ac}}(U_{\pm\infty})\ \cup\ \sigma^\infty_{\mathrm{ess}}(U_{\pm\infty}),
\end{equation}
where $\sigma_{\mathrm{ac}}(U_{\pm\infty})$ is absolutely continuous and $\sigma^\infty_{\mathrm{ess}}(U_{\pm\infty})$ consists (if nonempty) of eigenvalues of infinite multiplicity. Our transfer-matrix formalism captures $\sigma^\infty_{\mathrm{ess}}(U_{\pm\infty})$ via \cref{cor:homogeneous}, which explicitly derives the corresponding eigenvalues and shows that they arise only at the exceptional points $\Lambda_0$ defined in \cref{eq:lambda0}.

The absolutely continuous part can be derived using Weyl's Criterion for the essential spectrum of $U_{\pm\infty}$ with non-exceptional points $\lambda\notin\Lambda_0$. A value $e^{i\lambda}$ belongs to the essential spectrum $\sigma_{\mathrm{ess}}(U)$ if and only if there exists a sequence of unit vectors $\{\Psi_n\}_{n=1}^\infty \subset \mathcal{H}$ (a Weyl sequence) such that:
    \begin{enumerate}
        \item $\|\Psi_n\| = 1$ for all $n$,
        \item $\Psi_n$ converges weakly to 0,
        \item $\| (U - e^{i\lambda})\Psi_n \| \to 0$ as $n \to \infty$.
    \end{enumerate}
    Equivalently, for $\tilde\Psi_n:=\iota\Psi_n\in\ell^2(\mathbb Z;\mathbb C^2)$, since $\iota$ is bijective, the condition is equivalent to the existence of a sequence $\{\tilde\Psi_n\}_{n=1}^\infty$ such that
\begin{enumerate}
\item $\|\tilde\Psi_n\|=1$ for all $n$,
\item $\tilde\Psi_n$ converges weakly to 0,
\item $\|( \iota U\iota^{-1}-e^{i\lambda})\tilde\Psi_n\|\to0$ as $n\to\infty$.
\end{enumerate}
     Here, note that the eigenvalue equation $(\iota U_{\pm\infty} \iota^{-1}- e^{i\lambda})\tilde{\Psi} = 0$ is equivalent to the transfer matrix recurrence $\tilde{\Psi}(x+1) = T_{\pm\infty}(\lambda)\tilde{\Psi}(x)$.

Also, since the walk has only finitely many defects, $U$ is a finite-rank (hence trace-class) perturbation of the tail operator $U_{\pm\infty}$. Therefore, by the Kato--Rosenblum theorem (see, e.g., \cite{Kato1966,Richard2018}), the absolutely continuous spectrum satisfies
\[
\sigma_{\mathrm{ac}}(U)=\sigma_{\mathrm{ac}}(U_{-\infty})\cup\sigma_{\mathrm{ac}}(U_{+\infty}).
\]

\begin{theorem}\label{theorem:spectrum-decomposition}
    The spectrum of $U$ can be decomposed as
    \[
        \sigma(U)
        =
        \sigma_{\mathrm{disc}}(U)\,\cup\,\sigma^\infty_{\mathrm{ess}}(U)\,\cup\, \sigma_{\mathrm{ac}}(U)
    \]
    where:
    \\
    (i)\ \textbf{Discrete spectrum (finite multiplicity).}
\[
\sigma_{\mathrm{disc}}(U)
=
\sigma_{\mathrm{p}}(U)\setminus\bigl(\sigma_{\mathrm{ess}}^{\infty}(U)\cup\sigma_{\mathrm{ac}}(U)\bigr),
\]
where $\sigma_{\mathrm{p}}(U)$ is given in \cref{main theorem}, and the multiplicity of each eigenvalue is bounded by $1+N(\lambda)$ (\cref{prop:multiplicity}).
    \\
    (ii) \textbf{Eigenvalues with infinite multiplicity.}
    \[
        \sigma^\infty_{\mathrm{ess}}(U)
        =
        \sigma^\infty_{\mathrm{ess}}(U_{+\infty})\cup\sigma^\infty_{\mathrm{ess}}(U_{-\infty}),
    \]
    where
    \[
        \sigma^\infty_{\mathrm{ess}}(U_{\pm\infty})
        =
        \left\{
        e^{i\lambda}\ \middle|\ 
        A_{\pm\infty}(\lambda)=0,\ 
        \left(\frac{a^{\pm\infty}_{33}}{a^{\pm\infty}_{11}}\right)^2
        =
        \frac{a^{\pm\infty}_{12}a^{\pm\infty}_{21}}{a^{\pm\infty}_{32}a^{\pm\infty}_{23}}
        \right\}.
    \]
    \\
    (iii) \textbf{Absolutely continuous spectrum.}
    \\
    \[
        \sigma_{\mathrm{ac}}(U)
        =
        \sigma_{\mathrm{ac}}(U_{+\infty})\cup\sigma_{\mathrm{ac}}(U_{-\infty}),
    \]
    where
    \[
        \sigma_{\mathrm{ac}}(U_{\pm\infty})
        =
        \left\{
        e^{i\lambda}\ \middle|\ A_{\pm\infty}(\lambda)\neq0 ,\ |\mathrm{tr} (T_{\pm \infty }(\lambda))|\le 2
        \right\}.
    \]
\end{theorem}

\begin{proof}
    The expression of $\sigma^\infty_{\mathrm{ess}}(U_{\pm\infty})$ follows directly from \cref{cor:homogeneous} applied to the tail operators $U_{\pm\infty}$. Also, \cref{prop:multiplicity} explicitly gives the expression for $\sigma^\infty_{\mathrm{ess}}(U)$ and thus we get $\sigma^\infty_{\mathrm{ess}}(U)=\sigma^\infty_{\mathrm{ess}}(U_{+\infty})\cup\sigma^\infty_{\mathrm{ess}}(U_{-\infty})$.

    To determine the absolutely continuous spectrum of the homogeneous tails $U_{\pm\infty}$, we restrict to the non-exceptional regime $A_{\pm\infty}(\lambda)\neq0$, since $A_{\pm\infty}(\lambda)=0$ only corresponds to eigenvalues in $\sigma^{\infty}_{\mathrm{ess}}(U_{\pm\infty})$. Moreover, if $|\mathrm{tr}(T_{\pm\infty}(\lambda))|>2$, then every solution $\tilde\Psi(x)$ grows exponentially on one side, and hence one cannot construct a Weyl sequence. Hence, we assume $|\mathrm{tr}(T_{\pm\infty}(\lambda))|\le2$ and $A_{\pm\infty}(\lambda)\neq0$. Then $T_{\pm\infty}(\lambda)$ has an eigenvalue $\zeta _{\pm\infty}^{\pm }$ with $|\zeta _{\pm\infty}^{\pm }|=1$, so there exists a nontrivial solution $\tilde\Psi(x)$, which is uniformly bounded ($\|\tilde\Psi(x)\|$ is bounded by some constant for all $x$) of the recurrence $\tilde\Psi(x+1)=T_{\pm\infty}(\lambda)\tilde\Psi(x)$. Let $\chi_n$ be the indicator of the interval $\{-n,\dots,n\}$ and set $\tilde\Psi_n:=\frac{\chi_n\tilde\Psi}{\|\chi_n\tilde\Psi\|}.$ Since $\tilde\Psi_n$ fails to solve the eigenvalue equation only near the two endpoints of the interval, hence $\|(\iota U_{\pm\infty}\iota^{-1} -e^{i\lambda})\tilde{\Psi}_n\|=O(n^{-1/2})\to0$ as $n\to\infty$. Therefore, $\{\iota^{-1}\tilde{\Psi}_n\}$ is a Weyl sequence and $e^{i\lambda}\in\sigma_{\mathrm{ac}}(U_{\pm\infty})$.

    Finally, since \cref{eq:ess-union-tails} and \cref{eq:tail-spectrum-decomposition}, the essential spectrum is given by $\sigma_{\mathrm{ess}}^{\infty}(U)\cup\sigma_{\mathrm{ac}}(U)$ and the discrete spectrum can be characterized as $\sigma_{\mathrm{disc}}(U)=\sigma_{\mathrm{p}}(U)\setminus\bigl(\sigma_{\mathrm{ess}}^{\infty}(U)\cup\sigma_{\mathrm{ac}}(U)\bigr)$ from \cref{eq:spectrum-decomposition}. 
\end{proof}
\begin{remark}
    Note that the above decomposition in \cref{theorem:spectrum-decomposition} does not rule out eigenvalues embedded in the absolutely continuous band set, i.e.\ phases $e^{i\lambda}\in\sigma_{\mathrm p}(U)\cap\sigma_{\mathrm{ac}}(U)$.
\end{remark}
 Such embedded eigenvalues can be created by the finite defect region when one tail is propagating at $e^{i\lambda}$ and an additional degeneracy occurs at the defects ($A_x(\lambda)=0$ at some defect site), so that an exponentially decaying mode on one side can be matched across the defect region to a propagating mode on the other, producing an $\ell^2$ eigenstate at a band energy. Any embedded eigenvalues arising in this way have finite multiplicity. In contrast, if one homogeneous tail has a flat-band eigenvalue at $e^{i\lambda}$ while the other tail is propagating at the same phase, then $e^{i\lambda}$ is an embedded eigenvalue of infinite multiplicity. A complete characterization of the parameter regimes producing embedded eigenvalues is left for future work.

\subsection{Example: Homogeneous, one-defect, two-phase model}
We will consider some simple models—homogeneous, one-defect, and two-phase QWs—that have been extensively studied, as examples to illustrate the main results.
\subsubsection{Homogeneous model}
Let us consider the simplest model where the coin matrices are homogeneous on all position $x\in \mathbb{Z}$. The coin matrix is denoted as
\begin{equation*}
    \mathsf{C}_{x} =\begin{bmatrix}
        a^{(1,1)} & a^{(1,2)} & a^{(1,3)} \\
        a^{(2,1)} & a^{(2,2)} & a^{(2,3)} \\
        a^{(3,1)} & a^{(3,2)} & a^{(3,3)}
    \end{bmatrix} .
\end{equation*}
In this setting, the eigenvalue problem becomes translation-invariant, where Fourier analysis can be applied directly, significantly simplifies the analysis. The presence of a third internal degree of freedom (due to the three coin states) results in a more intricate spectral structure.

As a corollary of our main theorem, we can explicitly characterize the eigenvalues and eigenvectors in this homogeneous case. Notably, eigenvalues only arise when a specific algebraic condition on the coin matrix entries is satisfied, indicating that localization is a highly constrained phenomenon in homogeneous three-state QWs.
\begin{corollary}\label{cor:homogeneous}
    The eigenvalue of the time evolution of homogeneous model is
    \begin{equation*}
            \sigma_{p}(U)
            =
            \begin{cases}
                \left\{
                \dfrac{\overline{a^{(3,3)}}}{a^{(1,1)}}\,e^{i\Delta}
                \right\}
                 &
                \text{if }
                \displaystyle
                \left(\dfrac{a^{(3,3)}}{\overline{a^{(1,1)}}}\right)^{2}
                =
                \dfrac{a^{(1,2)} a^{(2,1)}}{\overline{a^{(3,2)}}\,\overline{a^{(2,3)}}},
                \\[1ex]
                \emptyset
                 & \text{otherwise,}
            \end{cases}
        \end{equation*} and the corresponding eigenvectors are $\iota ^{-1}\tilde{\Psi }$ where $\tilde{\Psi }$ satisfies that there exists $k_{x} \in \mathbb{C}$
    \begin{equation*}
        \ \tilde{\Psi } (x)=k_{x}\begin{bmatrix}
            a^{( 3,3)}\overline{a^{( 3,2)}} \\
            \overline{a^{( 1,1)}} a^{( 2,1)}
        \end{bmatrix}
    \end{equation*}
    for all $x\in \mathbb{Z}$. Here, $k_{x}\neq 0$ only for finitely many $x$.
    \begin{proof}
        From the main theorem, we can state that for $\lambda \in [0,2\pi )$, $e^{i\lambda } \neq \frac{\overline{a^{(3,3)}}}{a^{(1,1)}} e^{i\Delta }$, (case $A( \lambda ) \neq 0$) is an eigenvalue of $U$ if and only if $|\mathrm{tr} (T(\lambda ))| >2$ and there exists $\tilde{\Psi } :\mathbb{Z}\rightarrow \mathbb{C}^{2} \setminus \{\mathbf{0}\}$ such that
        \begin{equation*}
            \tilde{\Psi } (x)=T( \lambda )^{x}\tilde{\Psi } (0)
        \end{equation*}
        and
        \begin{equation*}
            T( \lambda )\tilde{\Psi } (0)=\zeta ^{< }\tilde{\Psi } (0)=\zeta ^{ >}\tilde{\Psi } (0)
        \end{equation*}
        for $\tilde{\Psi}$ to be exponentially decaying on both the left and the right. However, since $\zeta ^{< } \neq \zeta ^{ >} ,$ $\tilde{\Psi } (0)=\mathbf{0}$ and this yields $\tilde{\Psi } =\mathbf{0}$. Thus, only $\frac{\overline{a^{(3,3)}}}{a^{(1,1)}}e^{i\Delta }$ (case $A( \lambda ) =0$) could be an eigenvalue. \ From the main theorem, $e^{i\lambda } =\frac{\overline{a^{(3,3)}}}{a^{(1,1)}} e^{i\Delta }$ is an eigenvalue if and only if $\left(\frac{a^{( 3,3)}}{\overline{a^{( 1,1)}}}\right)^{2} =\frac{a^{( 1,2)} a^{( 2,1)}}{\overline{a^{( 3,2)}}\overline{a^{( 2,3)}}}$ and the corresponding eigenvector becomes \ $\iota ^{-1}\tilde{\Psi }$ such that for each $x\in \mathbb{Z}$, there exists $k_{x} \in \mathbb{Z}$ such that
        \begin{equation*}
            \tilde{\Psi } (x)=k_{x}\begin{bmatrix}
                a^{( 3,3)}\overline{a^{( 3,2)}} \\
                \overline{a^{( 1,1)}} a^{( 2,1)}
            \end{bmatrix}.
        \end{equation*}
    \end{proof}
\end{corollary}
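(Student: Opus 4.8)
The plan is to obtain this corollary as a direct specialization of \cref{main theorem} to the homogeneous setting $C_{-\infty}=C_{\infty}=C_{x}=C$ for all $x$. In this case the two asymptotic transfer matrices coincide with the bulk transfer matrix, $T_{-\infty}(\lambda)=T_{\infty}(\lambda)=T(\lambda)$, so that $A_{-\infty}(\lambda)=A_{\infty}(\lambda)=A(\lambda)$ and the spectral data $\zeta^{\pm}$, $\ket{v^{<}},\ket{v^{>}}$ are common to both ends. Since $A(\lambda)=0$ is equivalent to $a_{11}e^{i\lambda}=e^{i\Delta}\overline{a_{33}}$, i.e.\ $e^{i\lambda}=\frac{\overline{a_{33}}}{a_{11}}e^{i\Delta}$, I would split the argument into the regimes $A(\lambda)\neq0$ and $A(\lambda)=0$ exactly as the theorem is organized.

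First I would dispose of the regime $A(\lambda)\neq0$ (equivalently $e^{i\lambda}\neq\frac{\overline{a_{33}}}{a_{11}}e^{i\Delta}$). Here Condition~2 (Case~1) forces $\tilde{\Psi}(x_{+})$ into the $\zeta^{<}$-eigenspace of $T$, while Condition~3 (Case~1) forces $\tilde{\Psi}(x_{-})$ into the $\zeta^{>}$-eigenspace, and both branches require $|\mathrm{tr}(T)|>2$. Because every site carries the same coin, the whole sequence is generated by powers of the single matrix $T$, so $\tilde{\Psi}(x_{+})=T^{x_{+}-x_{-}}\tilde{\Psi}(x_{-})=(\zeta^{>})^{x_{+}-x_{-}}\tilde{\Psi}(x_{-})$ remains a $\zeta^{>}$-eigenvector. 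The key point is that $|\mathrm{tr}(T)|>2$ guarantees $\zeta^{<}\neq\zeta^{>}$, so the two eigenspaces are distinct one-dimensional subspaces meeting only at $\mathbf{0}$; hence $\tilde{\Psi}\equiv\mathbf{0}$ and no eigenvalue can arise. Equivalently, invoking \cref{cor:main}, the matching function collapses to $\chi(\lambda)=(\zeta^{>})^{x_{+}-x_{-}+1}\,\ket{v^{>}}\times\ket{v^{<}}$, which is nonzero precisely because eigenvectors for distinct eigenvalues are linearly independent.

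Next I would treat the only remaining candidate $e^{i\lambda}=\frac{\overline{a_{33}}}{a_{11}}e^{i\Delta}$, where $A(\lambda)=0$ at every site. Here I would read the eigenvector structure straight from Condition~1 (Case~2) together with the $A_{\pm\infty}(\lambda)=0$ branches of Conditions~2 and~3. Imposing Condition~1 (Case~2) at two consecutive sites forces $\tilde{\Psi}(x+1)$ to be proportional simultaneously to $[a^{(3,3)}\overline{a^{(3,2)}},\,\overline{a^{(1,1)}}a^{(2,1)}]^{\top}$ and to $[\overline{a^{(1,1)}}a^{(1,2)},\,a^{(3,3)}\overline{a^{(2,3)}}]^{\top}$. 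A one-line determinant computation shows these two vectors are parallel exactly when $\big(a^{(3,3)}/\overline{a^{(1,1)}}\big)^{2}=a^{(1,2)}a^{(2,1)}/(\overline{a^{(3,2)}}\,\overline{a^{(2,3)}})$, which is the stated algebraic condition. When it fails, the only consistent solution is $\tilde{\Psi}\equiv\mathbf{0}$, giving $\sigma_{p}(U)=\emptyset$; when it holds, the two forms coincide up to a scalar, and $\tilde{\Psi}(x)=k_{x}[a^{(3,3)}\overline{a^{(3,2)}},\,\overline{a^{(1,1)}}a^{(2,1)}]^{\top}$ with $k_{x}$ nonzero on finitely many sites satisfies all three conditions, yielding the claimed eigenvalue and eigenvector.

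I expect the main obstacle to lie in the $A(\lambda)\neq0$ regime: one must argue cleanly that the left- and right-decay requirements are genuinely incompatible. The delicate part is not a computation but the structural observation that, in the homogeneous limit, the two ends share the same transfer matrix, so a state decaying toward $-\infty$ (a $\zeta^{>}$-eigenvector) cannot simultaneously decay toward $+\infty$ (a $\zeta^{<}$-eigenvector) unless it vanishes—a fact hinging entirely on the strict separation $\zeta^{<}\neq\zeta^{>}$ forced by $|\mathrm{tr}(T)|>2$. Once this incompatibility is isolated, the surviving case $A(\lambda)=0$ is purely algebraic and the parallelism criterion falls out directly.
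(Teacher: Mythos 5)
Your proposal is correct and follows essentially the same route as the paper: in the $A(\lambda)\neq 0$ regime both use the fact that $|\mathrm{tr}(T)|>2$ forces $\zeta^{<}\neq\zeta^{>}$, so a state cannot lie simultaneously in both one-dimensional eigenspaces of the single bulk transfer matrix, and in the $A(\lambda)=0$ regime both reduce to the parallelism of the two vector forms, which is exactly the stated algebraic condition. The only cosmetic difference is that the paper phrases the first regime as $T(\lambda)\tilde{\Psi}(0)=\zeta^{<}\tilde{\Psi}(0)=\zeta^{>}\tilde{\Psi}(0)$ at a single site rather than propagating from $x_{-}$ to $x_{+}$, which is the same incompatibility.
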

This result confirms that localization does not generally occur in homogeneous three-state QWs unless specific algebraic conditions are satisfied. Even when an eigenvalue exists, the corresponding eigenstates are fundamentally different from those obtained via the transfer matrix method in the inhomogeneous case: they are not exponentially decaying but instead are compactly supported (nonzero only at finitely many positions). This observation aligns with earlier findings in the literature, such as those derived via Fourier analysis in \cite{Ko2016,Stefanak2014StabilityPointSpectrum}, and further motivates the study of inhomogeneous models, where localization, characterized by exponentially decaying eigenstates, emerges more generically.

\subsubsection{One-defect model}
The one-defect model introduces a localized perturbation to an otherwise homogeneous QW. Specifically, the coin matrix is modified at a single site, typically taken to be the origin $x = 0$, while the coin is kept uniform elsewhere. This minimal breaking of translational invariance is sufficient to drastically alter the spectral and dynamical properties of the system. The one-defect model is particularly important in the study of QWs because it serves as a simple yet rich example of how inhomogeneities lead to localization and bound states. The one-defect model is defined such that $\mathsf{C}_{+\infty } =\mathsf{C}_{-\infty } =\mathsf{C}$, and
\begin{equation*}
    \mathsf{C}_{x} =\begin{cases}
        \mathsf{C}_{0} & x=0,     \\
        \mathsf{C}     & x\neq 0,
    \end{cases}
\end{equation*}
where we denote each element of $\mathsf{C}$ as
\begin{equation*}
    \mathsf{C}=\begin{bmatrix}
        a^{(1,1)} & a^{(1,2)} & a^{(1,3)} \\
        a^{(2,1)} & a^{(2,2)} & a^{(2,3)} \\
        a^{(3,1)} & a^{(3,2)} & a^{(3,3)}
    \end{bmatrix},\ e^{i\Delta } =\det \mathsf{C}.
\end{equation*}
Also, we write $T_{\pm \infty }( \lambda ) =T( \lambda )$, \ $\zeta _{\pm \infty }^{< } =\zeta ^{< }$ and $\zeta _{\pm \infty }^{ >} =\zeta ^{ >}$ for one-defect model. From a physical standpoint, it mimics a quantum walker encountering a single scatterer or potential barrier. It is also widely used in the context of quantum search algorithms and quantum state transfer, where the defect plays the role of a target or trap site. From our main theorem, we can precisely characterize when localization occurs by identifying discrete eigenvalues of the time evolution operator $U$. These eigenvalues correspond to states that are exponentially localized around the defect. The transfer matrix formalism developed in this paper enables us to reduce the problem to verifying whether the matching conditions at the defect site (involving $T_0(\lambda), T_{\infty}(\lambda)$, and their eigenvectors) admit a normalizable solution.
\begin{corollary}\label{cor:one-defect}
    For $\lambda \in [0,2\pi )$, $e^{i\lambda }$ is an eigenvalue of $U$ if and only if there exists $\tilde{\Psi } :\mathbb{Z}\rightarrow \mathbb{C}^{2}$ such that Conditions 1 and 2 are all satisfied.
            \\[1em]
        \noindent$\bullet $ Condition 1.
        \vspace{0.2em}

        \noindent\ \ $\circ$ Case 1. $A_{0}( \lambda ) \neq 0$:
        \begin{equation*}
            \tilde{\Psi } (1)=T_{0}( \lambda )\tilde{\Psi } (0).
        \end{equation*}
        \noindent\ \ $\circ$ Case 2. $A_{0}( \lambda ) =0$: $\exists k_{1} ,k_{2} \in \mathbb{C}$,
        \begin{equation*}
            \tilde{\Psi } (0)=k_{1}\begin{bmatrix}
                a_{0}^{( 3,3)}\overline{a_{0}^{( 3,2)}} \\
                \overline{a_{0}^{( 1,1)}} a_{0}^{( 2,1)}
            \end{bmatrix} ,\ \tilde{\Psi } (1)=k_{2}\begin{bmatrix}
                \overline{a_{0}^{( 1,1)}} a_{0}^{( 1,2)} \\
                a_{0}^{( 3,3)}\overline{a_{0}^{( 2,3)}}
            \end{bmatrix} .
        \end{equation*}
        \\[1em]
    $\bullet$ Condition 2.
        \vspace{0.2em}

        \noindent\ \ $\circ $ Case 1. $A( \lambda ) \neq 0$: $|\mathrm{tr} (T_{\infty } (\lambda ))| >2$ and
        \begin{gather*}
            T( \lambda )\tilde{\Psi } (1)=\zeta ^{< }\tilde{\Psi } (1),\\
            T( \lambda )\tilde{\Psi } (0)=\zeta ^{ >}\tilde{\Psi } (0 ).
        \end{gather*}
        \noindent\ \ $\circ $ Case 2. $A( \lambda ) =0$: $\exists k_{1} ,k_{2} \in \mathbb{C} ,$
        \begin{equation*}
            \ \tilde{\Psi } (0)=k_{1}\begin{bmatrix}
                \overline{a^{( 1,1)}} a^{( 1,2)} \\
                a^{( 3,3)}\overline{a^{( 2,3)}}
            \end{bmatrix} ,\ \tilde{\Psi } (1)=k_{2}\begin{bmatrix}
                a^{( 3,3)}\overline{a^{( 3,2)}} \\
                \overline{a^{( 1,1)}} a^{( 2,1)}
            \end{bmatrix}
        \end{equation*}
        and for $x\neq 0,1$, there exists $k_x\in \mathbb{C}$ such that
        \begin{equation*}
            \tilde{\Psi } (x)=\begin{cases}
                k_x\begin{bmatrix}
                       a^{( 3,3)}\overline{a^{( 3,2)}} \\
                       \overline{a^{( 1,1)}} a^{( 2,1)}
                   \end{bmatrix}, & \left(\frac{a^{( 3,3)}}{\overline{a^{( 1,1)}}}\right)^{2} =\frac{a^{( 1,2)} a^{( 2,1)}}{\overline{a^{( 3,2)}}\overline{a^{( 2,3)}}} , \\
                \mathbf{0},                                             & otherwise.
            \end{cases}
        \end{equation*}
        Here, $k_x\neq 0$ only for finitely many $x$.
\end{corollary}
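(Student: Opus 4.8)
The plan is to derive \cref{cor:one-defect} as a direct specialization of \cref{main theorem} to the one-defect geometry. I would realize the one-defect model as a two-phase model with a finite defect window by taking $x_{+}=1$ and some $x_{-}<0$, so that the interval $[x_{-},x_{+})$ contains the single genuine defect site $x=0$, while every other site in this window, together with both tails, carries the homogeneous coin $C$. The decisive simplification is that $C_{\infty}=C_{-\infty}=C$ forces $T_{\infty}(\lambda)=T_{-\infty}(\lambda)=T(\lambda)$, so the two asymptotic transfer matrices share the same trace, the same eigenvalues $\zeta^{<},\zeta^{>}$, and the same eigenvectors $\ket{v^{<}},\ket{v^{>}}$. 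Consequently the two trace inequalities $|\mathrm{tr}(T_{\pm\infty})|>2$ appearing in Conditions 2 and 3 of the theorem collapse into the single requirement $|\mathrm{tr}(T_{\infty}(\lambda))|>2$ of the corollary.

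Next I would reduce the interface conditions. On the defect window, \cref{main theorem} Condition 1 applies site by site; at the genuine defect $x=0$ it yields exactly Condition 1 of the corollary (Case 1 or Case 2 according to whether $A_{0}(\lambda)$ vanishes), while at the remaining homogeneous negative sites the recursion is governed by $T(\lambda)$ itself. I then invoke the anchoring conditions of the theorem: Condition 3 Case 1 requires $\tilde{\Psi}(x_{-})$ to be the $\zeta^{>}$-eigenvector of $T$, and since $\tilde{\Psi}$ propagates by $T$ across the homogeneous negative sites, this eigendirection is preserved up to scalars, giving $T(\lambda)\tilde{\Psi}(0)=\zeta^{>}\tilde{\Psi}(0)$. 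Dually, Condition 2 Case 1 anchors $\tilde{\Psi}(x_{+})=\tilde{\Psi}(1)$ in the $\zeta^{<}$-eigenspace, i.e. $T(\lambda)\tilde{\Psi}(1)=\zeta^{<}\tilde{\Psi}(1)$. Together with the link $\tilde{\Psi}(1)=T_{0}(\lambda)\tilde{\Psi}(0)$ these reproduce Condition 2 Case 1 of the corollary, and the decay of $\tilde{\Psi}$ toward $\pm\infty$ is guaranteed by $|\zeta^{<}|<1<|\zeta^{>}|$, which follows from $|\det T|=1$ (\cref{prop:det}) combined with $|\mathrm{tr}(T)|>2$.

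For the degenerate branches ($A(\lambda)=0$ and/or $A_{0}(\lambda)=0$) I would transcribe the explicit eigenvector forms of \cref{main theorem} with $C_{\infty}$ and $C_{-\infty}$ both replaced by $C$. Condition 2 Case 2 and Condition 3 Case 2 of the theorem then produce the same pair of special two-vectors anchored at $x=1$ and $x=0$, and the tail weights $k_{x}$ can be nonzero only when the algebraic resonance $\left(a^{(3,3)}/\overline{a^{(1,1)}}\right)^{2}=a^{(1,2)}a^{(2,1)}/(\overline{a^{(3,2)}}\,\overline{a^{(2,3)}})$ holds, and otherwise the tails vanish. This matches Condition 2 Case 2 of the corollary verbatim, while Condition 1 Case 2 is the unchanged defect-site condition from the theorem.

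The main obstacle I anticipate is the bookkeeping at the interface sites $x=0$ and $x=1$: one must check that the constraints imposed by the defect-window recursion (Condition 1) and by the two homogeneous tails (Conditions 2 and 3) are mutually compatible rather than overdetermining $\tilde{\Psi}$. In particular, because the convention $x_{-}<0$ forces the defect window to strictly contain homogeneous negative sites, I must verify that the extra transfer steps $T(\lambda)$ there impose no condition beyond alignment with the $\zeta^{>}$-eigendirection, so that the final statement is independent of the auxiliary choice of $x_{-}$ and depends only on the genuine defect at $x=0$.
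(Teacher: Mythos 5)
Your proposal is correct and matches the paper's intended argument: the paper derives \cref{cor:one-defect} as a direct specialization of \cref{main theorem} to the one-defect geometry (taking $x_{+}=1$, $x_{-}=-1$, with $T_{\pm\infty}=T$), exactly as you do. Your closing check — that the homogeneous sites forced into the window by $x_{-}<0$ only propagate the $\zeta^{>}$-eigendirection (and, in the $A(\lambda)=0$ branch, that compatibility of the two special vectors at those sites is governed by the stated resonance condition) — is the right point to verify and it goes through.
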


This structure allows us to both numerically and analytically determine the presence of localized eigenstates. In particular, the one-defect model admits two distinct types of eigenstates, depending on the value of the eigenphase $\lambda$. When $A_x(\lambda) \neq 0$, the corresponding eigenstates exhibit exponential decay away from the defect site, representing true localized states in the usual physical sense. On the other hand, when $A_x(\lambda) = 0$, the eigenstates, if they exist, are compactly supported, being nonzero only at a finite number of positions. Thus, both exponentially decaying and finitely supported eigenstates can arise in the one-defect model, depending on the algebraic properties of the coin matrices and the eigenvalue under consideration. This rich spectral structure demonstrates how even a single defect can induce localization in various forms, and highlights the flexibility of our transfer matrix approach in capturing both behaviors.

\begin{figure*}[t!]
    \centering
    \begin{minipage}[b]{0.45\linewidth}
        \centering
        \includegraphics[clip, width=0.9\textwidth]{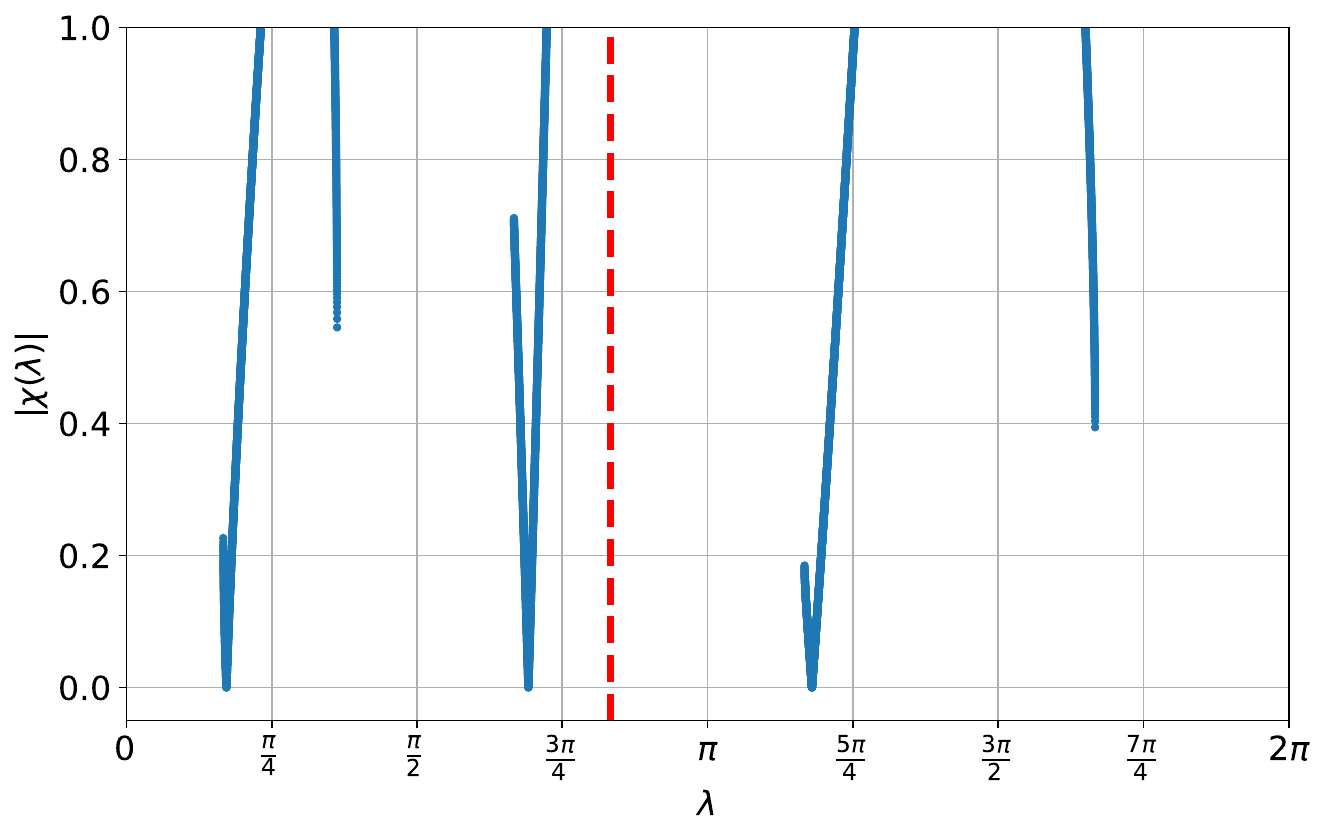}
        \subcaption{$\theta=\frac{\pi}{12}$}
    \end{minipage}
    \begin{minipage}[b]{0.45\linewidth}
        \centering
        \includegraphics[clip, width=0.9\textwidth]{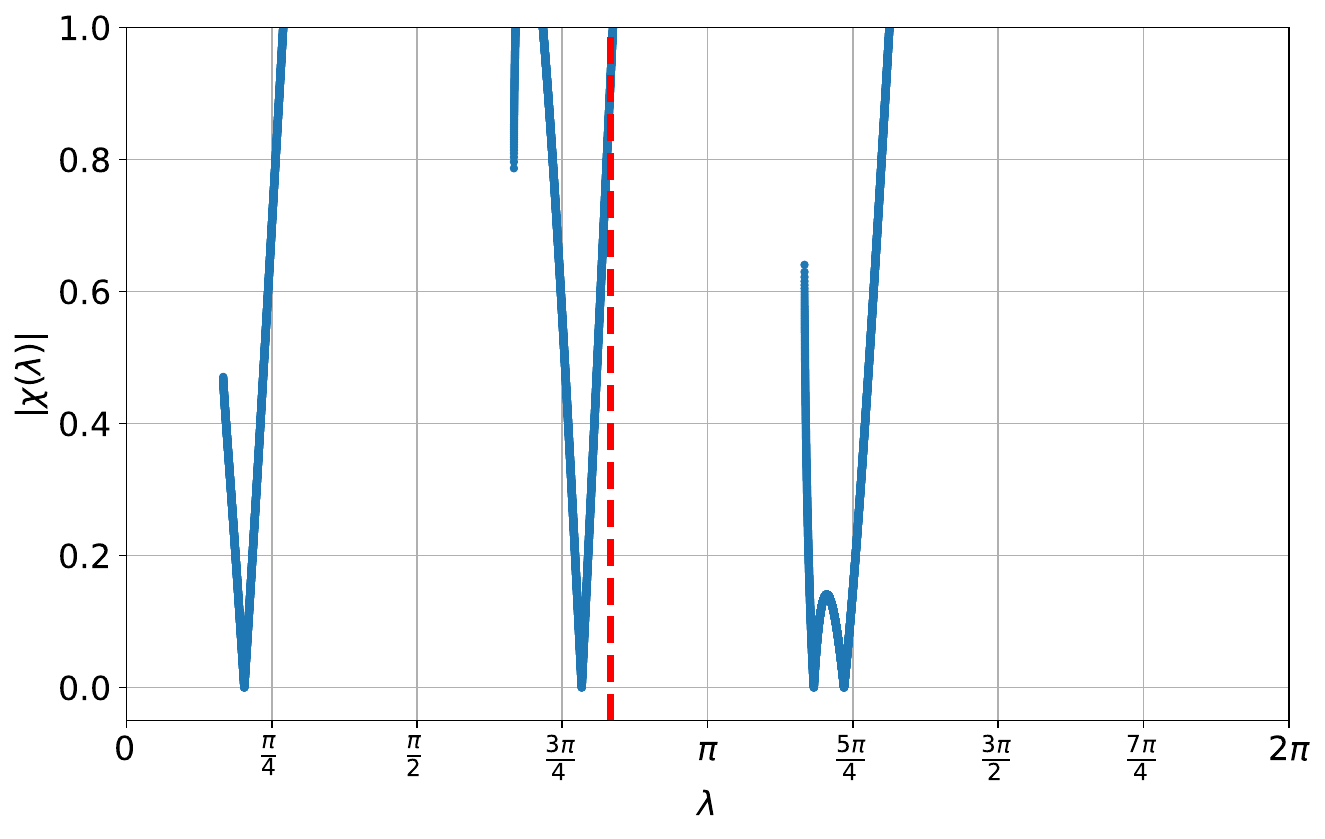}
        \subcaption{$\theta=\frac{3\pi}{12}$}
    \end{minipage}
    \begin{minipage}[b]{0.45\linewidth}
        \centering
        \includegraphics[clip, width=0.9\textwidth]{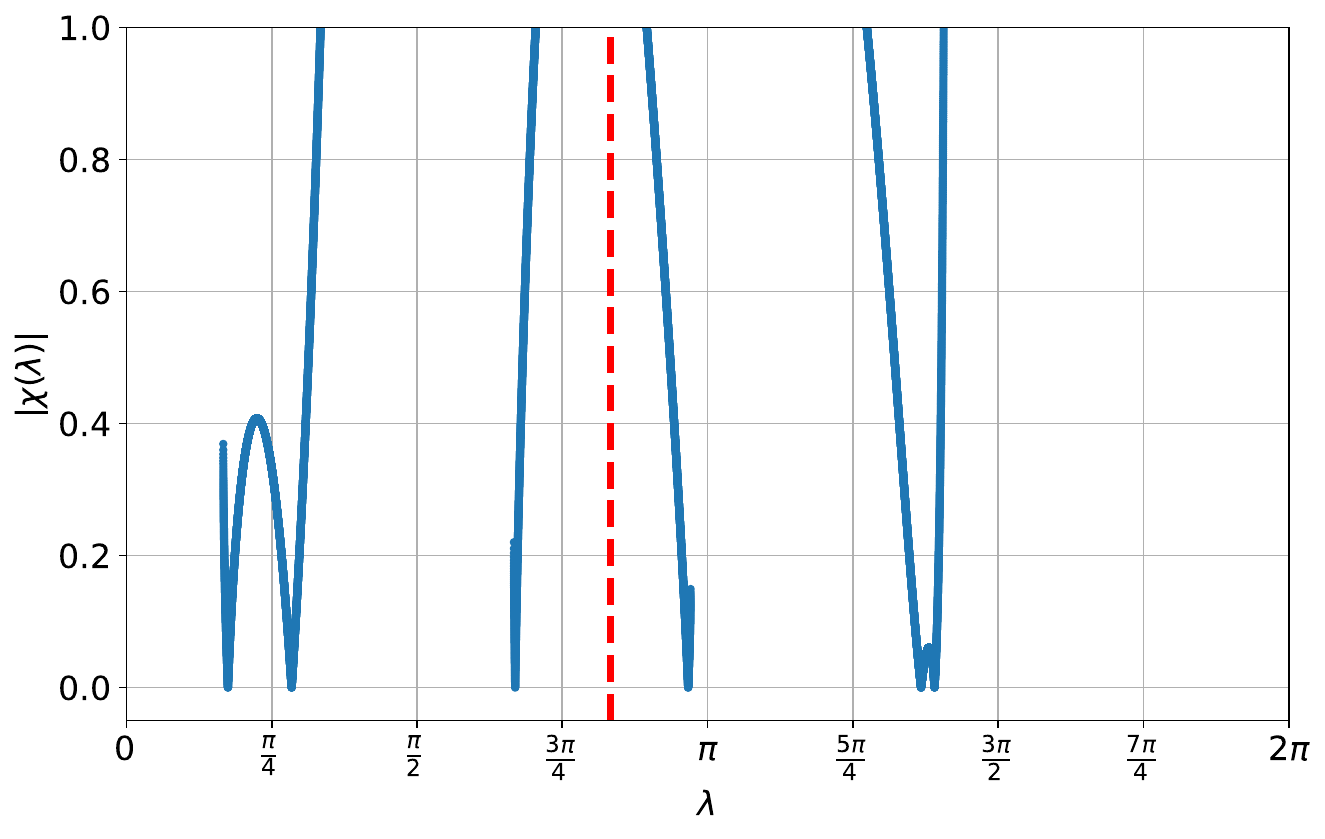}
        \subcaption{$\theta=\frac{7\pi}{12}$}
    \end{minipage}
    \begin{minipage}[b]{0.45\linewidth}
        \centering
        \includegraphics[clip, width=0.9\textwidth]{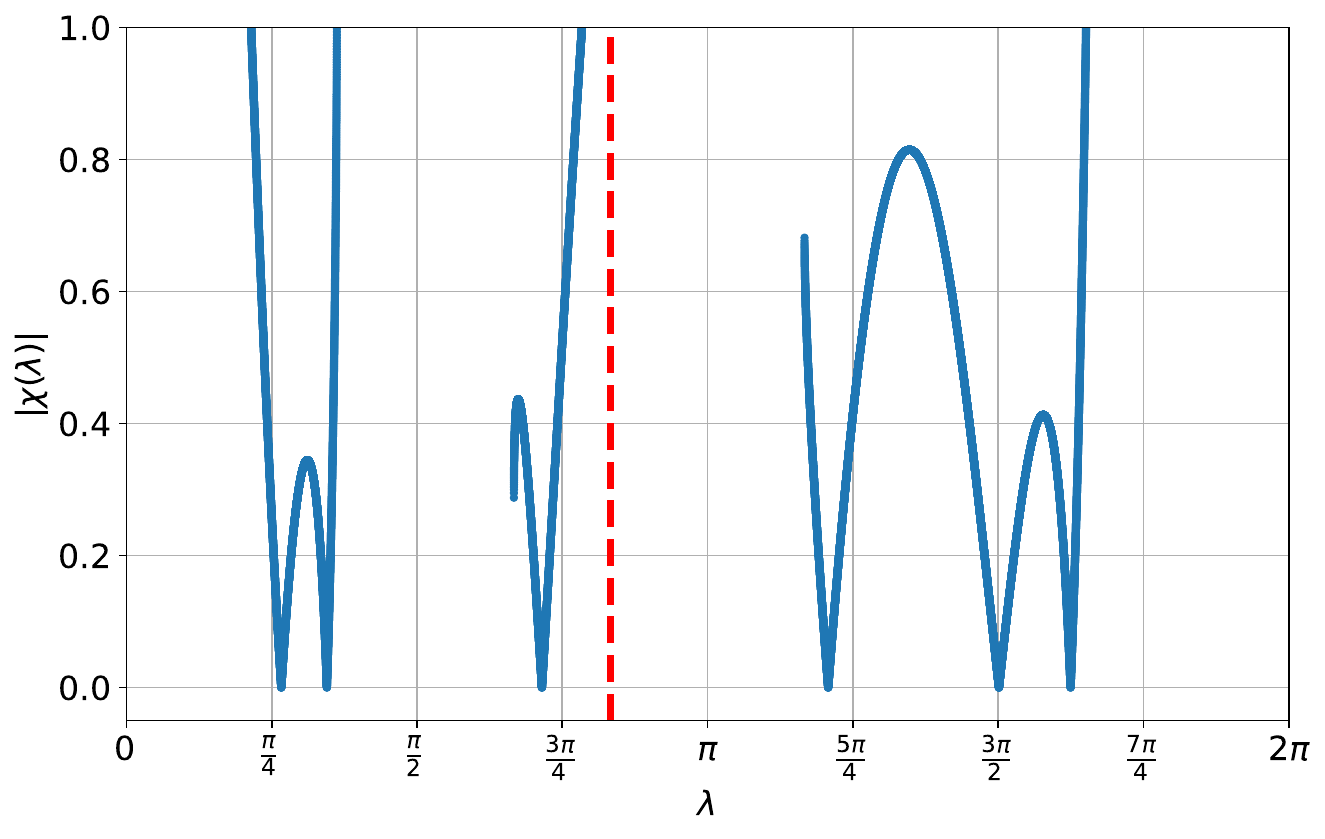}
        \subcaption{$\theta=\frac{11\pi}{12}$}
    \end{minipage}
    \captionsetup{justification=raggedright,singlelinecheck=false}
    \caption{The plot of $|\chi(\lambda)|$ from \cref{cor:main} for different values of $\theta$ in the one-defect Fourier walk. Only values of $\chi(\lambda)$ for $\lambda\in\Lambda$ are plotted, and the red line indicates $\lambda\in\Lambda_0$. By \cref{cor:main}, we see that $\lambda$ is an eigenvalue when $|\chi(\lambda)|=0$. Numerically, we can confirm that (a) has three, (b) has four, and (c) and (d) each have six eigenvalues of $U$ causing localization.}
    \label{fig:1}
\end{figure*}

\begin{figure*}[t!]
    \centering
    \begin{minipage}[b]{0.45\linewidth}
        \centering
        \includegraphics[clip, width=0.9\textwidth]{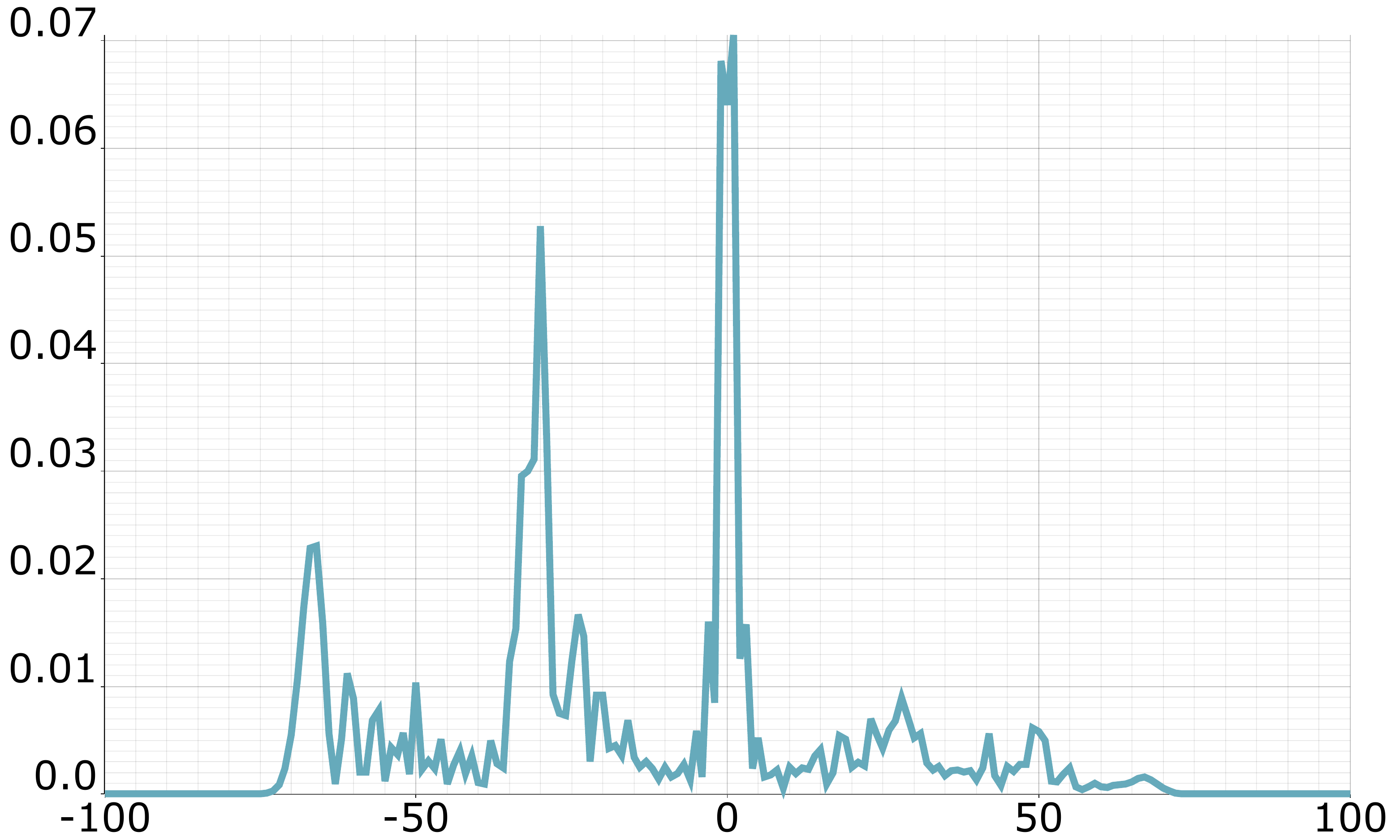}
        \subcaption{$\theta=\frac{\pi}{12}$}
    \end{minipage}
    \begin{minipage}[b]{0.45\linewidth}
        \centering
        \includegraphics[clip, width=0.9\textwidth]{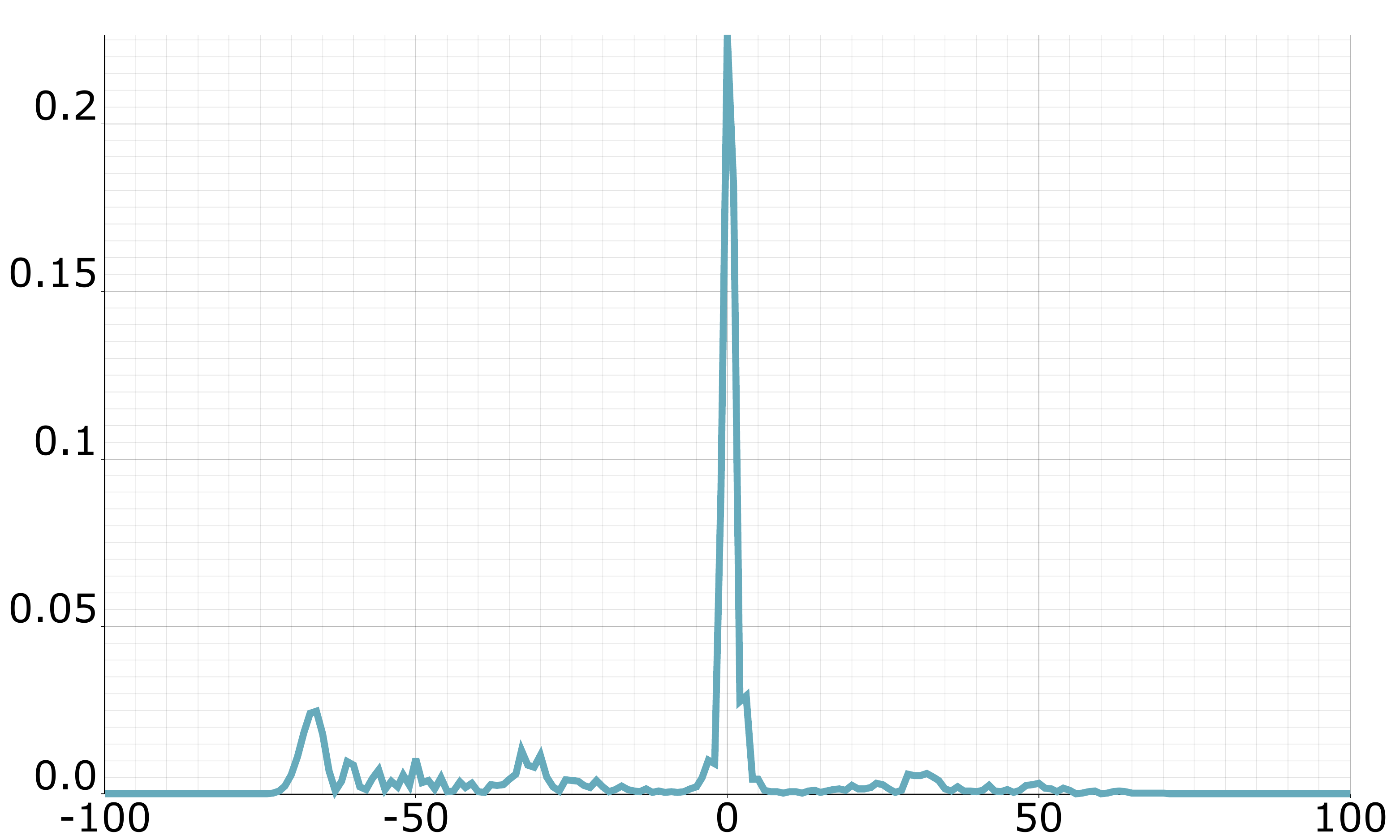}
        \subcaption{$\theta=\frac{3\pi}{12}$}
    \end{minipage}
    \begin{minipage}[b]{0.45\linewidth}
        \centering
        \includegraphics[clip,width=0.9\textwidth]{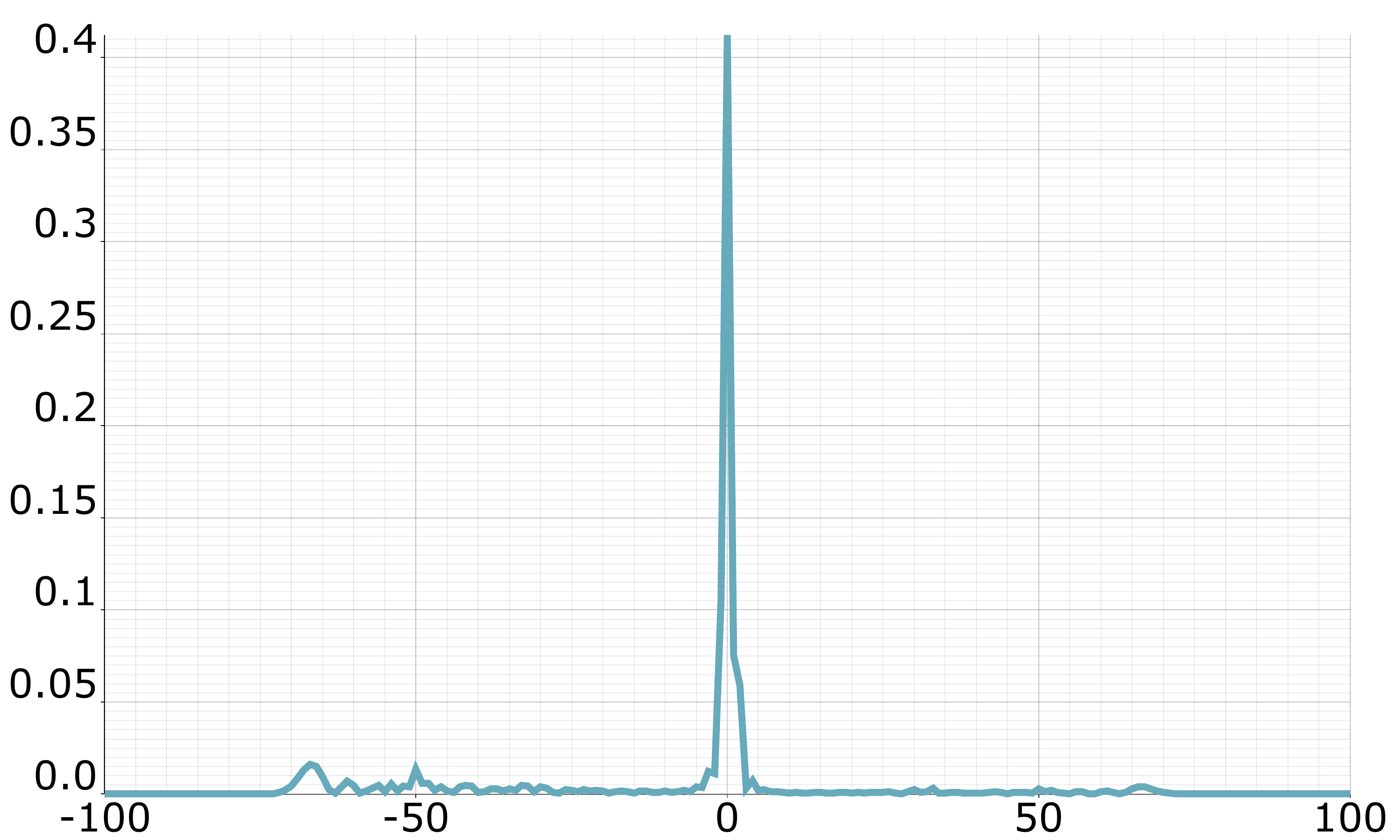}
        \subcaption{$\theta=\frac{7\pi}{12}$}
    \end{minipage}
    \begin{minipage}[b]{0.45\linewidth}
        \centering
        \includegraphics[clip, width=0.9\textwidth]{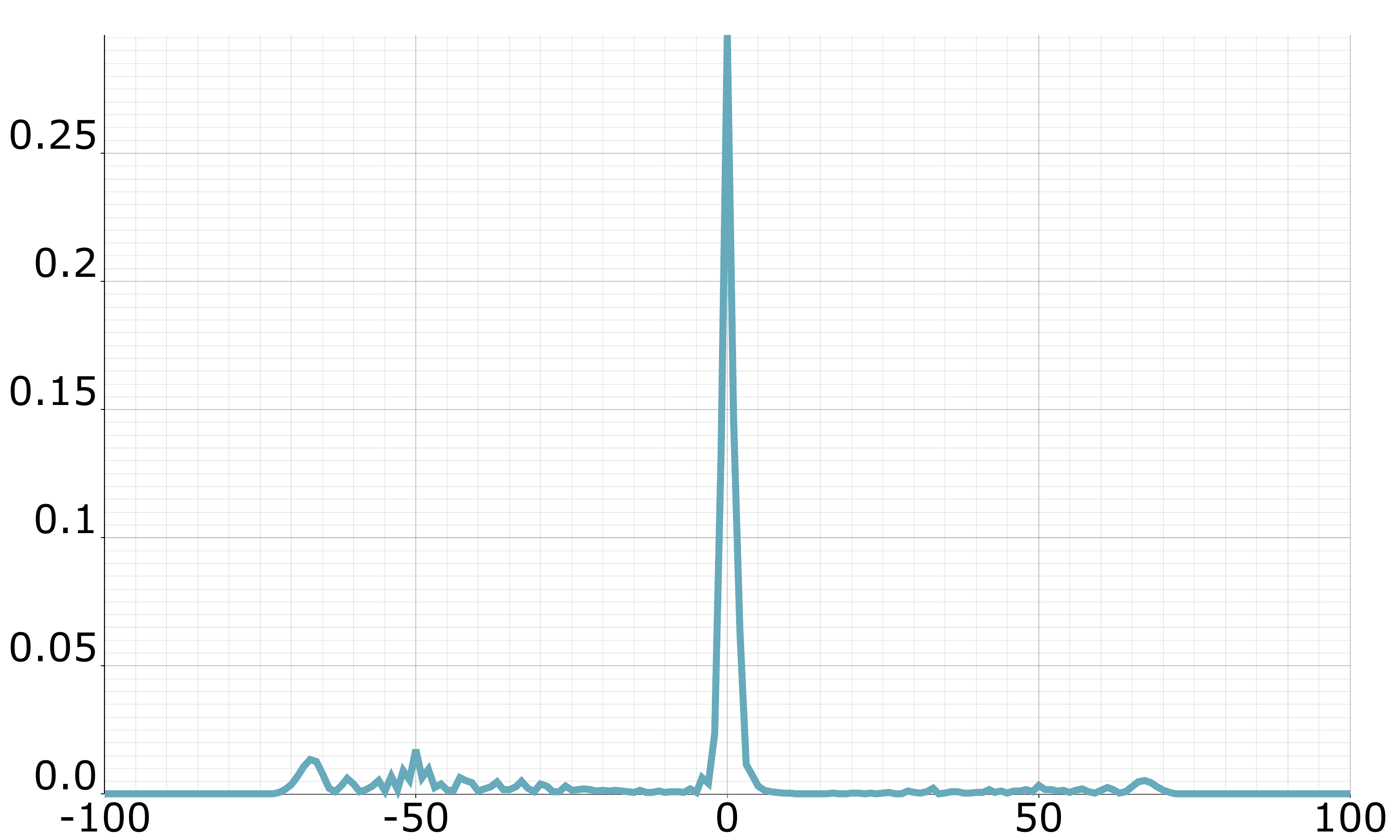}
        \subcaption{$\theta=\frac{11\pi}{12}$}
    \end{minipage}
    \captionsetup{justification=raggedright}
    \caption{Probability distribution $\mu _{t}^{( \Psi _{0})} (x)$ for the one-defect Fourier walk with different values of $\theta$ at time $100$ with initial state $\Psi_0(0)=[\frac{1}{\sqrt{3}},\frac{i}{\sqrt{3}},\frac{1}{\sqrt{3}}]^t$ and $\Psi_0(x)=\mathbf{0}$ for $x\neq 0$. All of (a-d) have a peak around the origin, which indicates localization. The result shown in \cref{fig:1} corresponds to these localization.
    \label{fig:2}}
\end{figure*}

\subsubsection{Two-phase model}
Lastly, we consider the two-phase model, where the time evolution differs between the left and right halves of the lattice. That is, the coin matrix takes two distinct values depending on whether $x < 0$ or $x \geqslant 0$. This model has been widely studied in the context of quantum localization and topological insulators, where phase boundaries can give rise to localized edge states. The two-phase model is defined such that $\mathsf{C}_{+\infty } \neq \mathsf{C}_{-\infty }$, and
\begin{equation*}
    \mathsf{C}_{x} =\begin{cases}
        \mathsf{C}_{+\infty } & x \geqslant 0, \\
        \mathsf{C}_{-\infty } & x< 0.
    \end{cases}
\end{equation*}
As a direct consequence of our main theorem, we obtain the following spectral condition for the emergence of localized modes at the interface between two distinct coin matrices. Physically, this result captures how the abrupt change in coin matrices across the origin gives rise to bound states localized near the phase boundary—analogous to edge states in topological insulators. The corollary provides a precise criterion for the existence of such localized eigenstates, which may either decay exponentially or be confined to a finite region, depending on the algebraic structure of the transfer matrices.

\begin{corollary}\label{cor:two-phase}
    For $\lambda \in [0,2\pi )$, $e^{i\lambda }$ is an eigenvalue of $U$ if and only if there exists $\tilde{\Psi } :\mathbb{Z}\rightarrow \mathbb{C}^{2}$ such that Conditions 1 and 2 are satisfied.
            \\[1em]
        \noindent$\bullet$ Condition 1.
        \vspace{0.2em}

        \noindent\ \ $\circ$ Case 1. $A_{\infty }( \lambda ) \neq 0$:
        \begin{equation*}
            T_{\infty }( \lambda )\tilde{\Psi } (0)=\zeta _{\infty }^{< }\tilde{\Psi } (0).
        \end{equation*}
        \noindent\ \ $\circ $ Case 2. $A_{\infty }( \lambda ) =0$: There exists $k\in \mathbb{C}$ such that
        \begin{equation*}
            \tilde{\Psi } (0)=k\begin{bmatrix}
                a_{\infty }^{( 3,3)}\overline{a_{\infty }^{( 3,2)}} \\
                \overline{a_{\infty }^{( 1,1)}} a_{\infty }^{( 2,1)}
            \end{bmatrix} ,
        \end{equation*}
        and for $x >0,$ there exists $k_x\in \mathbb{C}$ such that
        \begin{equation*}
            \tilde{\Psi } (x)=\begin{cases}
                k_x\begin{bmatrix}
                       a_{\infty }^{( 3,3)}\overline{a_{\infty }^{( 3,2)}} \\
                       \overline{a_{\infty }^{( 1,1)}} a_{\infty }^{( 2,1)}
                   \end{bmatrix} & \left(\frac{a_{\infty }^{( 3,3)}}{\overline{a_{\infty }^{( 1,1)}}}\right)^{2} =\frac{a_{\infty }^{( 1,2)} a_{\infty }^{( 2,1)}}{\overline{a_{\infty }^{( 3,2)}}\overline{a_{\infty }^{( 2,3)}}} , \\
                \mathbf{0}                                                        & otherwise.
            \end{cases}
        \end{equation*}
        \\[1em]
        \noindent$\bullet$ Condition 2.
        \vspace{0.2em}

        \noindent\ \ $\circ $ Case 1. $A_{-\infty }( \lambda ) \neq 0$: $|\mathrm{tr} (T_{-\infty } (\lambda ))| >2$
        \begin{equation*}
            T_{-\infty }( \lambda )\tilde{\Psi } (0)=\zeta _{-\infty }^{ >}\tilde{\Psi } (0).
        \end{equation*}
        \noindent\ \ $\circ $ Case 2. $A_{-\infty }( \lambda ) =0$: There exists $k\in \mathbb{C} ,$ such that
        \begin{equation*}
            \ \tilde{\Psi } (0)=k\begin{bmatrix}
                \overline{a_{-\infty }^{( 1,1)}} a_{-\infty }^{( 1,2)} \\
                a_{-\infty }^{( 3,3)}\overline{a_{-\infty }^{( 2,3)}}
            \end{bmatrix} ,
        \end{equation*}
        and for $x< 0,$ there exists $k_x\in \mathbb{C}$, such that
    \begin{equation*}
        \ \tilde{\Psi } (x)=\begin{cases}
            k_x\begin{bmatrix}
                   a_{-\infty }^{( 3,3)}\overline{a_{-\infty }^{( 3,2)}} \\
                   \overline{a_{-\infty }^{( 1,1)}} a_{-\infty }^{( 2,1)}
               \end{bmatrix} & \left(\frac{a_{-\infty }^{( 3,3)}}{\overline{a_{-\infty }^{( 1,1)}}}\right)^{2} =\frac{a_{-\infty }^{( 1,2)} a_{-\infty }^{( 2,1)}}{\overline{a_{-\infty }^{( 3,2)}}\overline{a_{-\infty }^{( 2,3)}}} , \\
            \mathbf{0}                                                         & otherwise.
        \end{cases}
    \end{equation*}
\end{corollary}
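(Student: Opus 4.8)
The plan is to obtain \cref{cor:two-phase} as the specialization of \cref{main theorem} to the case in which the defect window is empty. Concretely, I would set $x_{+}=0$, so that $C_{x}=C_{\infty}=C_{+\infty}$ for every $x\geq 0$, and note that the entire left half $x<0$ is governed by the single homogeneous coin $C_{-\infty}$. In the language of \cref{main theorem} this means that the intermediate region $[x_{-},x_{+})$ carries only copies of $C_{-\infty}$, so every transfer matrix appearing in Condition~1 equals $T_{-\infty}(\lambda)$ and the product $T_{[x_{-},x_{+}-1]}$ acts on the $\zeta_{-\infty}^{>}$-eigenspace as a scalar. Thus Condition~1 of \cref{main theorem} degenerates to a trivial propagation that leaves $\tilde{\Psi}$ proportional to $\ket{v_{-\infty}^{>}}$ throughout $x\leq 0$.

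With this reduction, the right-hand requirement (Condition~2 of \cref{main theorem}, evaluated at $x_{+}=0$) becomes exactly Condition~1 of the corollary: in the generic case $A_{\infty}(\lambda)\neq 0$ it reads $T_{\infty}(\lambda)\tilde{\Psi}(0)=\zeta_{\infty}^{<}\tilde{\Psi}(0)$, the choice of the modulus-$<1$ eigenvalue $\zeta_{\infty}^{<}$ being forced by the requirement that $\tilde{\Psi}(x)=T_{\infty}(\lambda)^{x}\tilde{\Psi}(0)$ decay as $x\to+\infty$. Symmetrically, the left-hand requirement (Condition~3 of \cref{main theorem}) gives $T_{-\infty}(\lambda)\tilde{\Psi}(x_{-})=\zeta_{-\infty}^{>}\tilde{\Psi}(x_{-})$, and propagating this eigenvector forward through the trivial middle region yields the same relation at the interface, $T_{-\infty}(\lambda)\tilde{\Psi}(0)=\zeta_{-\infty}^{>}\tilde{\Psi}(0)$, which is Condition~2 of the corollary. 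Here the modulus-$>1$ eigenvalue $\zeta_{-\infty}^{>}$ is the one that produces decay under the backward iteration $\tilde{\Psi}(x)=T_{-\infty}(\lambda)^{x}\tilde{\Psi}(0)$ as $x\to-\infty$. The trace inequalities $|\mathrm{tr}(T_{\pm\infty}(\lambda))|>2$ are inherited from the necessary condition $e^{i\lambda}\in\Lambda$ via \cref{prop:necessary}, which guarantees $|\zeta^{<}|<1<|\zeta^{>}|$ on both sides so that genuine exponential decay is available; simultaneous solvability of the two interface eigenvector equations by a single nonzero $\tilde{\Psi}(0)$ is precisely the collinearity $\ket{v_{-\infty}^{>}}\parallel\ket{v_{\infty}^{<}}$ underlying $\chi(\lambda)=0$ in \cref{cor:main}.

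For the degenerate phases $A_{\infty}(\lambda)=0$ and $A_{-\infty}(\lambda)=0$, the transfer matrix cannot be formed and I would instead argue directly from the $A_{x}(\lambda)=0$ relations \cref{eq:zero_condition} of \cref{prop:condition}. These pin $\tilde{\Psi}(0)$ to the explicit vectors displayed in Case~2, and they force the tail into the homogeneous region to satisfy a two-term recursion that admits a nonzero square-summable (indeed finitely supported) solution precisely when the algebraic identity $\left(a_{\infty}^{(3,3)}/\overline{a_{\infty}^{(1,1)}}\right)^{2}=a_{\infty}^{(1,2)}a_{\infty}^{(2,1)}/(\overline{a_{\infty}^{(3,2)}}\,\overline{a_{\infty}^{(2,3)}})$ holds (and its analogue at $-\infty$); otherwise the tail must vanish. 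This reproduces the case split of the corollary. Finally, square-summability of the assembled $\tilde{\Psi}$ on $\mathbb{Z}$, together with \cref{main corollary}, shows that $\iota^{-1}\tilde{\Psi}\in\mathcal{H}\setminus\{\mathbf{0}\}$ is the corresponding eigenvector.

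I expect the main obstacle to be the $A_{\pm\infty}(\lambda)=0$ analysis: since no transfer matrix exists there, one cannot invoke the spectral dichotomy of \cref{prop:necessary}, and the existence of a normalizable tail must be decided by hand from \cref{eq:zero_condition}. Establishing that a nonzero square-summable tail survives exactly under the stated algebraic condition, and is otherwise identically zero, is the delicate step; the remainder is a routine transcription of \cref{main theorem} under $x_{+}=0$ with an empty defect window.
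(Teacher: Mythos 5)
Your proposal is correct and follows essentially the same route as the paper: the paper proves \cref{cor:two-phase} simply as the specialization of \cref{main theorem} to the two-phase coin sequence (taking $x_{+}=0$ and a minimal left window so the defect region is effectively empty), with the interface conditions at $x=0$ obtained by propagating the boundary conditions through the trivial middle region and the degenerate $A_{\pm\infty}(\lambda)=0$ cases handled via \cref{eq:zero_condition} exactly as you describe. Your explicit tracking of the trace condition on both sides is, if anything, slightly more careful than the corollary as printed.
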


This characterization reveals that, like the one-defect model, the two-phase model also supports both exponentially localized eigenstates and compactly supported eigenstates. When both $A_{+\infty}(\lambda)$ and $A_{-\infty}(\lambda)$ are nonzero, the resulting eigenstates decay exponentially toward $x \to \pm\infty$, forming physically localized edge states centered at the phase boundary. In contrast, when either $A_{\pm\infty}(\lambda) = 0$, eigenstates—if they exist—must be compactly supported and are confined to a finite region. The coexistence of these two types of localized modes underscores the versatility of the two-phase model and illustrates the power of the transfer matrix formalism in capturing both spectral and spatial localization features.

\section{Numerical analysis on Fourier walk\label{sec:4}}
In this section, we demonstrate how to numerically find the eigenvalues of a three-state QW. In previous studies, the eigenvalue analysis of the generalized Grover walk has been conducted \cite{Kiumi2022-ts}. The Grover walk model exhibits a symmetry that simplifies eigenvalue analysis significantly. In contrast, this article focuses on the eigenvalue analysis of QWs with general coin matrices, allowing us to numerically determine the exact eigenvalues of non-Grover-type QWs.

To highlight the strength of our approach, we focus on the Fourier walk, a canonical example of a three-state QW with a non-Grover-type coin. The Fourier walk, which employs the discrete Fourier matrix as its coin operator, has been previously studied in this context \cite{Saito2019,Narimatsu2021-1,Narimatsu2021-2}. Due to its lack of symmetry, the Fourier walk presents greater challenges in eigenvalue analysis compared to Grover-type walks. Nevertheless, our main theorem provides a systematic framework for numerically determining eigenvalues even in such non-symmetric cases.

The coin matrix used in the Fourier walk is the discrete Fourier transform over $\mathbb{Z}_3$, defined by:
\begin{equation*}
    F=\frac{1}{\sqrt{3}}\begin{bmatrix}
        1 & 1           & 1           \\
        1 & \omega      & \omega ^{2} \\
        1 & \omega ^{2} & \omega
    \end{bmatrix}
\end{equation*}
where $\omega =e^{i\frac{2\pi }{3}}$.

Unlike the Grover coin, the Fourier coin lacks symmetry, has complex off-diagonal structure. This introduces considerable challenges for spectral analysis, particularly in detecting and classifying localized eigenstates.

Nonetheless, our main theorem offers a direct route for numerically identifying exact eigenvalues even in the absence of such simplifying structures. By reducing the eigenvalue problem to a two-component recurrence and evaluating the determinant condition associated with the composed transfer matrix, we can effectively locate discrete eigenvalues. This is particularly useful for analyzing spatially inhomogeneous configurations, such as one-defect and two-phase Fourier walks, where localization may occur.

\begin{figure*}[t]
    \begin{minipage}[b]{0.45\linewidth}
        \centering
        \includegraphics[clip, width=\textwidth]{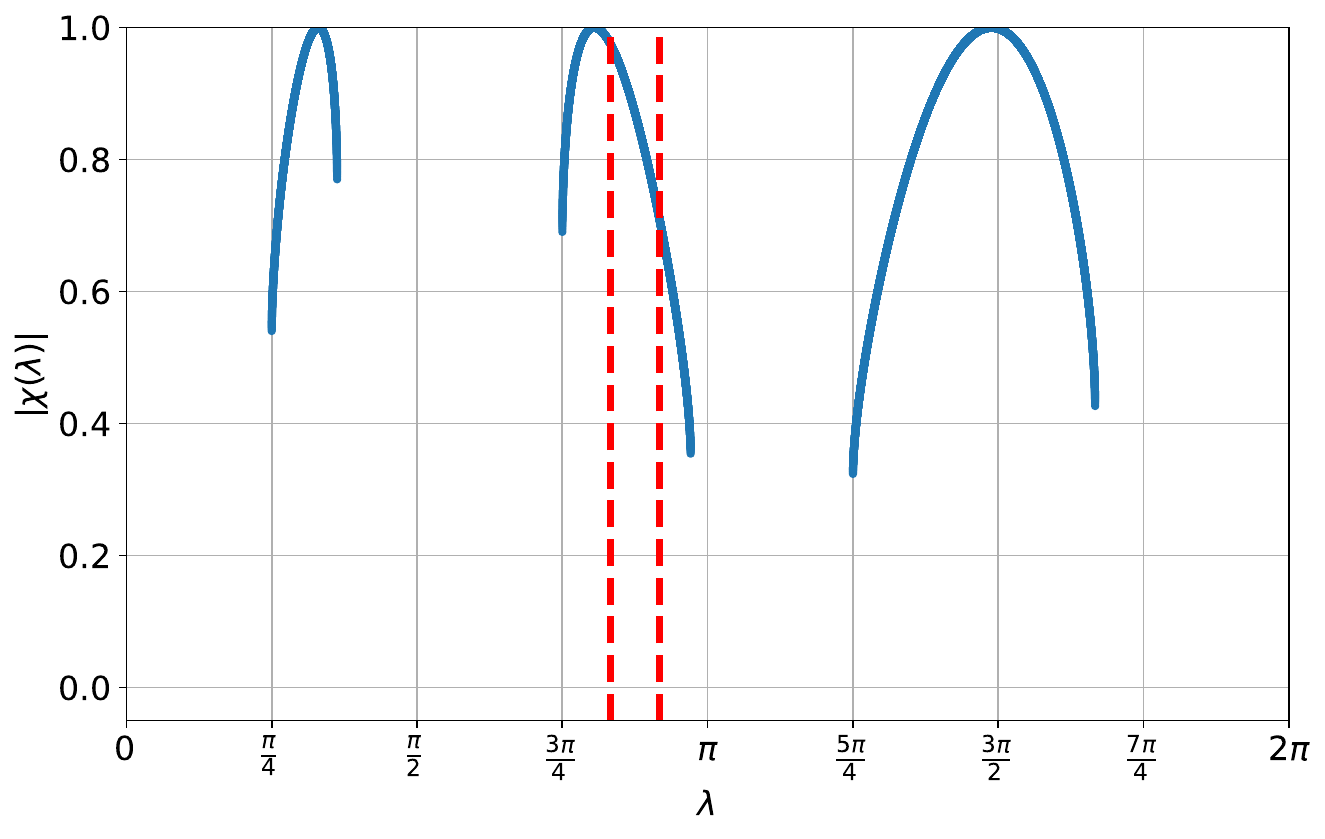}
        \subcaption{$\theta=\frac{\pi}{12}$}
    \end{minipage}
    \begin{minipage}[b]{0.45\linewidth}
        \centering
        \includegraphics[clip, width=\textwidth]{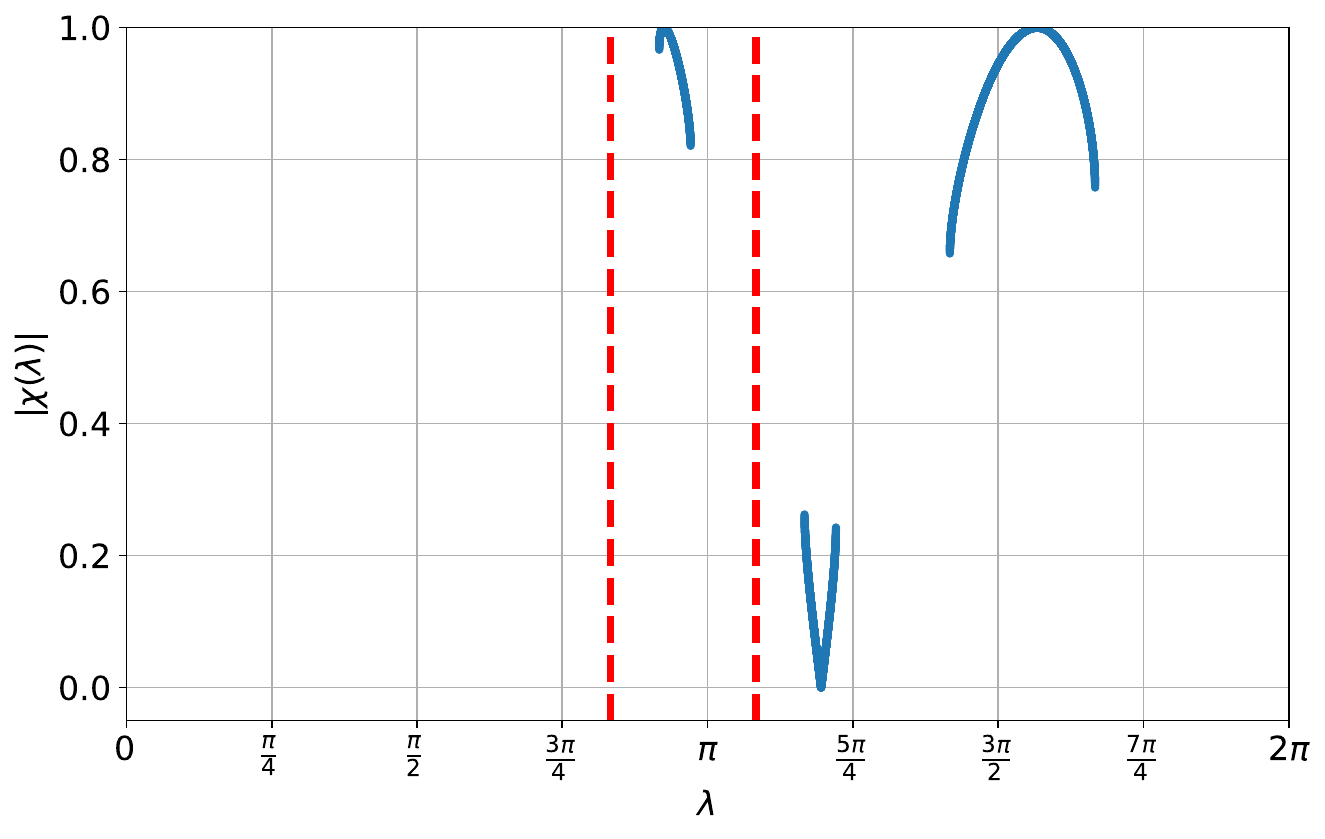}
        \subcaption{$\theta=\frac{3\pi}{12}$}
    \end{minipage}
    \begin{minipage}[b]{0.45\linewidth}
        \centering
        \includegraphics[clip, width=\textwidth]{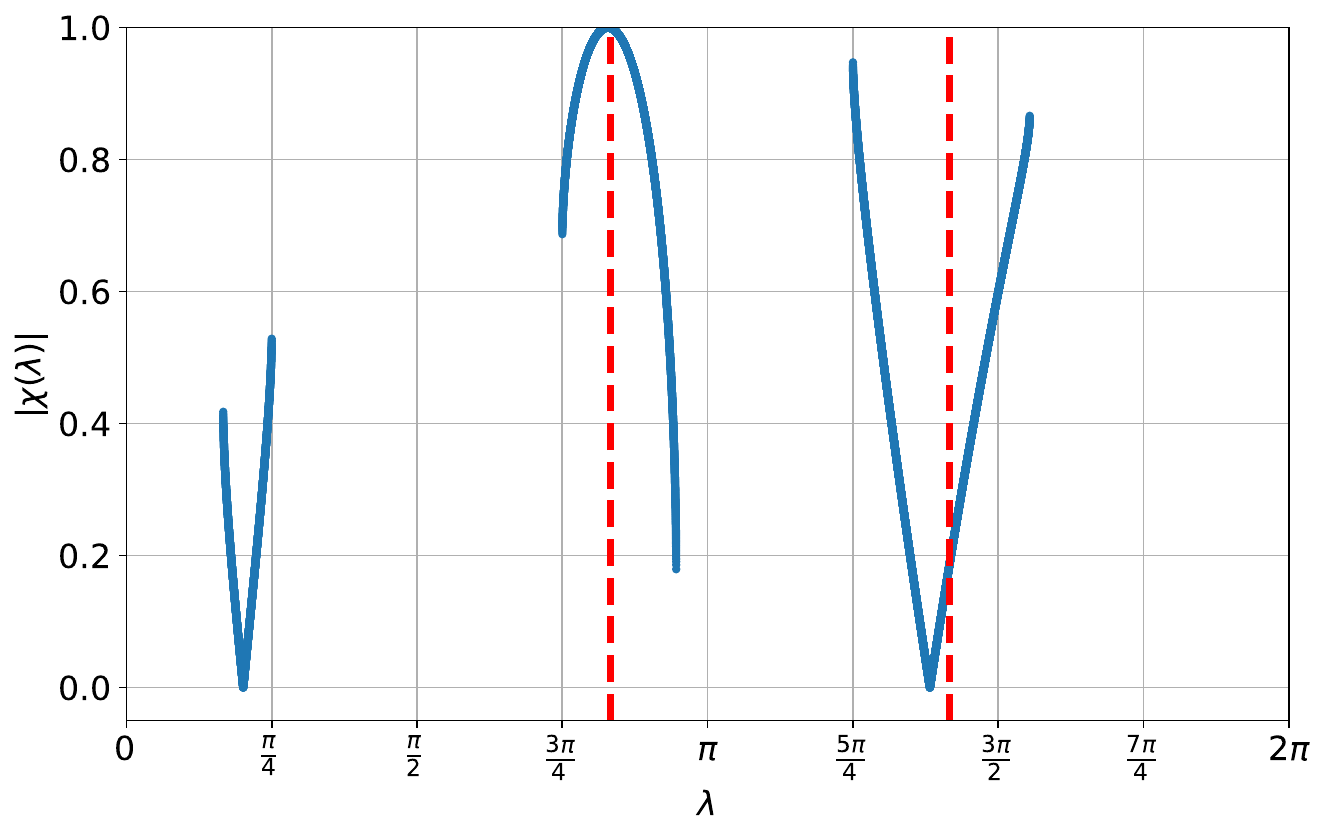}
        \subcaption{$\theta=\frac{7\pi}{12}$}
    \end{minipage}
    \begin{minipage}[b]{0.45\linewidth}
        \centering
        \includegraphics[clip, width=\textwidth]{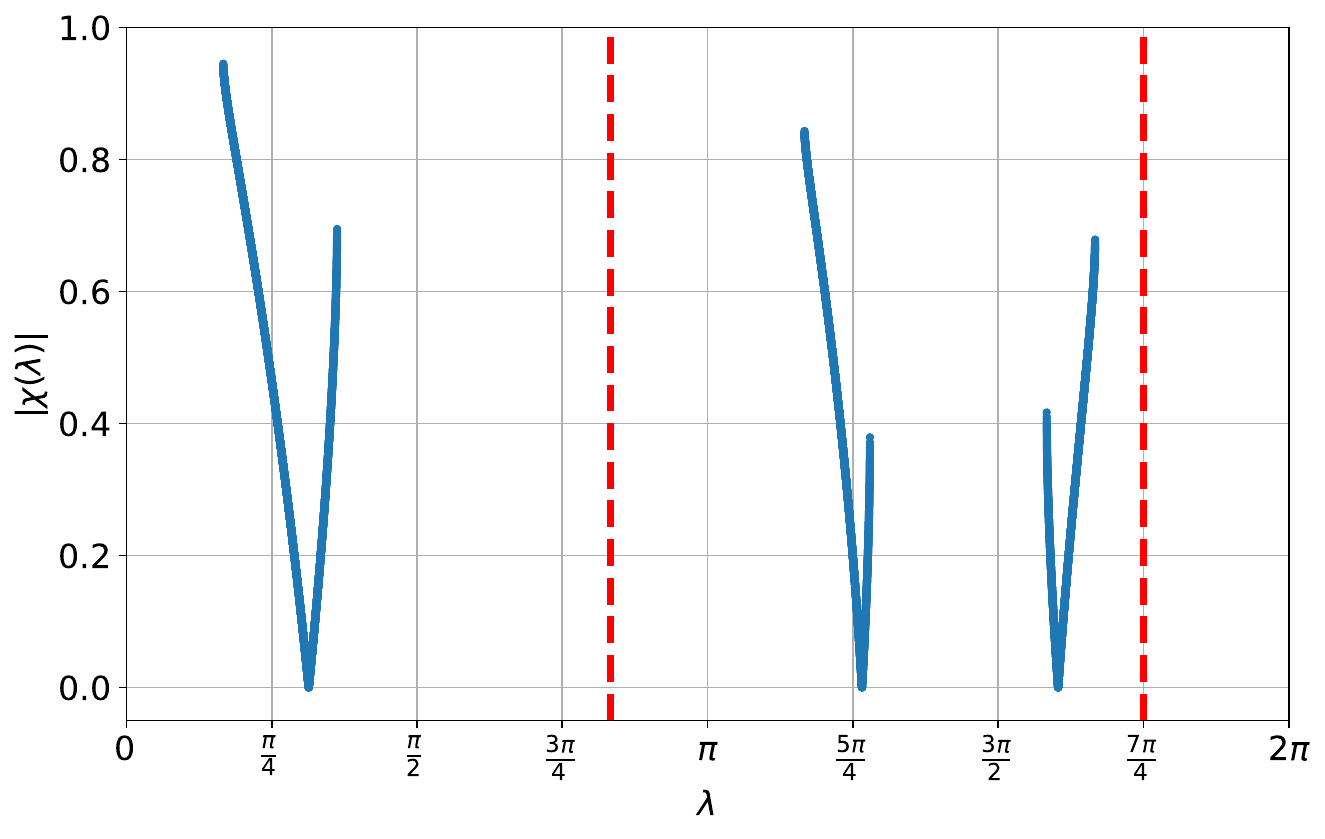}
        \subcaption{$\theta=\frac{11\pi}{12}$}
    \end{minipage}
    \captionsetup{justification=raggedright}
    \caption{
        The plot of $|\chi(\lambda)|$ from \cref{cor:main} for different values of $\theta$ in the two-phase Fourier walk. Only values of $\chi(\lambda)$ for $\lambda\in\Lambda$ are plotted, and the red line indicates $\lambda\in\Lambda_0$. By \cref{cor:main}, we see that $\lambda$ is an eigenvalue when $|\chi(\lambda)|=0$. Numerically, we can confirm that (a) has no eigenvalue, (b) has one, (c) has two, and (d) has three eigenvalues of $U$ causing localization.}
    \label{fig:3}
\end{figure*}

\begin{figure*}[t!]
    \centering
    \begin{minipage}[b]{0.45\linewidth}
        \centering
        \includegraphics[width=0.9\textwidth]{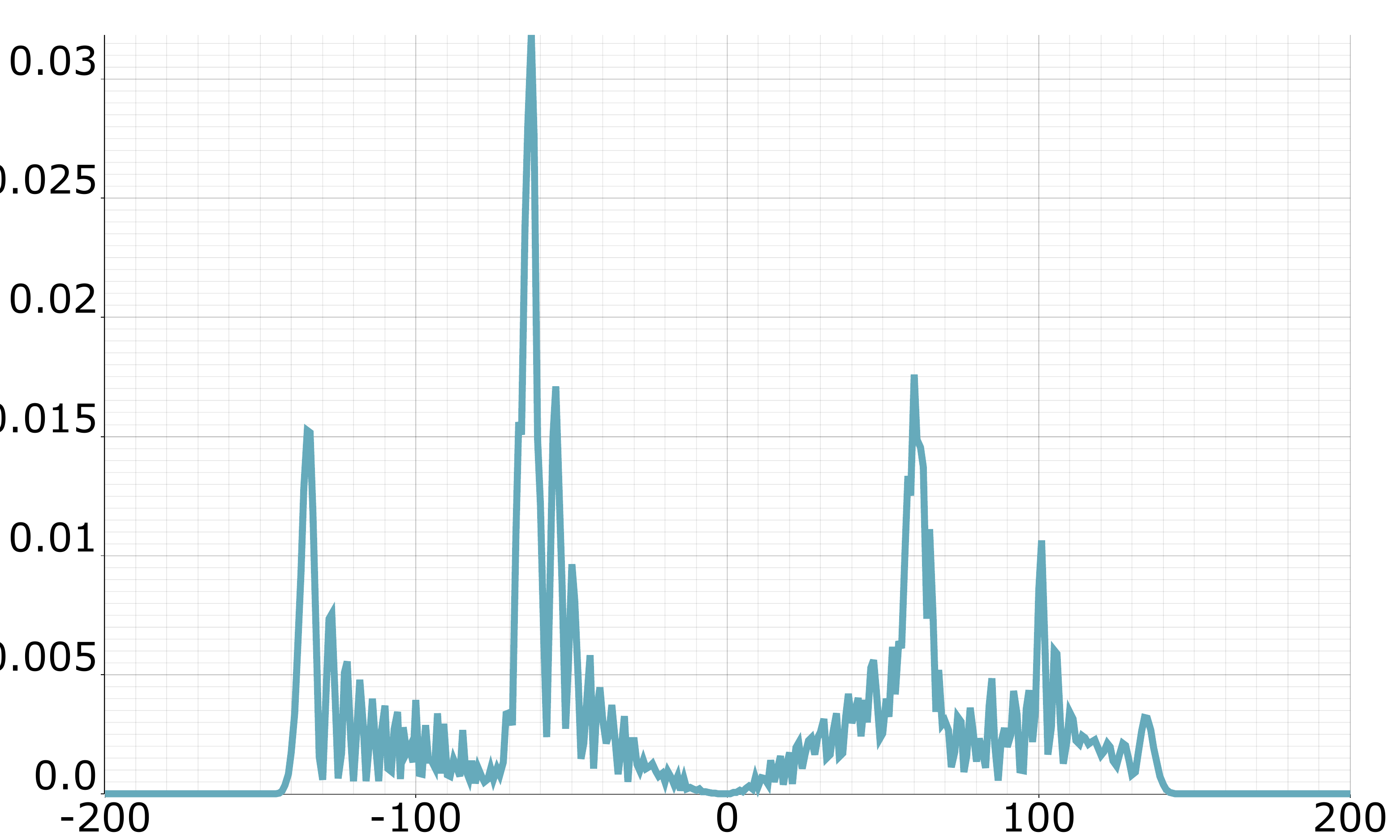}
        \subcaption{$\theta=\frac{\pi}{12}$}
    \end{minipage}
    \begin{minipage}[b]{0.45\linewidth}
        \centering
        \includegraphics[width=0.9\textwidth]{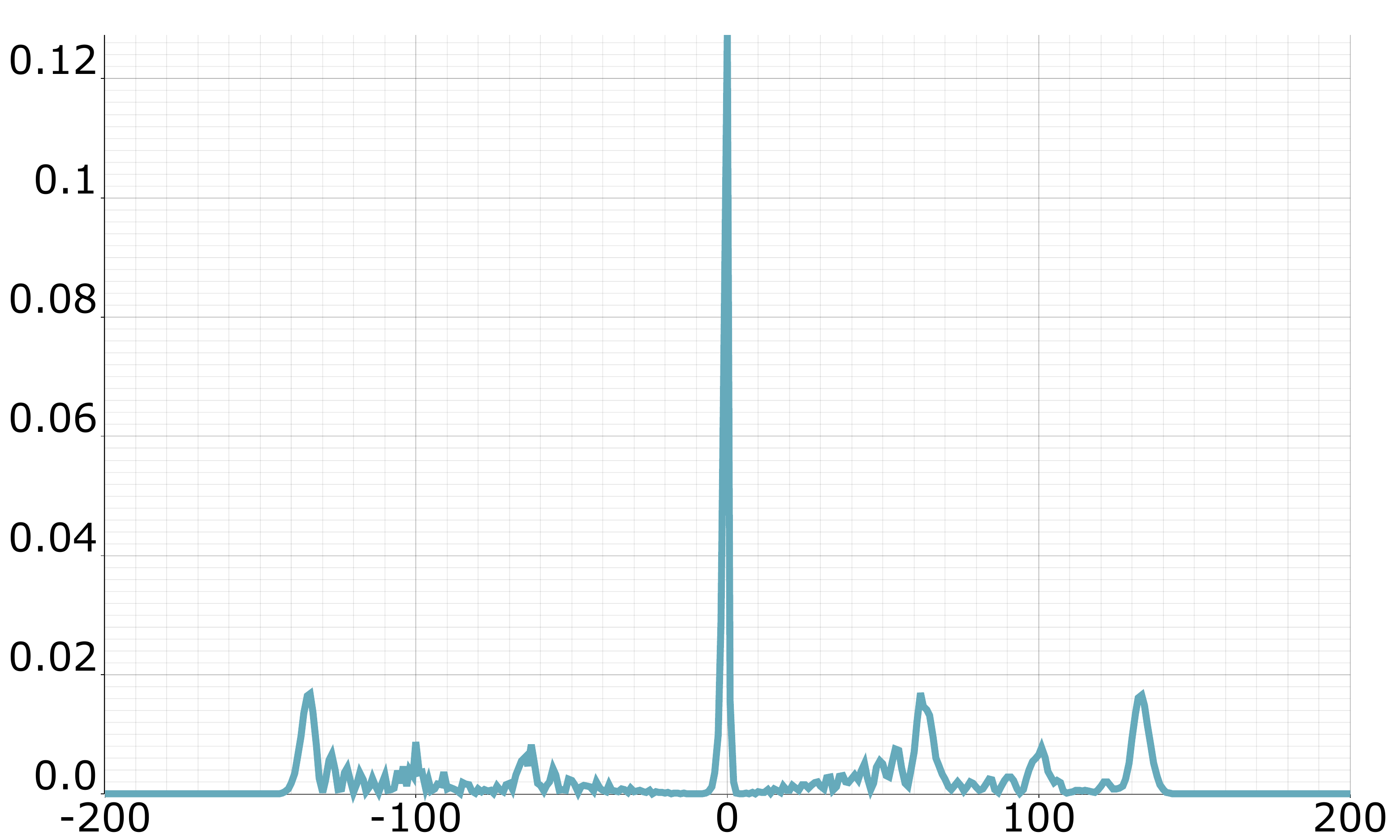}
        \subcaption{$\theta=\frac{3\pi}{12}$}
    \end{minipage}
    \begin{minipage}[b]{0.45\linewidth}
        \centering
        \includegraphics[width=0.9\textwidth]{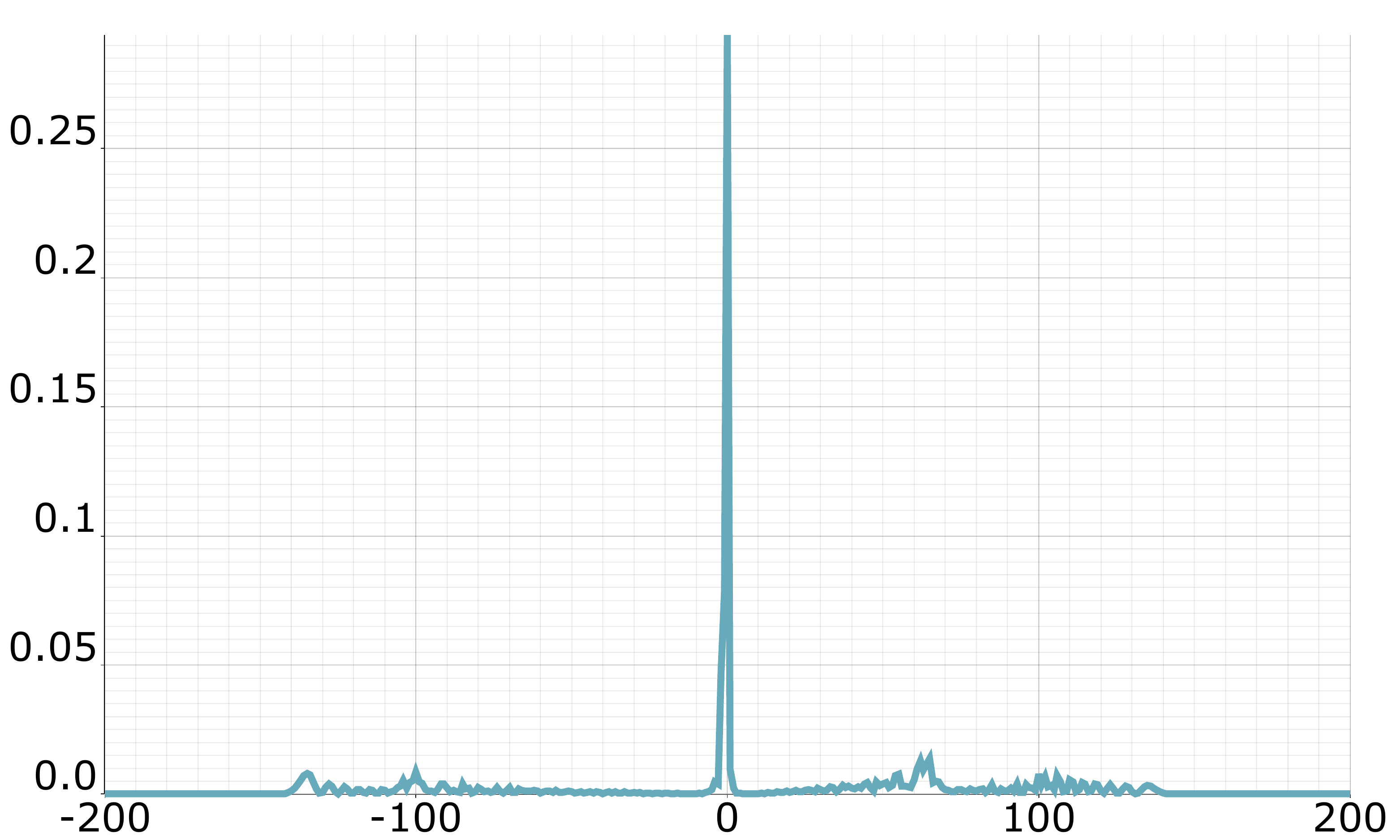}
        \subcaption{$\theta=\frac{7\pi}{12}$}
    \end{minipage}
    \begin{minipage}[b]{0.45\linewidth}
        \centering
        \includegraphics[width=0.9\textwidth]{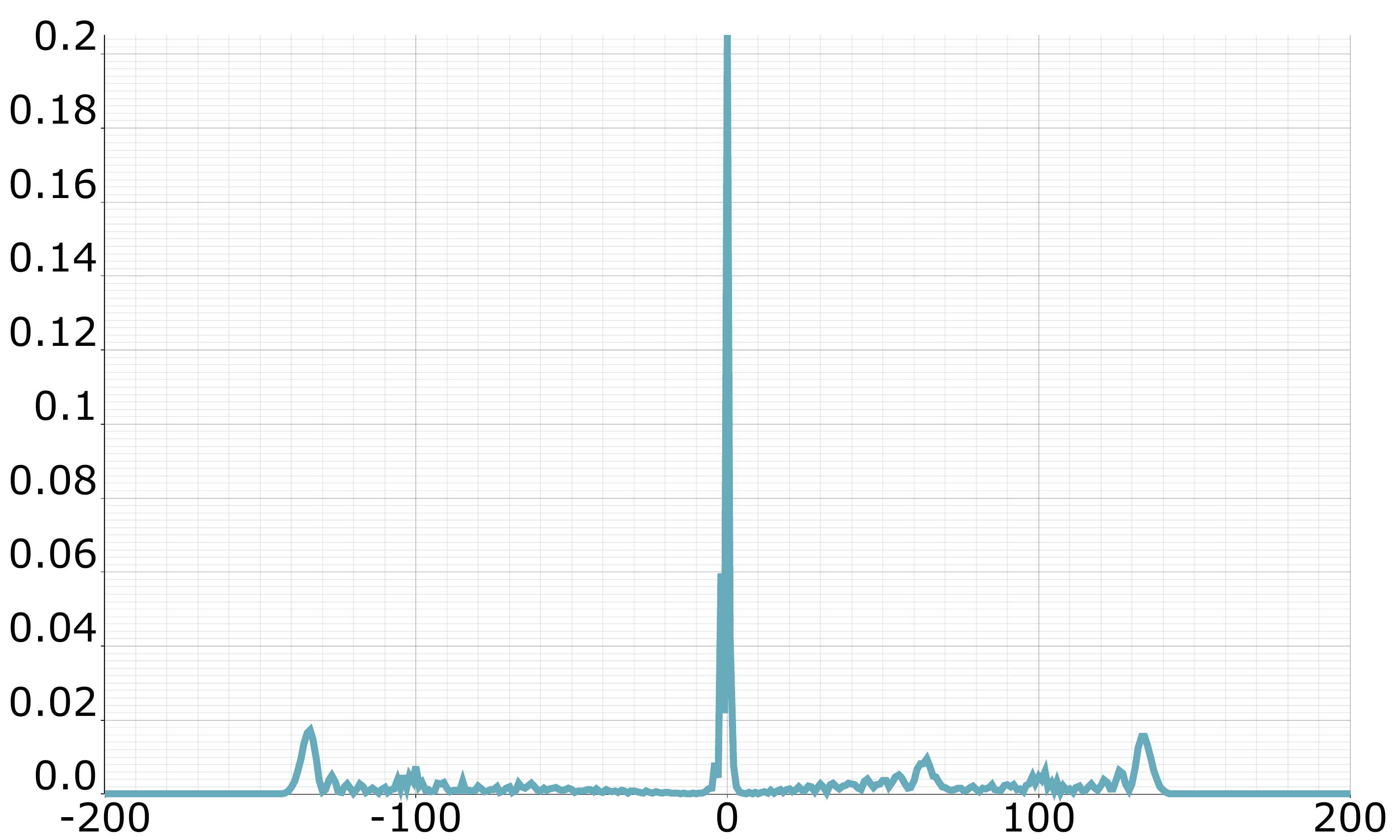}
        \subcaption{$\theta=\frac{11\pi}{12}$}
    \end{minipage}
    \captionsetup{justification=raggedright}
    \caption{Probability distribution $\mu _{t}^{( \Psi _{0})} (x)$ for the two-phase Fourier walk with different values of $\theta$ at time $100$ with initial state $\Psi_0(0)=[\frac{1}{\sqrt{3}},\frac{i}{\sqrt{3}},\frac{1}{\sqrt{3}}]^t$ and $\Psi_0(x)=\mathbf{0}$ for $x\neq 0$. All of (a-d) have a peak around the origin, which indicates localization. The result shown in \cref{fig:3} corresponds to these localization.
    }
    \label{fig:4}
\end{figure*}

\subsection{One-defect model}
We begin our numerical analysis with the one-defect Fourier walk, a simple yet revealing example of how a localized perturbation can influence the spectral structure of a three-state QW. In this model, the Fourier coin is applied uniformly across the lattice, except at the origin, where a phase $e^{i\theta } ,\ \theta \in ( 0,2\pi ]$ is introduced. The coin operator is defined as:
\begin{equation*}
    \mathsf{C}_{x} =\begin{cases}
        e^{i\theta } F & x=0,     \\
        F              & x\neq 0.
    \end{cases}
\end{equation*}
From direct computation using the structure of the Fourier matrix, we identify the set of eigenphases $\lambda$ that satisfy the algebraic condition $A_x(\lambda) = 0$, which are:
\begin{equation*}
    \Lambda _{0} =\left\{\overline{\omega } e^{-i\theta } e^{i\Delta } ,\overline{\omega } e^{i\Delta }\right\} .
\end{equation*}
These eigenphases correspond to cases where the transfer matrix loses rank and special (compactly supported) solutions could potentially exist. However, the following proposition shows that no such nontrivial eigenstates actually arise for these values.
\begin{proposition}
    \label{prop:one-defect}
    For a one-defect Fourier walk,
    \begin{equation*}
        \Lambda _{0} \cap \sigma _{p}( U) =\emptyset.
    \end{equation*}
\end{proposition}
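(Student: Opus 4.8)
The plan is to show that neither of the two candidate eigenphases in $\Lambda_0 = \{\overline{\omega}e^{-i\theta}e^{i\Delta}, \overline{\omega}e^{i\Delta}\}$ can actually be an eigenvalue, by invoking \cref{cor:one-defect} and checking that the Case~2 conditions force the trivial solution. Since $\Lambda_0$ consists precisely of those $\lambda$ where $A_x(\lambda)=0$ at some site, I must carefully track \emph{at which sites} the defining condition $a_x^{(1,1)}e^{i\lambda}=e^{i\Delta_x}\overline{a_x^{(3,3)}}$ holds. For the one-defect Fourier walk, the bulk coin is $F$ (with $\Delta_\infty = \Delta$) and the defect coin is $e^{i\theta}F$ (so $\Delta_0 = \Delta + 3\theta$ and each entry at the origin carries a factor $e^{i\theta}$). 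First I would substitute the explicit Fourier entries $a^{(i,j)} = \tfrac{1}{\sqrt 3}\omega^{(i-1)(j-1)}$ to verify that $\lambda$ with $e^{i\lambda} = \overline{\omega}e^{-i\theta}e^{i\Delta}$ makes $A_0(\lambda)=0$ (the defect site), while $e^{i\lambda}=\overline{\omega}e^{i\Delta}$ makes $A(\lambda)=0$ in the bulk (i.e.\ $A_{\pm\infty}(\lambda)=0$).

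The key structural observation I would exploit is the compact-support constraint appearing in Case~2 of both conditions in \cref{cor:one-defect}: whenever $A(\lambda)=0$ in the bulk, the far-field amplitudes $\tilde\Psi(x)$ for $|x|$ large are forced to vanish \emph{unless} the algebraic identity $\left(a^{(3,3)}/\overline{a^{(1,1)}}\right)^2 = a^{(1,2)}a^{(2,1)}/(\overline{a^{(3,2)}}\,\overline{a^{(2,3)}})$ holds. So the first main step is to evaluate this identity for the Fourier matrix. A direct computation with $a^{(1,1)}=a^{(1,2)}=a^{(2,1)}=\tfrac{1}{\sqrt3}$, $a^{(3,3)}=\tfrac{1}{\sqrt3}\omega$, $a^{(3,2)}=a^{(2,3)}=\tfrac{1}{\sqrt3}\omega^2$ gives left side $(\omega/1)^2=\omega^2$ and right side $(\tfrac13)/(\tfrac13\overline{\omega^2}\cdot\overline{\omega^2}) = 1/\overline{\omega^4} = \omega^4 = \omega$; since $\omega^2 \neq \omega$, the identity \emph{fails}. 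Hence for the bulk case the only admissible far-field solution is $\tilde\Psi(x)=\mathbf{0}$ for all large $|x|$, so $\tilde\Psi$ is compactly supported.

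With compact support established, the second step handles each eigenphase separately. For $e^{i\lambda}=\overline{\omega}e^{i\Delta}$ (bulk degeneracy, $A_{\pm\infty}(\lambda)=0$ but $A_0(\lambda)\neq 0$), Condition~2 Case~2 forces $\tilde\Psi(x)=\mathbf 0$ for all $x\neq 0,1$, while pinning $\tilde\Psi(0)$ and $\tilde\Psi(1)$ to prescribed directions; then Condition~1 Case~1 imposes $\tilde\Psi(1)=T_0(\lambda)\tilde\Psi(0)$, and I would check that these three linear constraints are mutually incompatible unless $k_1=k_2=0$. For $e^{i\lambda}=\overline{\omega}e^{-i\theta}e^{i\Delta}$ (defect degeneracy, $A_0(\lambda)=0$ but $A_{\pm\infty}(\lambda)\neq 0$), I would use Condition~1 Case~2 to fix $\tilde\Psi(0),\tilde\Psi(1)$ up to scalars $k_1,k_2$, then apply Condition~2 Case~1: the requirements $T(\lambda)\tilde\Psi(1)=\zeta^{<}\tilde\Psi(1)$ and $T(\lambda)\tilde\Psi(0)=\zeta^{>}\tilde\Psi(0)$ demand that $\tilde\Psi(0)$ and $\tilde\Psi(1)$ be the two distinct eigenvectors $\ket{v^{>}}$, $\ket{v^{<}}$ of the bulk transfer matrix. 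The contradiction should emerge from comparing the fixed directions coming from the defect against the eigenvector directions of $T(\lambda)$, together with the decay requirement $|\mathrm{tr}(T(\lambda))|>2$.

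The main obstacle I anticipate is the final compatibility check: verifying that the prescribed vectors from the $A=0$ conditions genuinely fail to align with the required transfer-matrix eigenvectors (or fail the square-summability/decay demand). This requires explicit evaluation of $T(\lambda)$ and $T_0(\lambda)$ at the two specific eigenphases and a careful determination of whether $|\mathrm{tr}(T(\lambda))|>2$ actually holds there; if the trace condition fails the argument closes immediately (no decaying solution exists), whereas if it holds I must rule out accidental alignment of the fixed directions with $\ket{v^{>}},\ket{v^{<}}$. Managing the bookkeeping of the $\omega$-powers and the $e^{i\theta}$ factors across these substitutions, while keeping the two eigenphase cases cleanly separated, is where the real care is needed.
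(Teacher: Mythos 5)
Your proposal follows essentially the same route as the paper's proof: both invoke \cref{cor:one-defect}, treat the two eigenphases of $\Lambda_0$ separately, use the failure of the algebraic identity $\left(a^{(3,3)}/\overline{a^{(1,1)}}\right)^{2}=a^{(1,2)}a^{(2,1)}/\left(\overline{a^{(3,2)}}\,\overline{a^{(2,3)}}\right)$ for the Fourier coin to force compact support, and reduce what remains to a direct check that the pinned directions at $x=0,1$ are incompatible with the transfer-matrix constraints. If anything, you are slightly more careful than the paper, which simply asserts $\tilde\Psi\equiv\mathbf{0}$ in the bulk-degenerate case without noting that Condition~1 (i.e.\ $\tilde\Psi(1)=T_0(\lambda)\tilde\Psi(0)$) is what eliminates the remaining freedom in $\tilde\Psi(0)$ and $\tilde\Psi(1)$.
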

\begin{proof}
    From \cref{cor:one-defect}, we can state that $e^{i\lambda } =\overline{\omega } e^{i\Delta }$ is not eigenvalue of $U$ since $\tilde{\Psi } (x)$ becomes $\mathbf{0}$ for all $x\in \mathbb{Z}$. Also, we prove that $\frac{\overline{a_{0}^{( 3,3)}}}{a_{0}^{( 1,1)}} e^{i\Delta _{0}}$ is not eigenvalue of $U$ by contradiction. Assume that $e^{i\lambda } =\frac{\overline{a_{0}^{( 3,3)}}}{a_{0}^{( 1,1)}} e^{i\Delta _{0}} \in \sigma _{p}( U)$ then we must have $\tilde{\Psi } :\mathbb{Z}\rightarrow \mathbb{C}^{2}$ such that for some $k_{1} ,k_{2} \in \mathbb{C}$,
    \begin{equation*}
        \tilde{\Psi } (0)=k_{1}\begin{bmatrix}
            \omega ^{2} \\
            1
        \end{bmatrix} ,\ \tilde{\Psi } (1)=k_{2}\begin{bmatrix}
            \omega \\
            1
        \end{bmatrix} .
    \end{equation*}
    and
    \begin{equation*}
        T\tilde{\Psi } (0)=\zeta ^{< }\tilde{\Psi } (0)\ \text{and} \ T\tilde{\Psi } (1)=\zeta ^{ >}\tilde{\Psi } (1).
    \end{equation*}
    However, by direct calculation, we obtain that such $\tilde{\Psi }$ does not exists other than $\mathbf{0}$.
\end{proof}
This proposition confirms that the special points in $\Lambda_0$ do not contribute to the point spectrum of the one-defect model. Consequently, our numerical search for eigenvalues can be restricted to $e^{i\lambda} \notin \Lambda_0$, where the full machinery of the transfer matrix formalism applies.

In Figure~\ref{fig:1}, we plot the function $\chi(\lambda)$ for several values of the defect phase $\theta$. The zeros of $\chi(\lambda)$ indicate the presence of eigenvalues. Figure~\ref{fig:2} presents the corresponding probability distributions for the localized states, clearly demonstrating localization around the defect site. These results show how tuning the phase at a single site can control the emergence or disappearance of localized modes.

\subsection{Two-phase model}
Next, we analyze the two-phase Fourier walk, the coin matrix differs between the left and right halves of the lattice, introducing a discontinuity at the origin that mimics a phase boundary. Such models are of significant interest in the study of topological phases, where localized edge states are expected to emerge at the interface between different quantum “phases.” Here, we consider two-phase Fourier walk defined by the coin operator, where we apply relative phase that distinguishes the two regions $e^{i\theta } ,\ \theta \in ( 0,2\pi ]$ for $x\geqslant 0$:
\begin{equation*}
    \mathsf{C}_{x} =\begin{cases}
        F              & x< 0 ,   \\
        e^{i\theta } F & x\geq 0.
    \end{cases}
\end{equation*}
As in the one-defect case, we first identify the set $\Lambda_0$ of exceptional eigenphases at which the transfer matrix loses rank:
\begin{equation*}
    \Lambda_0 = \left\{\overline{\omega } e^{i\Delta } ,\overline{\omega } e^{i\theta } e^{i\Delta }\right\}.
\end{equation*}
These values potentially allow compactly supported eigenstates, depending on whether certain algebraic conditions are met. However, as shown in the following proposition, neither of these eigenphases actually contributes to the point spectrum. The following Proposition can be proved similarly to \cref{prop:one-defect} by investigating \cref{cor:two-phase}.
\begin{proposition}
    For a two-phase Fourier walk,
    \begin{equation*}
        \Lambda _{0} \cap \sigma _{p}( U) =\emptyset
    \end{equation*}
\end{proposition}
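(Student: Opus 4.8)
\emph{Proof proposal.} The plan is to mirror the argument for \cref{prop:one-defect}, now invoking \cref{cor:two-phase}. First I would identify how the two exceptional phases distribute across the bulk regions: the point $e^{i\lambda}=\overline{\omega}e^{i\Delta}$ is precisely the phase where $A_{-\infty}(\lambda)=0$ (left bulk coin $F$) while $A_{\infty}(\lambda)\neq 0$; symmetrically, $e^{i\lambda}=\overline{\omega}e^{i\theta}e^{i\Delta}$ is the phase where $A_{\infty}(\lambda)=0$ (right bulk coin $e^{i\theta}F$) while $A_{-\infty}(\lambda)\neq 0$, both holding for $\theta\in(0,2\pi)$. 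Thus each point activates Case~2 of exactly one of the two conditions in \cref{cor:two-phase} and Case~1 of the other, and the two points are handled by the same computation with the roles of $\pm\infty$ interchanged.

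The key structural input is that the Fourier coin fails the algebraic parallelism condition governing Case~2. A direct evaluation gives $\left(a^{(3,3)}/\overline{a^{(1,1)}}\right)^{2}=\omega^{2}$ whereas $a^{(1,2)}a^{(2,1)}/\big(\overline{a^{(3,2)}}\,\overline{a^{(2,3)}}\big)=\omega$, and since $\omega^{2}\neq\omega$ this condition is violated; multiplying the coin by the global phase $e^{i\theta}$ rescales both sides by $e^{4i\theta}$ and so cannot restore equality. Consequently, in the region carrying $A=0$ the ``otherwise'' branch of \cref{cor:two-phase} forces the (a priori compactly supported) tail to vanish identically --- $\tilde{\Psi}(x)=\mathbf{0}$ for all $x>0$ when $A_{\infty}(\lambda)=0$, and for all $x<0$ when $A_{-\infty}(\lambda)=0$ --- leaving $\tilde{\Psi}(0)$ as the only remaining degree of freedom, pinned to an explicit vector: $\tilde{\Psi}(0)\propto[\omega^{2}\ 1]^{t}$ when $A_{\infty}(\lambda)=0$, and $\tilde{\Psi}(0)\propto[1\ \omega^{2}]^{t}$ when $A_{-\infty}(\lambda)=0$.

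It then remains to impose Case~1 of the complementary condition, which requires $\tilde{\Psi}(0)$ to be the eigenvector of the homogeneous transfer matrix of the opposite bulk associated with the eigenvalue that decays into that region, together with $|\mathrm{tr}(T_{\mp\infty}(\lambda))|>2$: for $e^{i\lambda}=\overline{\omega}e^{i\theta}e^{i\Delta}$ this is the $\zeta_{-\infty}^{>}$-eigenvector of $T_{-\infty}(\lambda)$, and for $e^{i\lambda}=\overline{\omega}e^{i\Delta}$ it is the $\zeta_{\infty}^{<}$-eigenvector of $T_{\infty}(\lambda)$. The main obstacle is this final verification: I would substitute each exceptional phase into the explicit transfer matrix, compute its trace, eigenvalues $\zeta^{\lessgtr}$, and the relevant eigenvector, and check by direct calculation that the pinned vector $[\omega^{2}\ 1]^{t}$ (resp. $[1\ \omega^{2}]^{t}$) is not proportional to the required decaying eigenvector (or that $|\mathrm{tr}|>2$ already fails). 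Either outcome forces $\tilde{\Psi}(0)=\mathbf{0}$, hence $\tilde{\Psi}\equiv\mathbf{0}$, so neither phase yields a nonzero eigenvector; this gives $\Lambda_{0}\cap\sigma_{p}(U)=\emptyset$.
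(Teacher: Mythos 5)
Your proposal is correct and follows essentially the same route as the paper, which itself only remarks that the two-phase case ``can be proved similarly to \cref{prop:one-defect} by investigating \cref{cor:two-phase}'': you correctly assign each point of $\Lambda_0$ to the region where $A_{\pm\infty}(\lambda)=0$, verify that the algebraic condition $\left(a^{(3,3)}/\overline{a^{(1,1)}}\right)^2=a^{(1,2)}a^{(2,1)}/\bigl(\overline{a^{(3,2)}}\,\overline{a^{(2,3)}}\bigr)$ fails for the Fourier coin (and is phase-invariant up to the common factor $e^{4i\theta}$), and reduce the question to a matching condition at the origin. The only step you defer --- the direct calculation showing the pinned vector at $x=0$ is not the required decaying eigenvector of the opposite bulk's transfer matrix --- is deferred at exactly the same level of detail in the paper's own proof of the one-defect analogue, so there is no substantive gap.
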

This result simplifies the numerical task of locating eigenvalues in the two-phase model: we need only consider $e^{i\lambda} \notin \Lambda_0$, where exponentially localized solutions may exist and be captured by our transfer matrix framework.

In Figure~\ref{fig:3}, we plot the determinant function $\chi(\lambda)$ for several values of the phase parameter $\theta$. The zeros of $\chi(\lambda)$ indicate the existence of discrete eigenvalues. Figure~\ref{fig:4} displays the probability distributions corresponding to these eigenphases, revealing localized states centered at the phase boundary. These results vividly demonstrate how a simple phase jump across the origin can give rise to spatially localized eigenmodes, analogous to edge states in topological systems. The ability of our method to detect and characterize such states in a non-Grover setting underscores its generality and practical power.

\section{Conclusion and Discussion}
In this paper, we developed a general transfer matrix method for analyzing the spectral properties of space-inhomogeneous three-state quantum walks (QWs) on the integer lattice with general coin matrices. This extends previous results \cite{Kiumi2022-ts}, which were limited to specific classes such as Grover-type coins, to a broader class of models including non-Grover-type coins like the Fourier matrix. Furthermore, we explicitly characterized the essential spectrum, identifying both the flat-band eigenvalues with infinite multiplicity and the absolutely continuous spectrum. This completes the full spectral classification of the model. Our approach reduces the eigenvalue problem of the original three-state QW to a tractable two-component recurrence relation, enabling an explicit characterization of eigenvalues and eigenstates through a $2 \times 2$ transfer matrix formalism. We derived a necessary and sufficient condition for the eigenvalue problem in the case of two-phase QWs with a finite number of defects, providing a mathematical characterization and enabling a precise analysis of localization phenomena in these systems.

As a concrete application of our framework, we numerically analyzed the Fourier walk—a three-state QW with a coin matrix not amenable to earlier analytic techniques—under one-defect and two-phase settings. Despite the fact that the homogeneous Fourier walk is known to be delocalized, our results confirm that both the one-defect and two-phase modifications induce localization. We verified the existence of discrete eigenvalues via plots of the function $\chi(\lambda)$ and demonstrated localization through the time evolution of the probability distribution.

These results not only provide new insights into the spectral theory of three-state QWs but also highlight the effectiveness of our method in studying localization beyond symmetric models such as the Grover walk. Our framework is particularly relevant for exploring quantum transport phenomena, topological phases, and the stability of localized states under perturbations.
\subsection{Physical implications}
    Our transfer-matrix framework provides a complementary and rigorous method for analyzing the spectrum of three-state quantum walks. In particular, the eigenvalue conditions established in \cref{main theorem} allow for a precise characterization of localized states induced by defects or phase boundaries, while the resulting three-way spectral decomposition offers a clear physical interpretation of the dynamics generated by general three-state coins:
    \begin{itemize}
        \item Discrete eigenvalues in $\sigma_{\mathrm{disc}}(U)$, correspond to genuine defect or edge modes that decay exponentially away from the inhomogeneous region. Any initial state with nonzero overlap on such eigenvectors exhibits persistent localization near the defect, as confirmed numerically in \cref{sec:4}.
        \item Flat-band points in $\sigma^\infty_{\mathrm{ess}}(U)$ arise from special symmetries of the tail coins and support massively degenerate, compactly supported states. These modes are spatially localized but form part of the essential spectrum due to their infinite multiplicity.
        \item The absolutely continuous part $\sigma_{\mathrm{ac}}(U)$ corresponds to extended states. For initial conditions overlapping only with this part, the walk exhibits ballistic spreading and delocalization.
    \end{itemize}

    Altogether, the transfer-matrix approach not only pinpoints the eigenvalues and their associated eigenstates, but also integrates them into a unified spectral picture in which extended bands, possible flat bands, and defect-induced localized modes coexist. This structure explains the diverse localization and spreading phenomena observed in space-inhomogeneous three-state quantum walks and highlights the utility of spectral methods for their analysis.

    \subsection{Connections to topological phases and quantum algorithms}
    From a physical perspective, the exponentially localized eigenstates that appear
    at boundaries between different asymptotic coins can be naturally interpreted as
    edge states in the language of topological quantum walks~\cite{Kitagawa,Obuse2011}.
    In particular, in models where the bulk time-evolution operators admit a topological classification, the transfer-matrix method detects discrete eigenvalues in quasi-energy gaps as elements of $\sigma_{\mathrm{disc}}(U)$ in \cref{theorem:spectrum-decomposition}; these eigenvalues correspond to the gap states associated with bulk--edge correspondence. On a more fundamental level, it would be an interesting direction to explore the relationship between the spectral structure analyzed in this study and the index theorems \cite{Cedzich2016}. Specifically, clarifying whether the number of solutions to $\chi(\lambda)=0$ corresponds to a certain Fredholm index would provide a more rigorous mathematical foundation for the bulk--edge correspondence in three-state models. Moreover, the structure of the flat-band sector carries additional physical significance: gapped flat bands are relatively easy to isolate and can support strongly correlated or Mott phase, whereas gapless flat bands may give rise to critical or semimetal-like behavior and are associated with unconventional transport phenomena \cite{flat-band}. 
    A systematic derivation of the relevant bulk invariants for general three-state walks is left for future study.

    Quantum walks also provide a powerful framework for designing search algorithms on graphs and lattices, and more generally, preparing target states with high probability. A key insight is that the success probability and time-to-solution of quantum walk search depend critically
    on the spectral properties of the underlying walk operator. For instance, lackadaisical quantum
    walks \cite{Lackadaisical1,Lackadaisical2,Lackadaisical3,Lackadaisical4,Lackadaisical5} where additional self-loops are introduced at each vertex—have been widely studied as an effective mechanism to improve search efficiency on both regular lattices and more general graphs. \cite{Unified} also showed that by introducing an interpolated walk—a modification that effectively adds a self-loop like stay transition to the underlying chain—one can construct a general framework for quantum-walk search algorithms on arbitrary graphs. In these models, the presence of self-loops modifies the coin operator and hence the spectral gap, directly influencing the algorithmic speedup.

    The connection between quantum singular value transformation (QSVT)—a general framework for quantum algorithms and split-step quantum walks used to probe topological phases has recently been explored in \cite{QSVT-kiumi}. Because QSVT is fundamentally about shaping and controlling the spectrum of embedded operators within a unitary evolution, it is natural to speculate that similar spectral principles may underlie both algorithmic speedups and topological phenomena. In particular, introducing suitable symmetries or structured coins in our three-state model may enrich its spectral landscape in ways reminiscent of topological quantum walks. Establishing a unified viewpoint that bridges topological phases and quantum speed-up via the spectral analysis of quantum walks therefore presents an intriguing direction for future research.

\subsection{Limitation and future works}
While our analysis covers a broad class of space-inhomogeneous three-state quantum walks, it still has important limitations. In this work we restrict attention to walks that are asymptotically homogeneous on both sides (two-phase tails) with only finitely many defects in between. This setting already includes many standard models such as finite-defect and two-phase interfaces, but it does not encompass completely general inhomogeneous coins with arbitrarily long-range or disordered variations.

    A natural direction for future work is to extend the transfer-matrix formalism
    beyond the two-phase setting. One promising avenue is to treat walks periodic tails by directly applying techiniques in \cite{Kiumi2022-pd}. An even richer generalization is to incorporate random or quasi-periodic inhomogeneities,
    where the spectral properties can be governed by Lyapunov exponents and
    ergodic cocycle theory, following approaches used in studies of localization
    and fractality in inhomogeneous quantum walks~\cite{Shikano2010}, and in
    recent works on Anderson localization for quasi-periodic CMV matrices and
    quantum walks~\cite{wang2019anderson,Cedzich2023}.

    Moreover, it opens up possibilities for extending similar spectral analyses to multi-state QWs involving many particles \cite{multiparticle1,multiparticle2,multiparticle3}, as well as to higher-dimensional QWs \cite{two-dim1,two-dim2}. A future work could be to extend our framework to multi-particle, multi-state QWs and to higher-dimensional lattices. For a two-dimensional grid \cite{two-dim1} that remains translationally invariant along one axis but not the other, we can apply a Fourier transform in the invariant direction, reducing the problem to an effective four-state walk on a one-dimensional lattice with two self-loops. This reduction lets us deploy our transfer-matrix formalism unchanged, opening the door to rigorous, analytic treatment of two-dimensional QWs, precisely the regime where rich localization phenomena, non-trivial topological phases, and scalable quantum-algorithmic applications are expected to emerge.

    We expect that this research will provide a foundational basis for the rigorous spectral analysis of the broader and more complex quantum-walk models outlined above, ultimately contributing to a comprehensive mathematical theory of inhomogeneous quantum walks.

\noindent \textbf{Data availability:} The simulation code used in this work is available online at \cite{github}.

\section*{Acknowledgment}
This work is supported by JSPS KAKENHI, Grant Number JP22KJ1408, JST ASPIRE Japan, Grant Number JPMJAP2319 and JST PRESTO Japan, Grant Number JPMJPR25F1.
\appendix
\section{Proof of \cref{prop:condition}\label{app:proof_prop_condition}}
In this appendix, we provide a detailed derivation of the transfer matrix. The transfer matrix plays a key role in simplifying the study of the QW's eigenvalue equation \( U\Psi = e^{i\lambda} \Psi \), allowing us to reduce the problem to a \( 2 \times 2 \) matrix that encapsulates the relationships between components of the quantum state across different lattice sites. The transfer matrix \( T_x(\lambda) \) is defined such that it relates $\tilde{\Psi}\in\ell ^{2} (\mathbb{Z} ;\mathbb{C}^{2} )$ at adjacent positions \( x \) and \( x+1 \) through following conditions:

\begin{align}
     & e^{i\lambda } \tilde{\Psi} _{1} (x)=A_{x} (\lambda )\tilde{\Psi} _{1} (x+1)+B_{x} (\lambda )\tilde{\Psi} _{2} (x), \label{app:first}   \\
     & e^{i\lambda} \tilde{\Psi} _{2} (x+1)=C_{x} (\lambda )\tilde{\Psi} _{1} (x+1)+D_{x} (\lambda )\tilde{\Psi} _{2} (x). \label{app:second}
\end{align}
where
\begin{equation*}
    \begin{aligned}
         & A_{x}( \lambda ) =a_{x}^{( 1,1)} +\frac{a_{x}^{( 1,2)} a_{x}^{( 2,1)}}{e^{i\lambda } -a_{x}^{( 2,2)}} ,\ B_{x}( \lambda ) =a_{x}^{( 1,3)} +\frac{a_{x}^{( 1,2)} a_{x}^{( 2,3)}}{e^{i\lambda } -a_{x}^{( 2,2)}} , \\
         & C_{x}( \lambda ) =a_{x}^{( 3,1)} +\frac{a_{x}^{( 3,2)} a_{x}^{( 2,1)}}{e^{i\lambda } -a_{x}^{( 2,2)}} ,\ D_{x}( \lambda ) =a_{x}^{( 3,3)} +\frac{a_{x}^{( 3,2)} a_{x}^{( 2,3)}}{e^{i\lambda } -a_{x}^{( 2,2)}} ,
    \end{aligned}
\end{equation*}
so that $\iota^{-1}\tilde{\Psi}\in\mathcal{H}$ satisfies the eigenvalue equation \( U(\iota^{-1}\tilde{\Psi}) = e^{i\lambda} (\iota^{-1}\tilde{\Psi}) \). Then, the transfer matrix is derived by the reformulation of \cref{app:first,app:second} as
\begin{equation*}
    \tilde{\Psi}(x+1) =T_{x} (\lambda )\tilde{\Psi}(x),
\end{equation*}
where the transfer matrix $ T_{x} (\lambda )$ is given by
\begin{align*}
    \frac{1}{A_{x} (\lambda )}\left[\begin{array}{ c c }
                                            e^{i\lambda }    & -B_{x} (\lambda )                                                                   \\
                                            C_{x} (\lambda ) & -e^{-i\lambda }( B_{x} (\lambda )C_{x} (\lambda )-A_{x} (\lambda )D_{x} (\lambda ))
                                        \end{array}\right].
\end{align*}
From this expression, the transfer matrix can be constructed only if $A_{x} (\lambda )\neq 0$, and we will discuss this case later in this section. Recall that the coin matrix $\mathsf{C}_{x}$ is a $3\times 3$ unitary matrices, which is written as below:
\begin{equation*}
    \mathsf{C}_{x} =\left[\begin{array}{ c c c }
            a_{x}^{( 1,1)} & a_{x}^{( 1,2)} & a_{x}^{( 1,3)} \\
            a_{x}^{( 2,1)} & a_{x}^{( 2,2)} & a_{x}^{( 2,3)} \\
            a_{x}^{( 3,1)} & a_{x}^{( 3,2)} & a_{x}^{( 3,3)}
        \end{array}\right].
\end{equation*}
The unitarity of the coin matrix implies that the conjugate transpose is equal to the inverse of $C_x$. Therefore, we can write
\begin{equation*}
    \mathsf{C}_{x} =e^{i\Delta _x}\left[\begin{array}{ c c c }
            \overline{\det \mathsf{C}_{x}^{( 1,1)}}  & -\overline{\det \mathsf{C}_{x}^{( 1,2)}} & \overline{\det \mathsf{C}_{x}^{( 1,3)}}  \\
            -\overline{\det \mathsf{C}_{x}^{( 2,1)}} & \overline{\det \mathsf{C}_{x}^{( 2,2)}}  & -\overline{\det \mathsf{C}_{x}^{( 2,3)}} \\
            \overline{\det \mathsf{C}_{x}^{( 3,1)}}  & -\overline{\det \mathsf{C}_{x}^{( 3,2)}} & \overline{\det \mathsf{C}_{x}^{( 3,3)}}
        \end{array}\right] ,
\end{equation*}
where $e^{i\Delta _{x}} =\det \mathsf{C}_{x}$ and $\mathsf{C}_{x}^{( i,j)}$ denote the $2\times 2$ submatrix obtained by excluding the $i$-th row and $j$-th column of $\mathsf{C}_{x}$. Using the above expression, we can simplify each component of the transfer matrix as:
\centerline{$\displaystyle A_{x} (\lambda )=\frac{a_{x}^{( 1,1)} e^{i\lambda } -e^{i\Delta _{x}}\overline{a_{x}^{( 3,3)}}}{e^{i\lambda } -a_{x}^{( 2,2)}} ,$}
\centerline{$\displaystyle B_{x} (\lambda )=\frac{a_{x}^{( 1,3)} e^{i\lambda } +e^{i\Delta _{x}}\overline{a_{x}^{( 3,1)}}}{e^{i\lambda } -a_{x}^{( 2,2)}} ,$}
\centerline{$\displaystyle C_{x} (\lambda )=\frac{a_{x}^{( 3,1)} e^{i\lambda } +e^{i\Delta _{x}}\overline{a_{x}^{( 1,3)}}}{e^{i\lambda } -a_{x}^{( 2,2)}} ,$}
\centerline{$\displaystyle D_{x} (\lambda )=\frac{a_{x}^{( 3,3)} e^{i\lambda } -e^{i\Delta _{x}}\overline{a_{x}^{( 1,1)}}}{e^{i\lambda } -a_{x}^{( 2,2)}},$}
and we can further calculate
\begin{align*}
    A_{x} (\lambda )D_{x} & (\lambda )-B_{x} (\lambda )C_{x} (\lambda )                                                                                                                                                               \\
                          & =\frac{\det \mathsf{C}_{x}^{( 2,2)} e^{i\lambda } -e^{i\Delta _{x}}}{e^{i\lambda } -a_{x}^{( 2,2)}}=-e^{i( \lambda +\Delta _{x})}\frac{e^{-i\lambda } -\overline{a_{x}^{( 2,2)}}}{e^{i\lambda } -a_{x}^{( 2,2)}} .
\end{align*}
Hence, we can simplify the expression of the transfer matrix as
\begin{widetext}
    \begin{equation*}
        T_{x}( \lambda ) =\frac{1}{a_{x}^{( 1,1)} e^{i\lambda } -e^{i\Delta _{x}}\overline{a_{x}^{( 3,3)}}}\begin{bmatrix}
            e^{i\lambda }\left( e^{i\lambda } -a_{x}^{( 2,2)}\right)                & -a_{x}^{( 1,3)} e^{i\lambda } -e^{i\Delta _{x}}\overline{a_{x}^{( 3,1)}} \\
            a_{x}^{( 3,1)} e^{i\lambda } +e^{i\Delta _{x}}\overline{a_{x}^{( 1,3)}} & -e^{i\Delta _{x}}\left( e^{-i\lambda } -\overline{a_{x}^{( 2,2)}}\right)
        \end{bmatrix} .
    \end{equation*}
\end{widetext}
Nextly, we assume $A_{x} (\lambda )= 0$ where we cannot construct the transfer matrix. First we state the following lemma.
\begin{lemma}
    For all $\lambda \in ( 0,2\pi ]$, we have
    \begin{equation*}
        A_{x}( \lambda ) =0\iff D_{x}( \lambda ) =0.
    \end{equation*}
    \begin{proof}
        We assume that $A_{x}( \lambda ) =0$, which means
        \begin{equation*}
            e^{i( \lambda -\Delta _{x})} =\frac{\overline{a_{x}^{( 3,3)}}}{a_{x}^{( 1,1)}}\text{ and }\left| a_{x}^{( 3,3)}\right| =\left| a_{x}^{( 1,1)}\right| .
        \end{equation*}
        Then, we can easily calculate that
        \begin{equation*}
            D_{x}( \lambda ) =\frac{\left| a_{x}^{( 3,3)}\right| ^{2} -\left| a_{x}^{( 1,1)}\right| ^{2}}{\overline{a_{x}^{( 3,3)}} -a_{x}^{( 1,1)} a_{x}^{( 2,2)} e^{-i\Delta _{x}}} =0.
        \end{equation*}
        Similarly $D_{x}( \lambda ) =0$ implies
        \begin{equation*}
            e^{i( \lambda -\Delta _{x})} =\frac{\overline{a_{x}^{( 1,1)}}}{a_{x}^{( 3,3)}}\text{ and }\left| a_{x}^{( 1,1)}\right| =\left| a_{x}^{( 3,3)}\right|
        \end{equation*}
        and $A_{x}( \lambda ) =0$.

    \end{proof}
\end{lemma}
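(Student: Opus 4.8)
The plan is to work entirely with the closed-form expressions for $A_x(\lambda)$ and $D_x(\lambda)$ that were just obtained from the unitarity of $C_x$, namely
\begin{equation*}
A_x(\lambda) = \frac{a_x^{(1,1)} e^{i\lambda} - e^{i\Delta_x}\overline{a_x^{(3,3)}}}{e^{i\lambda} - a_x^{(2,2)}}, \qquad D_x(\lambda) = \frac{a_x^{(3,3)} e^{i\lambda} - e^{i\Delta_x}\overline{a_x^{(1,1)}}}{e^{i\lambda} - a_x^{(2,2)}}.
\end{equation*}
Since the standing assumption $|a_x^{(2,2)}| \neq 1$ guarantees $e^{i\lambda} - a_x^{(2,2)} \neq 0$ for every $\lambda$, each of $A_x$ and $D_x$ vanishes exactly when its numerator does. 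Writing $N_A := a_x^{(1,1)} e^{i\lambda} - e^{i\Delta_x}\overline{a_x^{(3,3)}}$ and $N_D := a_x^{(3,3)} e^{i\lambda} - e^{i\Delta_x}\overline{a_x^{(1,1)}}$, the lemma reduces to the claim $N_A = 0 \iff N_D = 0$.

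The key step I would isolate is the algebraic identity $N_D = -e^{i(\lambda + \Delta_x)}\,\overline{N_A}$, checked directly by conjugating $N_A$ to get $\overline{N_A} = \overline{a_x^{(1,1)}} e^{-i\lambda} - e^{-i\Delta_x} a_x^{(3,3)}$ and then multiplying through by $-e^{i(\lambda+\Delta_x)}$, which returns precisely $N_D$. Because the prefactor $-e^{i(\lambda+\Delta_x)}$ is nonzero, this identity yields the equivalence in both directions simultaneously, with no need to treat $\Rightarrow$ and $\Leftarrow$ separately.

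An alternative, more hands-on route would be to assume $A_x(\lambda)=0$, read off $e^{i(\lambda-\Delta_x)} = \overline{a_x^{(3,3)}}/a_x^{(1,1)}$ together with the modulus consequence $|a_x^{(1,1)}| = |a_x^{(3,3)}|$, then substitute this value of $e^{i\lambda}$ into $N_D$ and factor out $e^{i\Delta_x}/a_x^{(1,1)}$ to obtain $N_D \propto |a_x^{(3,3)}|^2 - |a_x^{(1,1)}|^2 = 0$; the reverse implication follows by the symmetric computation interchanging the roles of $a_x^{(1,1)}$ and $a_x^{(3,3)}$.

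There is no serious obstacle here: the statement is essentially a manifestation of the conjugate symmetry that unitarity imposes on the corner entries of $C_x$, and both proofs are short. The only point requiring care is confirming that the shared denominator $e^{i\lambda}-a_x^{(2,2)}$ is genuinely nonzero, which is exactly what the hypothesis $|a_x^{(2,2)}|\neq 1$ supplies, so that the equivalence of the full expressions $A_x$ and $D_x$ may be legitimately reduced to the equivalence of their numerators.
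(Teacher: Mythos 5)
Your proposal is correct, and your primary argument takes a genuinely different (and slightly cleaner) route than the paper. The paper argues exactly as in your ``alternative, more hands-on'' paragraph: it assumes $A_x(\lambda)=0$, reads off $e^{i(\lambda-\Delta_x)}=\overline{a_x^{(3,3)}}/a_x^{(1,1)}$ and hence $|a_x^{(1,1)}|=|a_x^{(3,3)}|$, substitutes back into $D_x$ to find a numerator proportional to $|a_x^{(3,3)}|^2-|a_x^{(1,1)}|^2=0$, and then repeats the computation symmetrically for the converse. Your main route instead exhibits the single identity $N_D=-e^{i(\lambda+\Delta_x)}\,\overline{N_A}$ between the two numerators, which I have checked and which is correct; since the prefactor has modulus one, both implications follow at once. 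This buys two things: it avoids dividing by $a_x^{(1,1)}$, so it transparently covers the degenerate case $a_x^{(1,1)}=0$ (where the paper's displayed identity $e^{i(\lambda-\Delta_x)}=\overline{a_x^{(3,3)}}/a_x^{(1,1)}$ is not literally meaningful, although nothing actually breaks there because $N_A=0$ then forces $a_x^{(3,3)}=0$ and hence $N_D=0$); and it makes explicit that the lemma is just the ``phase-refined'' version of the relation $|A_x(\lambda)|=|D_x(\lambda)|$ that the paper separately establishes to prove $|\det T_x(\lambda)|=1$ in \cref{prop:det}, so one identity serves both results. Your observation that the common denominator $e^{i\lambda}-a_x^{(2,2)}$ is nonzero under the standing assumption $|a_x^{(2,2)}|\neq 1$ is the right justification for reducing to numerators, and the paper relies on the same fact implicitly.
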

Assume $A_{x}( \lambda ) =0.$ By our previous lemma, this implies $D_{x}( \lambda ) =0.$ Given this, we have $e^{i\lambda } =e^{i\Delta _{x}}\frac{\overline{a_{x}^{( 3,3)}}}{a_{x}^{( 1,1)}} =e^{i\Delta _{x}}\frac{\overline{a_{x}^{( 1,1)}}}{a_{x}^{( 3,3)}}$ and

\begin{align*}
    \tilde{\Psi }_{1} (x)   & =e^{-i\lambda } B_{x}( \lambda )\tilde{\Psi }_{2} (x),       \\
    \tilde{\Psi }_{1} (x+1) & =e^{i\lambda } C_{x}^{-1}( \lambda )\tilde{\Psi }_{2} (x+1),
\end{align*}
where
\begin{gather*}
    B_{x}( \lambda ) =\frac{a_{x}^{( 1,3)}\overline{a_{x}^{( 3,3)}} +a_{x}^{( 1,1)}\overline{a_{x}^{( 3,1)}}}{\overline{a_{x}^{( 3,3)}} -e^{-i\Delta _{x}} a_{x}^{( 1,1)} a_{x}^{( 2,2)}} =e^{i\Delta _{x}}\frac{\overline{a_{x}^{( 3,2)}}}{a_{x}^{( 2,1)}} ,\ \\
    C_{x}( \lambda ) =\frac{a_{x}^{( 3,1)}\overline{a_{x}^{( 3,3)}} +a_{x}^{( 1,1)}\overline{a_{x}^{( 1,3)}}}{\overline{a_{x}^{( 3,3)}} -a_{x}^{( 1,1)} a_{x}^{( 2,2)} e^{-i\Delta _{x}}} =e^{i\Delta _{x}}\frac{\overline{a_{x}^{( 2,3)}}}{a_{x}^{( 1,2)}} .
\end{gather*}
Therefore, we can further calculate the equations:
\begin{align*}
    \overline{a_{x}^{( 1,1)}} a_{x}^{( 2,1)}\tilde{\Psi }_{1} (x)  & =a_{x}^{( 3,3)}\overline{a_{x}^{( 3,2)}}\tilde{\Psi }_{2} (x),    \\
    a_{x}^{( 3,3)}\overline{a_{x}^{( 2,3)}}\tilde{\Psi }_{1} (x+1) & =\overline{a_{x}^{( 1,1)}} a_{x}^{( 1,2)}\tilde{\Psi }_{2} (x+1),
\end{align*}
and the statement is proved.

\section{Proof of Main results\label{app:proof1}}
This appendix is dedicated to the proof of the main results presented in this paper.

\cref{main theorem} and its corollaries, which establish the necessary and sufficient conditions for the eigenvalue problem, and they can be directly proved by following the discussion in \cref{subsec:main-theorem} to consider the square summability of $\tilde{\Psi}$ in \cref{prop:condition}. \cref{prop:det} and \cref{prop:necessary} are the core properties of the transfer matrix, which are essential for the proof of the main theorem. The proof of these propositions is provided in \cref{app:det} and \cref{app:necessary}, respectively.

\subsection{Proof of \cref{prop:det}\label{app:det}}
\begin{proof}
    By direct calculation, we have
    \begin{align*}
         & \left| a_{x}^{( 1,1)} e^{i\lambda } -e^{i\Delta _{x}}\overline{a_{x}^{( 3,3)}}\right| ^{2} =\left| a_{x}^{( 3,3)} e^{i\lambda } -e^{i\Delta _{x}}\overline{a_{x}^{( 1,1)}}\right| ^{2} \\
         & =\left| a_{x}^{( 1,1)}\right| ^{2} +\left| a_{x}^{( 3,3)}\right| ^{2} -2\Re \left( a_{x}^{( 1,1)} a_{x}^{( 3,3)} e^{i( \lambda -\Delta _{x})}\right).
    \end{align*}
    Therefore,
    \begin{equation*}
        \left| \frac{A_x(\lambda )}{D_x(\lambda )}\right| ^{2} =\frac{\left| a_{x}^{( 1,1)} e^{i\lambda } -e^{i\Delta _{x}}\overline{a_{x}^{( 3,3)}}\right| ^{2}}{\left| a_{x}^{( 3,3)} e^{i\lambda } -e^{i\Delta _{x}}\overline{a_{x}^{( 1,1)}}\right| ^{2}} =1
    \end{equation*}which implies
    \begin{equation*}
        | \det T_x(\lambda )| =\left| \frac{A_x(\lambda )}{D_x(\lambda )}\right| =1.
    \end{equation*}
\end{proof}
\subsection{Proof of \cref{prop:necessary}\label{app:necessary}}
\begin{proof}
    To begin, we establish that
    \begin{equation}\label{eq:b1}
        \mathrm{tr}(T_{\pm \infty}) = \det(T_{\pm \infty}) \overline{\mathrm{tr}(T_{\pm \infty})}.
    \end{equation}

    The trace and determinant of the transfer matrix can be calculated as:
    \begin{align*}
         & \mathrm{tr}(T_{\pm \infty}) = \frac{e^{i \lambda} (e^{i \lambda} - a_{x}^{(2,2)}) - e^{i \Delta_x} (e^{-i \lambda} - \overline{a_{x}^{(2,2)}})}{a_{x}^{(1,1)} e^{i \lambda} - e^{i \Delta_x} \overline{a_{x}^{(3,3)}}},
        \\
         & \det(T_{x}(\lambda)) = \frac{D_x(\lambda)}{A_x(\lambda)}.
    \end{align*}
    The trace can be rewritten using \( A_x(\lambda) \) and \( D_x(\lambda) \) as:
    \begin{equation*}
        \mathrm{tr}(T_{\pm \infty}) = \frac{e^{i \lambda}}{A_x(\lambda)} + \frac{e^{-i \lambda}}{\overline{D_x(\lambda)}}.
    \end{equation*}

    Since \( |A_x(\lambda)| = |D_x(\lambda)| \) by a previous proposition, it follows directly that \( \mathrm{tr}(T_{\pm \infty}) = \det(T_{\pm \infty}) \overline{\mathrm{tr}(T_{\pm \infty})} \) holds.

    Now, we proceed to the main statement. To find the necessary and sufficient condition for \( \left|\zeta_{\pm \infty}^+\right| \neq \left|\zeta_{\pm \infty}^-\right| \), we consider the case where
    \begin{equation*}
        \left|\zeta_{\pm \infty}^+\right| = \left|\zeta_{\pm \infty}^-\right|,
    \end{equation*}
    which is equivalent to
    \begin{equation}\label{eq:b2}
        \Re\left(\overline{\mathrm{tr}(T_{\pm \infty})} \sqrt{\mathrm{tr}(T_{\pm \infty})^2 - 4 \det(T_{\pm \infty})}\right) = 0.
    \end{equation}

    Assuming \( \mathrm{tr}(T_{\pm \infty}) \neq 0 \), by \cref{eq:b1}, we have
    \begin{equation*}
        \det(T_{\pm \infty}) = \frac{\mathrm{tr}(T_{\pm \infty})}{\overline{\mathrm{tr}(T_{\pm \infty})}}.
    \end{equation*}
    Substituting this into \cref{eq:b2} gives
    \begin{align}
        \overline{\mathrm{tr}(T_{\pm \infty})}
         & \sqrt{\mathrm{tr}(T_{\pm \infty})^2 - 4 \det(T_{\pm \infty})}\nonumber
        \\
         & =\sqrt{|\mathrm{tr}(T_{\pm \infty})|^2 (|\mathrm{tr}(T_{\pm \infty})|^2 - 4)}.
    \end{align}

    Thus, the condition \( |\mathrm{tr}(T_{\pm \infty})|^2 - 4 \leq 0 \) is necessary and sufficient for \cref{eq:b2} to hold, meaning \( \left|\zeta_{\pm \infty}^+\right| = \left|\zeta_{\pm \infty}^-\right| \). If \( \mathrm{tr}(T_{\pm \infty}) = 0 \), it is clear that \( \left|\zeta_{\pm \infty}^+\right| = \left|\zeta_{\pm \infty}^-\right| \) holds as well, with \( |\mathrm{tr}(T_{\pm \infty})|^2 - 4 = -4 \leq 0 \).

    By concluding the above discussions, we have proved \( \left|\zeta_{\pm \infty}^+\right| \neq \left|\zeta_{\pm \infty}^-\right| \) if and only if \( |\mathrm{tr}(T_{\pm \infty})|^2 - 4 > 0 \).
\end{proof}

\end{document}